\newcommand{\noun}[1]{\textsc{#1}}
\providecommand{\tabularnewline}{\\}
\newcommand{\lyxaddress}[1]{
\par {\raggedright #1
\vspace{1.4em}
\noindent\par}
}
\theoremstyle{plain}\newtheorem{prop}{Proposition}
\numberwithin{equation}{section}
\date{}
\renewcommand{\[}{\begin{equation}}
\renewcommand{\]}{\end{equation}}
\begin{document}

\title{Simultaneous model-based clustering and\\
visualization in the Fisher discriminative subspace}

\author{Charles \noun{Bouveyron$^{1}$ \&} Camille \noun{Brunet$^{2}$}}

\maketitle

\lyxaddress{\begin{center}
$^{1}$ Laboratoire SAMM, EA 4543, Université Paris 1 Panthéon-Sorbonne\\
90 rue de Tolbiac, 75013 Paris, France\\
Email: \url{charles.bouveyron@univ-paris1.fr}\medskip\\
$^{2}$ IBISC, TADIB, FRE CNRS 3190, Université d'Evry Val d'Essonne\\
40 rue de Pelvoux, CE 1455, 91020 Evry Courcouronnes, France\\
Email: \url{camille.brunet@ibisc.univ-evry.fr}
\par\end{center}}
\begin{abstract}
Clustering in high-dimensional spaces is nowadays a recurrent problem
in many scientific domains but remains a difficult task from both
the clustering accuracy and the result understanding points of view.
This paper presents a discriminative latent mixture (DLM) model which
fits the data in a latent orthonormal discriminative subspace with
an intrinsic dimension lower than the dimension of the original space.
By constraining model parameters within and between groups, a family
of 12 parsimonious DLM models is exhibited which allows to fit onto
various situations. An estimation algorithm, called the Fisher-EM
algorithm, is also proposed for estimating both the mixture parameters
and the discriminative subspace. Experiments on simulated and real
datasets show that the proposed approach performs better than existing
clustering methods while providing a useful representation of the
clustered data. The method is as well applied to the clustering of
mass spectrometry data.\medskip\\
\textbf{Keywords:} high-dimensional clustering, model-based clustering,
discriminative subspace, Fisher criterion, visualization, parsimonious
models.
\end{abstract}

\section{Introduction}

In many scientific domains, the measured observations are nowadays
high-dimensional and clustering such data remains a challenging problem.
Indeed, the most popular clustering methods, which are based on the
mixture model, show a disappointing behavior in high-dimensional spaces.
They suffer from the well-known \textit{curse of dimensionality}~\cite{Bellman57}
which is mainly due to the fact that model-based clustering methods
are over-parametrized in high-dimensional spaces. Furthermore, in
several applications such as mass spectrometry or genomics, the number
of available observations is small compared to the number of variables
and such a situation increases the difficulty of the problem.

Hopefully, since the dimension of observed data is usually higher
than their intrinsic dimension, it is theoretically possible to reduce
the dimension of the original space without loosing any information.
Therefore, dimension reduction methods are traditionally used before
the clustering step. Feature extraction methods such as principal
component analysis (PCA) or feature selection methods are very popular.
However, these approaches of dimension reduction do not consider the
classification task and provide a sub-optimal data representation
for the clustering step. Indeed, dimension reduction methods imply
an information loss which could be discriminative. Only few approaches
combine dimension reduction with the classification aim but, unfortunately,
those approaches are all supervised methods. Fisher discriminant analysis
(FDA) (see Chap.~4 of~\cite{Hastie01}) is one of them in the supervised
classification framework. FDA is a powerful tool for finding the subspace
which best discriminates the classes and reveals the structure of
the data. This subspace is spanned by the discriminative axes which
maximize the ratio of the between class and the within class variances.

To avoid dimension reduction, several subspace clustering methods
have been proposed in the past few years to model the data of each
group in low-dimensional subspaces. These methods turned out to be
very efficient in practice. However, since these methods model each
group in a specific subspace, they are not able to provide a global
visualization of the clustered data which could be helpful for the
practitioner. Indeed, the clustering results of high-dimensional data
are difficult to understand without a visualization of the clustered
data. In addition, in scientific fields such as genomics or economics,
original variables have an actual meaning and the practitioner could
be interested in interpreting the clustering results according to
the variable meaning.

In order to both overcome the curse of dimensionality and improve
the understanding of the clustering results, this work proposes a
method which adapts the traditional mixture model for modeling and
classifying data in a latent discriminative subspace. For this, the
proposed discriminative latent mixture (DLM) model combines the model-based
clustering goals with the discriminative criterion introduced by Fisher.
The estimation procedure proposed in this paper and named Fisher-EM
has three main objectives: firstly, it aims to improve clustering
performances with the use of a discriminative subspace, secondly,
it avoids estimation problems linked to high dimensions through model
parsimony and, finally, it provides a low-dimensional discriminative
representation of the clustered data.

The reminder of this manuscript has the following organization. Section~\ref{sec:Related-works}
reviews the problem of high-dimensional data clustering and existing
solutions. Section~\ref{sec:Model-DLM} introduces the discriminative
latent mixture model and its submodels. The link with existing approaches
is also discussed in Section~\ref{sec:Model-DLM}. Section~\ref{sec:Parameter-Estimation}
presents an EM-based procedure, called Fisher-EM, for estimating the
parameters of the DLM model. Initialization, model selection and convergence
issues are also considered in Section~\ref{sec:Parameter-Estimation}.
In particular, the convergence of the Fisher-EM algorithm has been
proved in this work only for one of the DLM models and the convergence
for the other models should be investigated. In Section~\ref{sec:Experimental-results},
the Fisher-EM algorithm is compared to existing clustering methods
on simulated and real datasets. Section~\ref{sec:Application-to-mass-spectrometry}
presents the application of the Fisher-EM algorithm to a real-world
clustering problem in mass-spectrometry imaging. Some concluding remarks
and ideas for further works are finally given in Section~\ref{sec:Conclusion}.

\section{Related works\label{sec:Related-works}}

Clustering is a traditional statistical problem which aims to divide
a set of observations $\{y_{1},\dots,y_{n}\}$ described by $p$ variables
into $K$ homogeneous groups. The problem of clustering has been widely
studied for years and the reader could refer to~\cite{Raftery02,Jain99}
for reviews on the clustering problem. However, the interest in clustering
is still increasing since more and more scientific fields require
to cluster high-dimensional data. Moreover, such a task remains very
difficult since clustering methods suffer from the well-known \textit{curse
of dimensionality}~\cite{Bellman57}. Conversely, the \textit{empty
space phenomenon}~\cite{Scott83}, which refers to the fact that
high-dimensional data do not fit the whole observation space but live
in low-dimensional subspaces, gives hope to efficiently classify high-dimensional
data. This section firstly reviews the framework of model-based clustering
before exposing the existing approaches to deal with the problem of
high dimension in clustering.

\subsection{Model-based clustering and high-dimensional data}

\label{s:modclus}

Model-based clustering, which has been widely studied by~\cite{Raftery02,McLachlan00}
in particular, aims to partition observed data into several groups
which are modeled separately. The overall population is considered
as a mixture of these groups and most of time they are modeled by
a Gaussian structure. By considering a dataset of $n$ observations
$\{y_{1},\dots,y_{n}\}$ which is divided into $K$ homogeneous groups
and by assuming that the observations $\{y_{1},...,y_{n}\}$ are independent
realizations of a random vector $Y\in\mathbb{R}^{p}$ , the mixture
model density is then:

\[
f(y)=\sum_{k=1}^{K}\pi_{k}f(y;\theta_{k}),\]
where $f(.;\theta_{k})$ is often the multivariate Gaussian density
$\phi(.;\mu_{k},\Sigma_{k})$ parametrized by a mean vector $\mu_{k}$
and a covariance matrix $\Sigma_{k}$ for the $k$th component. Unfortunately,
model-based clustering methods show a disappointing behavior when
the number of observations is small compared to the number of parameters
to estimate. Indeed, in the case of the full Gaussian mixture model,
the number of parameters to estimate is a function of the square of
the dimension $p$ and the estimation of this potentially large number
of parameters is consequently difficult with a small dataset. In particular,
when the number of observations $n$ is of the same order than the
number of dimensions $p$, most of the model-based clustering methods
have to face numerical problems due to the ill-conditioning of the
covariance matrices. Furthermore, it is not possible to use the full
Gaussian mixture model without restrictive assumptions for clustering
a dataset for which $n$ is smaller than $p$. Indeed, for clustering
such data, it would be necessary to invert $K$ covariance matrices
which would be singular. To overcome these problems, several strategies
have been proposed in the literature among which dimension reduction
and subspace clustering.

\subsection{Dimension reduction and clustering}

Earliest approaches proposed to overcome the problem of high dimension
in clustering by reducing the dimension before using a traditional
clustering method. Among the unsupervised tools of dimension reduction,
PCA~\cite{Jolliffe86} is the traditional and certainly the most
used technique for dimension reduction. It aims to project the data
on a lower dimensional subspace in which axes are built by maximizing
the variance of the projected data. Non-linear projection methods
can also be used. We refer to~\cite{Verleysen05} for a review on
these alternative dimension reduction techniques. In a similar spirit,
the generative topographic mapping (GTM)~\cite{Bishop98} finds a
non linear transformation of the data to map them on low-dimensional
grid. An other way to reduce the dimension is to select relevant variables
among the original variables. This problem has been recently considered
in the clustering context by~\cite{Boutemedjet09} and~\cite{Law04}.
In~\cite{Raftery06} and~\cite{Maugis09}, the problem of feature
selection for model-based clustering is recasted as a model selection
problem. However, such approaches remove variables and consequently
information which could have been discriminative for the clustering
task.

\subsection{Subspace clustering}

In the past few years, new approaches focused on the modeling of each
group in specific subspaces of low dimensionality. Subspace clustering
methods can be split into two categories: heuristic and probabilistic
methods. Heuristic methods use algorithms to search for subspaces
of high density within the original space. On the one hand, bottom-up
algorithms use histograms for selecting the variables which best discriminate
the groups. The Clique algorithm~\cite{Agrawal98} was one of the
first bottom-up algorithms and remains a reference in this family
of methods. On the other hand, top-down algorithms use iterative techniques
which start with all original variables and remove at each iteration
the dimensions without groups. A review on heuristic methods is available
in~\cite{Parsons04}. Conversely, probabilistic methods assume that
the data of each group live in a low-dimensional latent space and
usually model the data with a generative model. Earlier strategies~\cite{Rubin82}
are based on the factor analysis model which assumes that the latent
space is related with the observation space through a linear relationship.
This model was recently extended in~\cite{Baek2009,McLachlan2003}
and yields in particular the well known mixture of probabilistic principal
component analyzers~\cite{Tipping99}. Recent works~\cite{Bouveyron07,McNicholas2008}
proposed two families of parsimonious and regularized Gaussian models
which partially encompass previous approaches. All these techniques
turned out to be very efficient in practice to cluster high-dimensional
data. However, despite their qualities, these probabilistic methods
mainly consider the clustering aim and do not take enough into account
the visualization and understanding aspects.

\subsection{From Fisher's theory to discriminative clustering}

In the case of supervised classification, Fisher poses, in his precursor
work~\cite{Fisher36}, the problem of the discrimination of three
species of iris described by four measurements. The main goal of Fisher
was to find a linear subspace that separates the classes according
to a criterion (see~\cite{Duda} for more details). For this, Fisher
assumes that the dimensionality $p$ of the original space is greater
than the number $K$ of classes. Fisher discriminant analysis looks
for a linear transformation $U$ which projects the observations in
a discriminative and low dimensional subspace of dimension $d$ such
that the linear transformation $U$ of dimension ${p\times d}$ aims
to maximize a criterion which is large when the between-class covariance
matrix~($S_{B}$) is large and when the within-covariance matrix~($S_{W}$)
is small. Since the rank of $S_{B}$ is at most equal to $K-1$, the
dimension $d$ of the discriminative subspace is therefore at most
equal to $K-1$ as well. Four different criteria can be found in the
literature which satisfy such a constraint (see~\cite{Fukunaga90}
for a review). The criterion which is traditionally used is: \begin{equation}
J(U)=\mathrm{trace}((U^{t}S_{W}U)^{-1}U^{t}S_{B}U),\label{eq:criter_Fisher}\end{equation}
 where $S_{W}=\frac{1}{n}\sum_{k=1}^{K}\sum_{i\in C_{k}}(y_{i}-m_{k})(y_{i}-m_{k})^{t}$
and $S_{B}=\frac{1}{n}\sum_{k=1}^{K}n_{k}(m_{k}-\bar{y})(m_{k}-\bar{y})^{t}$
are respectively the within and the between covariance matrices, $m_{k}=\frac{1}{n_{k}}\sum_{i\in C_{k}}^{K}y_{i}$
is the empirical mean of the observed column vector $y_{i}$ in the
class $k$ and $\bar{y}=\frac{1}{n}\sum_{k=1}^{K}n_{k}m_{k}$ is the
mean column vector of the observations. The maximization of criterion~\eqref{eq:criter_Fisher}
is equivalent to the generalized eigenvalue problem~\cite{Krzanowski03}
$\left(S_{W}^{-1}S_{B}-\lambda I_{p}\right)U=0$ and the classical
solution of this problem is the eigenvectors associated to the $d$
largest eigenvalues of the matrix~$S_{W}^{-1}S_{B}$. From a practical
point of view, this optimization problem can also be solved using
generalized eigenvalue solvers~\cite{Golub91} in order to avoid
numerical problems when $S_{W}$ is ill-conditioned. Once the discriminative
axes determined, linear discriminant analysis (LDA) is usually applied
to classify the data into this subspace. The optimization of the Fisher
criterion supposes the non-singularity of the matrix $S_{W}$ but
it appears that the singularity of $S_{W}$ occurs frequently, particularly
in the case of very high-dimensional space or in the case of under-sampled
problems. In the literature, different solutions~\cite{Friedman89,Fukunaga90,Hastie95,Howland03,Jin01}
are proposed to deal with such a problem in a supervised classification
framework. In addition, since clustering approaches are sensitive
to high-dimensional and noisy data, recent works~\cite{Ding07,Torre06,Xu04,Ye07}
focused on combining low dimensional discriminative subspace with
one of the most used clustering algorithm: k-means. However, these
approaches do not really compute the discriminant subspace and are
not interested in the visualization and the understanding of the data.

\section{Model-based clustering in a discriminative subspace\label{sec:Model-DLM}}

This section introduces a mixture model, called the discriminative
latent mixture model, which aims to find both a parsimonious and discriminative
fit for the data in order to generate a clustering and a visualization
of the data. The modeling proposed in this section is mainly based
on two key ideas: firstly, actual data are assumed to live in a latent
subspace with an intrinsic dimension lower than the dimension of the
observed data and, secondly, a subspace of $K-1$ dimensions is theoretically
sufficient to discriminate $K$ groups.

\subsection{The discriminative latent mixture model}

Let $\{y_{1},\dots,y_{n}\}\in\mathbb{R}^{p}$ denote a dataset of
$n$ observations that one wants to cluster into $K$ homogeneous
groups, \emph{i.e.} adjoin to each observation $y_{j}$ a value $z_{j}\in\{1,\dots,K\}$
where $z_{i}=k$ indicates that the observation $y_{i}$ belongs to
the $k^{th}$ group. On the one hand, let us assume that $\{y_{1},\dots,y_{n}\}$
are independent observed realizations of a random vector $Y\in\mathbb{R}^{p}$
and that $\{z_{1},\dots,z_{n}\}$ are also independent realizations
of a random vector $Z\in\{1,\dots,K\}$. On the other hand, let $\mathbb{E}\subset\mathbb{R}^{p}$
denote a latent space assumed to be the most discriminative subspace
of dimension $d\leq K-1$ such that $\mathbf{0}\in\mathbb{E}$ and
where $d$ is strictly lower than the dimension $p$ of the observed
space. Moreover, let $\{x_{1},\dots,x_{n}\}\in\mathbb{E}$ denote
the actual data, described in the latent space~$\mathbb{E}$ of dimension
$d$, which are in addition presumed to be independent unobserved
realizations of a random vector $X\in\mathbb{E}$. Finally, for each
group, the observed variable $Y\in\mathbb{R}^{p}$ and the latent
variable $X\in\mathbb{E}$ are assumed to be linked through a linear
transformation: \begin{equation}
Y=UX+\varepsilon,\label{e:model}\end{equation}
 where $d<p$, $U$ is the $p\times d$ orthogonal matrix common to
the $K$ groups, such as $U^{t}U=I_{d}$, and $\varepsilon\in\mathbb{R}^{p}$,
conditionally to $Z$, is a centered Gaussian noise term with covariance
matrix $\Psi_{k}$, for $k=1,...,K$: \[
\varepsilon_{|Z=k}\sim\mathcal{N}(\mathbf{0},\Psi_{k}).\]
Following the classical framework of model-based clustering, each
group is in addition assumed to be distributed according to a Gaussian
density function within the latent space $\mathbb{E}$. Hence, the
random vector $X\in\mathbb{E}$ has the following conditional density
function: \[
X|Z=k\sim\mathcal{N}(\mu_{k},\Sigma_{k}),\]
 where $\mu_{k}\in\mathbb{R}^{d}$ and $\Sigma_{k}\in\mathbb{R}^{d\times d}$
are respectively the mean and the covariance matrix of the $k$th
group. Conditionally to $X$ and $Z$, the random vector $Y\in\mathbb{R}^{d}$
has the following conditional distribution: \[
Y|X,Z=k\sim\mathcal{N}(UX,\Psi_{k}),\]
 and its marginal distribution is therefore a mixture of Gaussians:
\[
f(y)=\sum_{k=1}^{K}\pi_{k}\phi(y;m_{k},S_{k}),\]
 where $\pi_{k}$ is the mixture proportion of the $k$th group and:
\begin{align*}
m_{k} & =U\mu_{k},\\
S_{k} & =U\Sigma_{k}U^{t}+\Psi_{k},\end{align*}
 are respectively the mean and the covariance matrix of the $k$th
group in the observation space. Let us also define $W=[U,V]$ a $p\times p$
matrix which satisfies $W^{t}W=WW^{t}=I_{p}$ and for which the $p\times(p-d)$
matrix $V$, is the orthonormal complement of $U$ defined above.
We finally assume that the noise covariance matrix $\Psi_{k}$ satisfies
the conditions $V\Psi_{k}V^{t}=\beta_{k}I_{d-p}$ and $U\Psi_{k}U^{t}=0_{d}$,
such that $\Delta_{k}=W^{t}S_{k}W$ has the following form: $$\Delta_k=\left(  \begin{array}{c@{}c} \begin{array}{|ccc|}\hline ~~ & ~~ & ~~ \\  & \Sigma_k &  \\  & & \\ \hline \end{array} & \mathbf{0}\\ \mathbf{0} &  \begin{array}{|cccc|}\hline \beta_k & & & 0\\ & \ddots & &\\  & & \ddots &\\ 0 & & & \beta_k\\ \hline \end{array} \end{array}\right)  \begin{array}{cc} \left.\begin{array}{c} \\\\\\\end{array}\right\}  & d \leq K-1\vspace{1.5ex}\\ \left.\begin{array}{c} \\\\\\\\\end{array}\right\}  & (p-d)\end{array}$$%
\begin{figure}[t]
\centering{}\psfrag{X}{\hspace{-1ex}$X$}
\psfrag{Y}{\hspace{-1ex}$Y$}
\psfrag{Z}{\hspace{-1ex}$Z$}
\psfrag{pi}{\hspace{-15ex}$\pi=\{\pi_1,...,\pi_K\}$}
\psfrag{mu}{\hspace{0ex}$\mu_k\in\mathbb{E}$}
\psfrag{sig}{\hspace{0ex}$\Sigma_k$}
\psfrag{V}{\hspace{0ex}$W=[U,V]$}
\psfrag{e}{\hspace{-1ex}$\varepsilon$}
\psfrag{psi}{\hspace{-1ex}$\Psi_k$}\includegraphics[width=0.5\columnwidth]{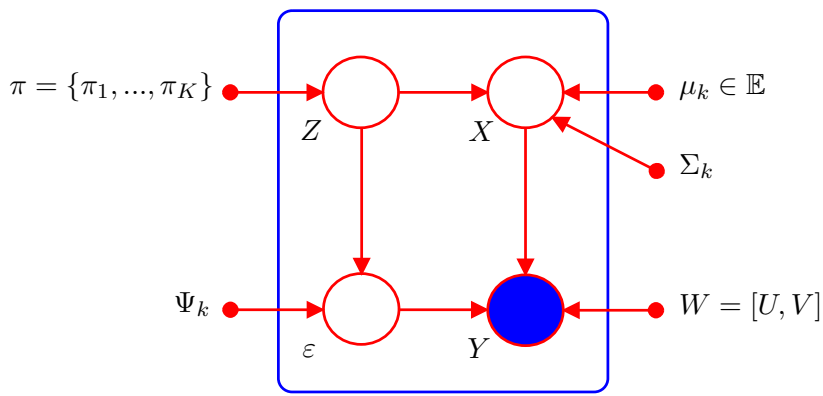}
\caption{\label{Fig:schema1} Graphical summary of the $\mathrm{DLM}_{[\Sigma_{k}\beta_{k}]}$
model}

\end{figure}
\hspace{-1ex}This model, called the discriminative latent mixture~(DLM)
model and referred to by $\mathrm{DLM}_{[\Sigma_{k}\beta_{k}]}$ in
the sequel, is summarized by Figure~\ref{Fig:schema1}. The $\mathrm{DLM}_{[\Sigma_{k}\beta_{k}]}$
model is therefore parametrized by the parameters $\pi_{k}$, $\mu_{k}$,
$U$, $\Sigma_{k}$ and $\beta_{k}$, for $k=1,...,K$ and $j=1,...,d$.
On the one hand, the mixture proportions $\pi_{1},...,\pi_{K}$ and
the means $\mu_{1},...,\mu_{K}$ parametrize in a classical way the
prior probability and the average latent position of each group respectively.
On the other hand, $U$ defines the latent subspace $\mathbb{E}$
by parametrizing its orientation according to the basis of the original
space. Finally, $\Sigma_{k}$ parametrize the variance of the $k$th
group within the latent subspace $\mathbb{E}$ whereas $\beta_{k}$
parametrizes the variance of this group outside~$\mathbb{E}$. With
these notations and from a practical point of view, one can say that
the variance of the actual data is therefore modeled by $\Sigma_{k}$
and the variance of the noise is modeled by $\beta_{k}$.

\subsection{The submodels of the $\mathrm{DLM}_{[\Sigma_{k}\beta_{k}]}$ model}

Starting with the $\mathrm{DLM}_{[\Sigma_{k}\beta_{k}]}$ model presented
in the previous paragraph, several submodels can be generated by applying
constraints on parameters of the matrix $\Delta_{k}$. For instance,
the covariance matrices $\Sigma_{1},\dots,\Sigma_{K}$ in the latent
space can be assumed to be common across groups and this submodel
will be referred to by $\mathrm{DLM}_{[\Sigma\beta_{k}]}$. Similarly,
in each group, $\Sigma_{k}$ can be assumed to be diagonal, \emph{i.e.}
$\Sigma_{k}=\mathrm{{diag}}(\alpha_{k1},\dots,\alpha_{kd})$. This
submodel will be referred to by $\mathrm{DLM}_{[\alpha_{kj}\beta_{k}]}$.
In the same manner, the $p-d$ last values of $\Delta_{k}$ can be
assumed to be common for the $k$ classes, \emph{i.e.} $\beta_{k}=\beta$,
$\forall k=1,...,K$, meaning that the variance outside the discriminant
subspace is common to all groups. This assumption can be viewed as
modeling the noise variance with a unique parameter which seems natural
for data obtained in a common acquisition process. Following the notation
system introduces above, this submodel will be referred to by $\mathrm{DLM}_{[\alpha_{kj}\beta]}$.
The variance within the latent subspace $\mathbb{E}$ can also be
assumed to be isotropic for each group and the associated submodel
is $\mathrm{DLM}_{[\alpha_{k}\beta_{k}]}$. In this case, the variance
of the data is assumed to be isotropic both within~$\mathbb{E}$
and outside~$\mathbb{E}$. Similarly, it is possible to constrain
the previous model to have the parameters $\beta_{k}$ common between
classes and this gives rise to the model~$\mathrm{DLM}_{[\alpha_{k}\beta]}$.
Finally, the variance within the subspace~$\mathbb{E}$ can be assumed
to be independent from the mixture component and this corresponds
to the models~$\mathrm{DLM}_{[\alpha_{j}\beta_{k}]}$, $\mathrm{DLM}_{[\alpha_{j}\beta]}$,
$\mathrm{DLM}_{[\alpha\beta_{k}]}$ and $\mathrm{DLM}_{[\alpha\beta]}$.
We therefore enumerate $12$ different DLM models and an overview
of them is proposed in Table~\ref{Tab:models}.%
\begin{table}
\begin{centering}
\doublespacing{\begin{tabular}{lcc}
Model  & Nb. of parameters  & $K=4$ and $p=100$\tabularnewline
\hline 
$\mathrm{DLM}_{[\Sigma_{k}\beta_{k}]}$ & $(K-1)+K(K-1)+(K-1)(p-K/2)+K^{2}(K-1)/2+K$ & 337\tabularnewline
$\mathrm{DLM}_{[\Sigma_{k}\beta]}$ & $(K-1)+K(K-1)+(K-1)(p-K/2)+K^{2}(K-1)/2+1$ & 334\tabularnewline
$\mathrm{DLM}_{[\Sigma\beta_{k}]}$ & $(K-1)+K(K-1)+(K-1)(p-K/2)+K(K-1)/2+K$ & 319\tabularnewline
$\mathrm{DLM}_{[\Sigma\beta]}$ & $(K-1)+K(K-1)+(K-1)(p-K/2)+K(K-1)/2+1$ & 316\tabularnewline
$\mathrm{DLM}_{[\alpha_{kj}\beta_{k}]}$ & $(K-1)+K(K-1)+(K-1)(p-K/2)+K^{2}$  & 325\tabularnewline
$\mathrm{DLM}_{[\alpha_{kj}\beta]}$ & $(K-1)+K(K-1)+(K-1)(p-K/2)+K(K-1)+1$  & 322\tabularnewline
$\mathrm{DLM}_{[\alpha_{k}\beta_{k}]}$ & $(K-1)+K(K-1)+(K-1)(p-K/2)+2K$  & 317\tabularnewline
$\mathrm{DLM}_{[\alpha_{k}\beta]}$ & $(K-1)+K(K-1)+(K-1)(p-K/2)+K+1$  & 314\tabularnewline
$\mathrm{DLM}_{[\alpha_{j}\beta_{k}]}$ & $(K-1)+K(K-1)+(K-1)(p-K/2)+(K-1)+K$  & 316\tabularnewline
$\mathrm{DLM}_{[\alpha_{j}\beta]}$ & $(K-1)+K(K-1)+(K-1)(p-K/2)+(K-1)+1$  & 313\tabularnewline
$\mathrm{DLM}_{[\alpha\beta_{k}]}$ & $(K-1)+K(K-1)+(K-1)(p-K/2)+K+1$  & 314\tabularnewline
$\mathrm{DLM}_{[\alpha\beta]}$ & $(K-1)+K(K-1)+(K-1)(p-K/2)+2$ & 311\tabularnewline
\hline 
Full-GMM  & $(K-1)+Kp+Kp(p+1)/2$  & 20603 \tabularnewline
Com-GMM  & $(K-1)+Kp+p(p+1)/2$  & 5453\tabularnewline
Mixt-PPCA  & $(K-1)+Kp+K(d(p-(d+1)/2)+d+1)+1$ & 1198 ($d=3$)\tabularnewline
Diag-GMM & $(K-1)+Kp+Kp$  & 803 \tabularnewline
Sphe-GMM & $(K-1)+Kp+K$  & 407 \tabularnewline
\hline
\end{tabular}}
\par\end{centering}

\caption{\label{Tab:models} Number of free parameters to estimate when $d=K-1$
for the DLM models and some classical models (see text for details).}

\end{table}
 The table also gives the maximum number of free parameters to estimate
(case of $d=K-1$) according to $K$ and $p$ for the 12 DLM models
and for some classical models. The Full-GMM model refers to the classical
Gaussian mixture model with full covariance matrices, the Com-GMM
model refers to the Gaussian mixture model for which the covariance
matrices are assumed to be equal to a common covariance matrix ($S_{k}=S$,
$\forall k$), Diag-GMM refers to the Gaussian mixture model for which
$S_{k}=\mathrm{diag}(s_{k1}^{2},...,s_{kp}^{2})$ with $s_{k}^{2}\in\mathbb{R}^{p}$
and Sphe-GMM refers to the Gaussian mixture model for which $S_{k}=s_{k}^{2}I_{p}$
with $s_{k}^{2}\in\mathbb{R}$. Finally, Mixt-PPCA denotes the subspace
clustering model proposed by Tipping and Bishop in~\cite{Tipping99}.
In addition to the number of free parameters to estimate, Table~\ref{Tab:models}
gives this number for specific values of $K$ and $p$ in the right
column. The number of free parameters to estimate given in the central
column can be decomposed in the number of parameters to estimate for
the proportions ($K-1$), for the means ($Kp$) and for the covariance
matrices (last terms). Among the classical models, the Full-GMM model
is a highly parametrized model and requires the estimation of 20603
parameters when $K=4$ and $p=100$. Conversely, the Diag-GMM and
Sphe-GMM model are very parsimonious models since they respectively
require the estimation of only 803 and 407 parameters when $K=4$
and $p=100$. The Com-GMM and Mixt-PPCA models appear to both have
an intermediate complexity. However, the Mixt-PPCA model is a less
constrained model compared to the Diag-GMM model and should be preferred
for clustering high-dimensional data. Finally, the DLM models turn
out to have a low complexity whereas their modeling capacity is comparable
to the one of the Mixt-PPCA model. In addition, the complexity of
the DLM models depends only on $K$ and $p$ whereas the Mixt-PPCA
model depends from an hyper-parameter $d$.

\subsection{Links with existing models}

At this point, some links can be established with models existing
in the clustering literature. The closest models have been proposed
in~\cite{Baek2009}, \cite{Bouveyron07} and~\cite{McNicholas2008}
and are all derived from the mixture of factor analyzer (MFA) model~\cite{McLachlan2003,Rubin82}.
First, in~\cite{Bouveyron07}, the authors proposed a family of 28
parsimonious and flexible Gaussian models ranging from a very general
model, referred to as $[a_{kj}b_{k}Q_{k}d_{k}]$, to very simple models.
Compared to the standard MFA model, these parsimonious models assume
that the noise variance is isotropic. In particular, this work can
be viewed as an extension of the mixture of principal component analyzer
(Mixt-PPCA) model~\cite{Tipping99}. Among this family of parsimonious
models, 14 models assume that the orientation of the group-specific
subspaces is common (common $Q_{k}$). The following year, McNicholas
and Murphy~\cite{McNicholas2008} proposed as well a family of 8
parsimonious Gaussian models by extending the MFA model by constraining
the loading and error variance matrices across groups. In this work,
the noise variance can be isotropic or not. Let us remark that the
two families of parsimonious Gaussian models share some models: for
instance, the model UUC of~\cite{McNicholas2008} corresponds to
the model~$[a_{kj}b_{k}Q_{k}d]$ of~\cite{Bouveyron07}. Among the
8 parsimonious models presented in~\cite{McNicholas2008}, 4 models
have the loading matrices constrained across the groups. More recently,
Beak \emph{et al.}~\cite{Baek2009} proposed as well a MFA model
with a common loading matrix. In this case, the noise variance is
not constrained. Despite their differences, all these parsimonious
Gaussian models share the assumption that the group subspaces have
a common orientation and are therefore close to the DLM models presented
in this work. However, these models with common loadings choose the
orientation such as the variance of the projected data is maximum
whereas the DLM models choose the latent subspace orientation such
as it best discriminates the groups. This specific feature of the
DLM models should therefore improve in most cases both the clustering
and the visualization of the results. In particular, the DLM models
should be able to better model situations where the axes carrying
the greatest variance are not parallel to the discriminative axes
than the other approaches (Figure~10.1 of~\cite{Fukunaga90} illustrates
such a situation).

\section{Parameter estimation: the Fisher-EM algorithm\label{sec:Parameter-Estimation}}

Since this work focuses on the clustering of unlabeled data, this
section introduces an estimation procedure which adapts the traditional
EM algorithm for estimating the parameters of DLM models presented
in the previous section. Due to the nature of the models described
above, the Fisher-EM algorithm alternates between three steps: 
\begin{itemize}
\item an E step in which posterior probabilities that observations belong
to the $K$ groups are computed, 
\item a F step which estimates the orientation matrix $U$ of the discriminative
latent space conditionally to the posterior probabilities, 
\item a M step in which parameters of the mixture model are estimated in
the latent subspace by maximizing the conditional expectation of the
complete likelihood. 
\end{itemize}
This estimation procedure relative to the DLM models is called hereafter
the Fisher-EM algorithm. We chose to name this estimation procedure
after Sir R.~A.~Fisher since the key idea of the F step comes from
his famous work on discrimination. The remainder of this section details
the simple form of this procedure. Let us however notice that the
Fisher-EM algorithm can be also used in combination with the stochastic~\cite{Celeux85}
and classification versions~\cite{Celeux92} of the EM algorithm.

\subsection{The E step}

This step aims to compute, at iteration $(q)$, the expectation of
the complete log-likelihood conditionally to the current value of
the parameter $\theta^{(q-1)}$, which, in practice, reduces to the
computation of $t_{ik}^{(q)}=E[z_{ik}|y_{i},\theta^{(q-1)}]$ where
$z_{ik}=1$ if $y_{i}$ comes from the $k$th component and $z_{ik}=0$
otherwise. Let us also recall that $t_{ik}^{(q)}$ is as well the
posterior probability that the observation $y_{i}$ belongs to the
$k^{th}$ component of the mixture. The following proposition provides
the explicit form of $t_{ik}^{(q)}$, for $i=1,...,n$, $k=1,...,K$,
in the case of the model $\mathrm{DLM}_{[\Sigma_{k}\beta_{k}]}$.
Demonstration of this result is detailed in Appendix~\ref{a:costfunc}.

\begin{prop} \label{prop:Estep}With the assumptions of the model
$\mathrm{DLM}_{[\Sigma_{k}\beta_{k}]}$, the posterior probabilities~$t_{ik}^{(q)}$,
$i=1,...,n$, $k=1,...,K$, can be expressed as : \[
t_{ik}^{(q)}=\frac{1}{\sum_{l=1}^{K}\exp\left(\frac{1}{2}(\Gamma_{k}^{(q-1)}(y)-\Gamma_{l}^{(q-1)}(y))\right)},\]
 with: \begin{equation}
\begin{split}\Gamma_{k}^{(q-1)}(y_{i})= & \quad||P(y_{i}-m_{k}^{(q-1)})||_{\mathcal{D}_{k}}^{2}+\frac{1}{\beta_{k}^{(q-1)}}||(y_{i}-m_{k}^{(q-1)})-P(y_{i}-m_{k}^{(q-1)})||^{2}\\
 & +\log\left(\left|\Sigma_{k}^{(q-1)}\right|\right)+(p-d)\log(\beta_{k}^{(q-1)})-2\log(\pi_{k}^{(q-1)})+\gamma,\end{split}
\label{Eq_Estep}\end{equation}
 where $||.||_{\mathcal{D}_{k}}^{2}$ is a norm on the latent space
$\mathbb{E}$ defined by $||y||_{\mathcal{D}_{k}}^{2}=y^{t}\mathcal{D}_{k}y$,
$\mathcal{D}_{k}=\tilde{W}\Delta_{k}^{-1}\tilde{W}^{t}$, $\tilde{W}$
is a $p\times p$ matrix containing the $d$ vectors of $U^{(q-1)}$
completed by zeros such as $\tilde{W}=[U^{(q-1)},0_{p-d}]$, $P$
is the projection operator on the latent space $\mathbb{E}$, \emph{i.e.}
$P(y)=U^{(q-1)}U^{(q-1)t}y$, and $\gamma=p\log(2\pi)$ is a constant
term. \end{prop}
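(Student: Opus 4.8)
The plan is to reduce the statement to a direct computation of the log-density of each mixture component in the observation space and then to rewrite the resulting expression in softmax form. Since the $\mathrm{DLM}_{[\Sigma_k\beta_k]}$ model has the Gaussian mixture marginal $f(y)=\sum_k\pi_k\phi(y;m_k,S_k)$ derived above, Bayes' rule gives $t_{ik}^{(q)}=\pi_k^{(q-1)}\phi(y_i;m_k^{(q-1)},S_k^{(q-1)})/\sum_{l=1}^K\pi_l^{(q-1)}\phi(y_i;m_l^{(q-1)},S_l^{(q-1)})$. Dividing numerator and denominator by $\pi_k^{(q-1)}\phi(y_i;m_k^{(q-1)},S_k^{(q-1)})$ and setting $\Gamma_k^{(q-1)}(y_i)=-2\log\!\big(\pi_k^{(q-1)}\phi(y_i;m_k^{(q-1)},S_k^{(q-1)})\big)$ yields at once $t_{ik}^{(q)}=1/\sum_{l}\exp\!\big(\tfrac12(\Gamma_k^{(q-1)}(y_i)-\Gamma_l^{(q-1)}(y_i))\big)$, so the whole content of the proposition is to check that this $\Gamma_k^{(q-1)}$ coincides with the right-hand side of \eqref{Eq_Estep}. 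From the explicit Gaussian density, $\Gamma_k^{(q-1)}(y_i)=(y_i-m_k)^tS_k^{-1}(y_i-m_k)+\log|S_k|+p\log(2\pi)-2\log\pi_k$ (dropping the $(q-1)$ superscripts), where the additive constant $p\log(2\pi)$ is the term $\gamma$ and $-2\log\pi_k$ is already one of the required terms.

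Next I would exploit the structural assumptions through the orthogonal matrix $W=[U,V]$. Because $W^tW=WW^t=I_p$ and $\Delta_k=W^tS_kW$, one has $S_k=W\Delta_kW^t$, hence $S_k^{-1}=W\Delta_k^{-1}W^t$ and $|S_k|=|\Delta_k|$. Using the block-diagonal form of $\Delta_k$ exhibited above, namely $\Delta_k=\mathrm{diag}(\Sigma_k,\beta_kI_{p-d})$, this gives $\log|S_k|=\log|\Sigma_k|+(p-d)\log\beta_k$, which accounts for the two logarithmic-determinant terms of \eqref{Eq_Estep}, while $\Delta_k^{-1}=\mathrm{diag}(\Sigma_k^{-1},\beta_k^{-1}I_{p-d})$.

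The only real computation is the decomposition of the Mahalanobis term. Splitting $W^t(y_i-m_k)$ into its $U$-component $U^t(y_i-m_k)$ and its $V$-component $V^t(y_i-m_k)$ and inserting $S_k^{-1}=W\Delta_k^{-1}W^t$, one gets $(y_i-m_k)^tS_k^{-1}(y_i-m_k)=(y_i-m_k)^tU\Sigma_k^{-1}U^t(y_i-m_k)+\beta_k^{-1}\|V^t(y_i-m_k)\|^2$. For the first summand I would note that $\mathcal{D}_k=\tilde W\Delta_k^{-1}\tilde W^t=U\Sigma_k^{-1}U^t$ and that, since $P=UU^t$ with $U^tU=I_d$, $\|P(y_i-m_k)\|_{\mathcal{D}_k}^2=(y_i-m_k)^tUU^t\,U\Sigma_k^{-1}U^t\,UU^t(y_i-m_k)=(y_i-m_k)^tU\Sigma_k^{-1}U^t(y_i-m_k)$, which is exactly that summand. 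For the second, the identity $I_p-UU^t=WW^t-UU^t=VV^t$ gives $(y_i-m_k)-P(y_i-m_k)=VV^t(y_i-m_k)$, and $V^tV=I_{p-d}$ then yields $\|(y_i-m_k)-P(y_i-m_k)\|^2=(y_i-m_k)^tVV^t(y_i-m_k)=\|V^t(y_i-m_k)\|^2$, matching $\beta_k^{-1}\|V^t(y_i-m_k)\|^2$. Collecting the four contributions reproduces $\Gamma_k^{(q-1)}(y_i)$ exactly as in \eqref{Eq_Estep}, and the softmax rewriting of the first paragraph finishes the argument.

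I do not anticipate a genuine obstacle: the proof is a careful bookkeeping exercise resting on the orthogonality relations $U^tU=I_d$, $V^tV=I_{p-d}$, $V^tU=0$ and $WW^t=I_p$. The point that needs attention is the handling of the degenerate quadratic form $\|\cdot\|_{\mathcal{D}_k}$, which is a true norm only on the range of $U$; it must be applied consistently to the projected vector $P(y_i-m_k)$ so that the cancellations through $U^tU=I_d$ go through, and one should keep in mind that it is $\tilde W=[U^{(q-1)},0_{p-d}]$ rather than $W$ that makes $\mathcal{D}_k$ collapse to $U\Sigma_k^{-1}U^t$.
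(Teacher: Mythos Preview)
Your proposal is correct and follows essentially the same route as the paper's proof: both start from Bayes' rule, set $\Gamma_k=-2\log(\pi_k\phi)$, use $S_k^{-1}=W\Delta_k^{-1}W^t$ and $|S_k|=|\Sigma_k|\beta_k^{p-d}$, and then split the Mahalanobis term along the $U$/$V$ directions. The only cosmetic difference is that the paper carries out the split via the padded matrices $\tilde W=[U,0]$ and $\bar W=[0,V]$ with $W=\tilde W+\bar W$, whereas you work directly with the blocks $U$ and $V$; your identity $I_p-UU^t=VV^t$ is exactly the paper's $P^\perp(y)=y-P(y)$.
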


Besides its computational interest, Proposition~\ref{prop:Estep}
provides as well a comprehensive interpretation of the cost function
$\Gamma_{k}$ which mainly governs the computation of $t_{ik}$. Indeed,
it appears that $\Gamma_{k}$ mainly depends on two distances: the
distance between the projections on the discriminant subspace $\mathbb{E}$
of the observation $y_{i}$ and the mean $m_{k}$ on the one hand,
and, the distance between the projections on the complementary subspace
$\mathbb{E}^{\perp}$ of $y_{i}$ and $m_{k}$ on the other hand.
Remark that the latter distance can be reformulated in order to avoid
the use of the projection on $\mathbb{E}^{\perp}$. Indeed, as Figure~\ref{Fig:projection}
illustrates, this distance can be re-expressed according projections
on $\mathbb{E}$. Therefore, the posterior probability $t_{ik}=P(z_{ik}=1|y_{i})$
will be close to $1$ if both the distances are small which seems
quite natural. Obviously, these distances are also balanced by the
variances in $\mathbb{E}$ and $\mathbb{E}^{\perp}$ and by the mixture
proportions. Furthermore, the fact that the E step does not require
the use of the projection on the complementary subspace $\mathbb{E}^{\perp}$
is, from a computational point of view, very important because it
will provide the stability of the algorithm and will allow its use
when $n\ll p$ (\emph{cf.} paragraph~\ref{par_n<<p}).

\begin{figure}[t]

\centering{}\psfrag{mm}{$m_{1}$} \psfrag{m}{$m_{2}$} \psfrag{M}{$\mu_{1}$}
\psfrag{x}{\vspace{-1ex}
$x$} \psfrag{y}{\hspace{0ex}$y$} \psfrag{E}{\hspace{-1ex}$\mathbb{E}$}
\psfrag{F}{$\mathbb{E}^{\perp}$} \includegraphics[width=0.6\columnwidth]{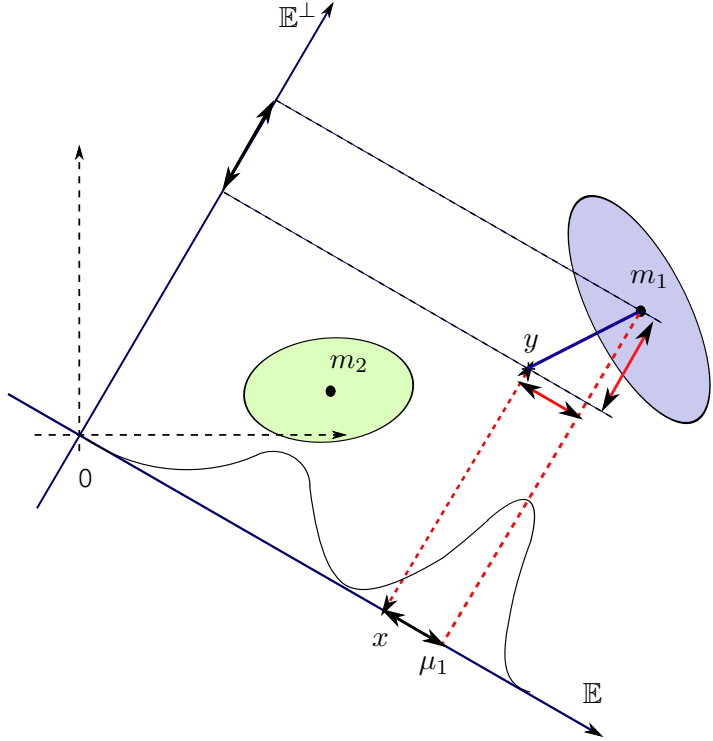}
\caption{\label{Fig:projection} Two groups and their $1$-dimensional discriminative
subspace.}

\end{figure}

\subsection{The F step}

This step aims to determinate, at iteration $(q)$, the discriminative
latent subspace of dimension $d\leq K-1$ in which the $K$ groups
are best separated. Naturally, the estimation of this latent subspace
has to be done conditionally to the current values of posterior probabilities
$t_{ik}^{(q)}$ which indicates the current soft partition of the
data. Estimating the discriminative latent subspace $\mathbb{E}^{(q)}$
reduces to the computation of $d$ discriminative axes. Following
the original idea of Fisher~\cite{Fisher36}, the $d$ axes which
best discriminate the $K$ groups are those which maximize the traditional
criterion $J(U)=tr((U^{t}S_{W}U)^{-1}U^{t}S_{B}U)$. However, the
traditional criterion $J(U)$ assume that the data are complete (supervised
classification framework). Unfortunately, the situation of interest
here is that of unsupervised classification and the matrices $S_{B}$
and $S_{W}$ have therefore to be defined conditionally to the current
soft partition. Furthermore, the DLM models assume that the discriminative
latent subspace must have an orthonormal basis and, sadly, the traditional
Fisher's approach provides non-orthogonal discriminative axes.

To overcome both problems, this paragraph proposes a procedure which
keeps the key idea of Fisher while providing orthonormal discriminative
axes conditionally to the current soft partition of the data. The
procedure follows the concept of the orthonormal discriminant vector
(ODV) method introduced by~\cite{Foley75} in the supervised case
and then extended by~\cite{Guo01,Hamamoto91,Liu92,Ye05}, which sequentially
selects the most discriminative features in maximizing the Fisher
criterion subject to the orthogonality of features. First, it is necessary
to introduce the soft between-covariance matrix $S_{B}^{(q)}$ and
the soft within-covariance matrix $S_{W}^{(q)}$. The soft between-covariance
matrix $S_{B}^{(q)}$ is defined conditionally to the posterior probabilities
$t_{ik}^{(q)}$, obtained in the E step, as follows: \[
S_{B}^{(q)}=\frac{1}{n}\sum_{k=1}^{K}n_{k}^{(q)}(\hat{m}_{k}^{(q)}-\bar{y})(\hat{m}_{k}^{(q)}-\bar{y})^{t},\]
 where $n_{k}^{(q)}=\sum_{i=1}^{n}t_{ik}^{(q)}$, $\hat{m}_{k}^{(q)}=\frac{1}{n}\sum_{i=1}^{n}t_{ik}^{(q)}y_{i}$
is the soft mean of the $k$th group at iteration $q$ and $\bar{y}=\frac{1}{n}\sum_{i=1}^{n}y_{i}$
is the empirical mean of the whole dataset. Since the relation $S=S_{W}^{(q)}+S_{B}^{(q)}$
holds in this context as well, it is preferable from a computational
point of view to use the covariance matrix $S=\frac{1}{n}\sum_{i=1}^{n}(y_{i}-\bar{y})(y_{i}-\bar{y})^{t}$
of the whole dataset in the maximization problem instead of $S_{W}^{(q)}$
since $S$ remains fixed over the iteration. The F step of the Fisher-EM
therefore aims to solve the following optimization problem: \[
\left\{ \begin{aligned}\max_{U}\quad & \quad\mathrm{trace}\left((U^{t}SU)^{-1}U^{t}S_{B}^{(q)}U\right),\\
\quad\text{wrt}\quad & \quad u_{j}^{t}u_{l}=0,\quad\forall j\neq l\in\{1,\dots,d\},\end{aligned}
\right.\]
where $u_{j}$ is the $j$th column vector of $U$. Following the
ODV procedure, the $d$ axes solution of this optimization problem
are iteratively constructed by, first, computing an orthogonal complementary
subspace to the current set of discriminative axes and, then, maximizing
the Fisher criterion in this orthogonal subspace by solving the associated
generalized eigenvalue problem. To initialize this iterative procedure,
the first vector of $U$ is therefore the eigenvector associated with
the largest eigenvalue of the matrix $S^{-1}S_{B}^{(q)}$. Then, assuming
that the $r-1$ first orthonormal discriminative axes $\{u_{1},\dots,u_{r-1}\}$,
which span the space $\mathcal{B}_{r-1}$, have been computed, the
$r^{th}$ discriminative axis has to lie in the subspace $\mathcal{B}_{r-1}^{\perp}$
orthogonal to the space $\mathcal{B}_{r-1}$. The Gram-Schmidt orthonormalization
procedure allows to find a basis $V^{r}=\{v_{r},v_{r+1},...,v_{d}\}$
for the orthogonal subspace $\mathcal{B}_{r-1}^{\perp}$ such that:
\[
v_{l}=\alpha_{l}(I_{\ell-1}-\sum_{j=1}^{\ell-1}v_{j}v_{j}^{t})\psi_{l},\qquad\ell=r,\dots,p\]
 where $v_{j}=u_{j}$ for $j=1,...,r-1$, $\alpha_{\ell}$ is normalization
constant such that $||u_{\ell}||=1$ and $\psi_{\ell}$ is a vector
linearly independent of $u_{j}~\forall j\in\{1,\dots,\ell-1\}$. Then,
the $r$th discriminative axis is given by: \[
u_{r}=\frac{P_{r-1}u_{r}^{max}}{||u_{r}^{max}||},\]
 where $P_{r-1}$ is the projector on $\mathcal{B}_{r-1}$, $u_{r}^{max}$
is the eigenvector associated with the largest eigenvalue of the matrix
$S_{r}^{-1}S_{Br}^{(q)}$ with: \begin{align*}
S_{r}= & \quad V^{r^{{\scriptstyle t}}}SV^{r},\\
S_{Br}^{(q)}= & \quad V^{r^{{\textstyle {\scriptstyle t}}}}S_{B}^{(q)}V^{r},\end{align*}
 \emph{i.e.} $S_{r}$ and $S_{Br}^{(q)}$ are respectively the covariance
and soft between-covariance matrices of the data projected into the
orthogonal subspace $\mathcal{B}_{r-1}^{\perp}$. This iterative procedure
stops when the $d$ orthonormal discriminative axes $u_{j}$ are computed.

\subsection{The M step}

This third step estimates the model parameters by maximizing the conditional
expectation of the complete likelihood. The following proposition
provides the expression of the conditional expectation of the complete
log-likelihood in the case of the $\mathrm{DLM}_{[\Sigma_{k}\beta_{k}]}$
model. A proof of this result is provided in Appendix~\ref{a:estimation}

\begin{prop}\label{prop:Likelihood}In the case of the model $\mathrm{DLM}_{[\Sigma_{k}\beta_{k}]}$,
the conditional expectation of complete log-likelihood $Q(y_{1},\dots,y_{n},\theta)$
has the following expression: \begin{equation}
\begin{split}Q(y_{1},\dots,y_{n},\theta)= & -\frac{1}{2}\sum_{k=1}^{K}n_{k}\Bigl[-2\log(\pi_{k})+\mathrm{trace}(\Sigma_{k}^{-1}U^{t}C_{k}U)+\log(\left|\Sigma_{k}\right|)\\
 & +(p-d)\log(\beta_{k})+\frac{1}{\beta_{k}}\left(\mathrm{trace}(C_{k})-\sum_{j=1}^{d}u_{j}^{t}C_{k}u_{j}\right)+\gamma\Bigr].\end{split}
\label{e:loglik2}\end{equation}
 where $C_{k}$ is the empirical covariance matrix of the $k^{th}$
group, $u_{j}$ is the $j$th column vector of~$U$, $n_{k}=\sum_{i=1}^{n}t_{ik}$
and $\gamma=p\log(2\pi)$ is a constant term. \end{prop}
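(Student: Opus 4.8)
The plan is to write down the complete log-likelihood, take its conditional expectation given the data and the current parameter, and then exploit the block structure of $\Delta_k = W^t S_k W$ to reduce everything to quantities living in the latent subspace. I would start from the complete-data model: each $(y_i,z_i)$ contributes $\log\bigl(\pi_{z_i}\,\phi(y_i;m_{z_i},S_{z_i})\bigr)$, so that
\[
\mathbb{E}\!\left[\log L_c \mid y_1,\dots,y_n,\theta^{(q-1)}\right]
= \sum_{i=1}^n \sum_{k=1}^K t_{ik}\Bigl[\log \pi_k - \tfrac12\log|S_k| - \tfrac12 (y_i-m_k)^t S_k^{-1}(y_i-m_k) - \tfrac{p}{2}\log(2\pi)\Bigr].
\]
The key algebraic move is to change basis by the orthogonal matrix $W=[U,V]$: since $W W^t = I_p$, one has $|S_k| = |\Delta_k|$ and $(y_i-m_k)^t S_k^{-1}(y_i-m_k) = (W^t(y_i-m_k))^t \Delta_k^{-1} (W^t(y_i-m_k))$. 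Using the assumed block-diagonal form of $\Delta_k$ — the top-left $d\times d$ block is $\Sigma_k$ and the bottom-right $(p-d)\times(p-d)$ block is $\beta_k I_{p-d}$ — the determinant factorizes as $|\Delta_k| = |\Sigma_k|\,\beta_k^{\,p-d}$, and the quadratic form splits into a $U$-part weighted by $\Sigma_k^{-1}$ and a $V$-part weighted by $1/\beta_k$.

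Next I would get rid of the explicit appearance of $V$. Writing $(y_i-m_k) = UU^t(y_i-m_k) + VV^t(y_i-m_k)$ and using $U^tV=0$, the $V$-component has squared norm $\|VV^t(y_i-m_k)\|^2 = \|y_i-m_k\|^2 - \|U^t(y_i-m_k)\|^2 = \|y_i-m_k\|^2 - \sum_{j=1}^d \bigl(u_j^t(y_i-m_k)\bigr)^2$, which is exactly the reformulation used in the E step (Figure~\ref{Fig:projection}). At this point I introduce the soft means $m_k = \hat m_k$ so that $\sum_i t_{ik}(y_i-m_k)(y_i-m_k)^t = n_k C_k$ with $C_k$ the empirical covariance of group $k$; summing the quadratic forms over $i$ turns $\sum_i t_{ik}\,(U^t(y_i-m_k))^t\Sigma_k^{-1}(U^t(y_i-m_k))$ into $n_k\,\mathrm{trace}(\Sigma_k^{-1}U^tC_kU)$ and $\sum_i t_{ik}\|y_i-m_k\|^2$ into $n_k\,\mathrm{trace}(C_k)$, and likewise $\sum_i t_{ik}\sum_j (u_j^t(y_i-m_k))^2$ into $n_k \sum_{j=1}^d u_j^t C_k u_j$. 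Collecting all terms, multiplying through by $-\tfrac12$, and writing $n_k = \sum_i t_{ik}$, $\gamma = p\log(2\pi)$ yields exactly~\eqref{e:loglik2}.

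The only genuine subtlety — and the step I would be most careful about — is the bookkeeping that replaces the $(p-d)$-dimensional projection onto $\mathbb{E}^\perp$ by the complementary latent-space expression, i.e.\ justifying $\mathrm{trace}(C_k) - \sum_{j=1}^d u_j^t C_k u_j = \mathrm{trace}(V^t C_k V)$ via $UU^t + VV^t = I_p$, $\mathrm{trace}(U^tC_kU)=\sum_j u_j^tC_ku_j$, and the cyclic property of the trace; everything else is routine substitution and the standard fact that, in the M step, the cross term with $\bar y$ vanishes once the group means are taken to be the soft means. I would also note in passing that the $-\tfrac{p}{2}\log(2\pi)$ per observation sums to $-\tfrac12\sum_k n_k\gamma$, accounting for the $\gamma$ inside the bracket.
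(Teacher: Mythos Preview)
Your proposal is correct and follows essentially the same route as the paper: write the expected complete log-likelihood, pass to the $W$-basis so that $|S_k|=|\Delta_k|=|\Sigma_k|\beta_k^{p-d}$ and the quadratic form splits along the block-diagonal $\Delta_k$, then eliminate $V$ via $UU^t+VV^t=I_p$ (the paper does this through the decomposition $W=\tilde W+\bar W$ with $\tilde W=[U,0]$, $\bar W=[0,V]$, but the content is identical) and collect the $t_{ik}$-weighted sums into $n_kC_k$ using the trace. Your closing remark about a ``cross term with $\bar y$'' is slightly off --- there is no $\bar y$ here, only the identification of $m_k$ with the soft mean so that $\sum_i t_{ik}(y_i-m_k)(y_i-m_k)^t=n_kC_k$ holds by definition of $C_k$ --- but this does not affect the argument.
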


\noindent At iteration $q$, the maximization of $Q$ conduces to
an estimation of the mixture proportions $\pi_{k}$ and the means
$\mu_{k}$ for the $K$ components by their empirical counterparts:
\begin{align*}
\hat{\pi}_{k}^{(q)} & =\frac{n_{k}}{n},\\
\hat{\mu}_{k}^{(q)} & =\frac{1}{n_{k}}\sum_{i=1}^{n}t_{ik}^{(q)}U^{(q)t}y_{i},\end{align*}
where $n_{k}=\sum_{i=1}^{n}t_{ik}^{(q)}$ and $U^{(q)}$ contains,
as columns vectors, the $d$ discriminative axes $u_{j}^{(q)}$, $j=1,...,d$,
provided by the F step at iteration $q$. The following proposition
provides estimates for the remaining parameters for the 12 DLM models
which have to be updated at each iteration of the FEM procedure. Proofs
of the following results are given in Appendix~\ref{a:estimation}.

\begin{prop} \label{prop:MLestimates}At iteration $q$, the estimates
for variance parameters of the 12 DLM models are: 
\begin{itemize}
\item Model $\mathrm{DLM}_{[\Sigma_{k}\beta_{k}]}$:\begin{align}
\hat{\Sigma}_{k}^{(q)}=U^{(q)t}C_{k}^{(q)}U^{(q)}, & \quad\hat{\beta}_{k}^{(q)}=\frac{\mathrm{trace}(C_{k}^{(q)})-\sum_{j=1}^{d}u_{j}^{(q)t}C_{k}^{(q)}u_{j}^{(q)}}{p-d},\label{Eq_Mstep_debut-1}\end{align}

\item Model $\mathrm{DLM}_{[\Sigma_{k}\beta]}$:\begin{align}
\hat{\Sigma}_{k}^{(q)}=U^{(q)t}C_{k}^{(q)}U^{(q)}, & \quad\hat{\beta}^{(q)}=\frac{\mathrm{trace}(C^{(q)})-\sum_{j=1}^{d}u_{j}^{(q)t}C^{(q)}u_{j}^{(q)}}{p-d},\label{Eq_Mstep_debut-1}\end{align}

\item Model $\mathrm{DLM}_{[\Sigma\beta_{k}]}$:\begin{align}
\hat{\Sigma}^{(q)}=U^{(q)t}C^{(q)}U^{(q)}, & \quad\hat{\beta}_{k}^{(q)}=\frac{\mathrm{trace}(C_{k}^{(q)})-\sum_{j=1}^{d}u_{j}^{(q)t}C_{k}^{(q)}u_{j}^{(q)}}{p-d},\label{Eq_Mstep_debut-1}\end{align}

\item Model $\mathrm{DLM}_{[\Sigma\beta]}$:\begin{align}
\hat{\Sigma}^{(q)}=U^{(q)t}C^{(q)}U^{(q)}, & \quad\hat{\beta}^{(q)}=\frac{\mathrm{trace}(C^{(q)})-\sum_{j=1}^{d}u_{j}^{(q)t}C^{(q)}u_{j}^{(q)}}{p-d},\label{Eq_Mstep_debut-1}\end{align}

\item Model $\mathrm{DLM}_{[\alpha_{kj}\beta_{k}]}$:\begin{align}
\hat{\alpha}_{kj}^{(q)}=u_{j}^{(q)t}C_{k}^{(q)}u_{j}^{(q)}, & \quad\hat{\beta}_{k}^{(q)}=\frac{\mathrm{trace}(C_{k}^{(q)})-\sum_{j=1}^{d}u_{j}^{(q)t}C_{k}^{(q)}u_{j}^{(q)}}{p-d},\label{Eq_Mstep_debut-1}\end{align}

\item Model $\mathrm{DLM}_{[\alpha_{kj}\beta]}$:\begin{align}
\hat{\alpha}_{kj}^{(q)}=u_{j}^{(q)t}C_{k}^{(q)}u_{j}^{(q)}, & \quad\hat{\beta}^{(q)}=\frac{\mathrm{trace}(C^{(q)})-\sum_{j=1}^{d}u_{j}^{(q)t}C^{(q)}u_{j}^{(q)}}{p-d},\label{Eq_Mstep_debut-1}\end{align}

\item Model $\mathrm{DLM}_{[\alpha_{k}\beta_{k}]}$:\begin{align}
\hat{\alpha}_{k}^{(q)}=\frac{1}{d}\sum_{j=1}^{d}u_{j}^{(q)t}C_{k}^{(q)}u_{j}^{(q)}, & \quad\hat{\beta}_{k}^{(q)}=\frac{\mathrm{trace}(C_{k}^{(q)})-\sum_{j=1}^{d}u_{j}^{(q)t}C_{k}^{(q)}u_{j}^{(q)}}{p-d},\label{Eq_Mstep_debut-1}\end{align}

\item Model $\mathrm{DLM}_{[\alpha_{k}\beta]}$:\begin{align}
\hat{\alpha}_{k}^{(q)}=\frac{1}{d}\sum_{j=1}^{d}u_{j}^{(q)t}C_{k}^{(q)}u_{j}^{(q)}, & \quad\hat{\beta}^{(q)}=\frac{\mathrm{trace}(C^{(q)})-\sum_{j=1}^{d}u_{j}^{(q)t}C^{(q)}u_{j}^{(q)}}{p-d},\label{Eq_Mstep_debut-1}\end{align}

\item Model $\mathrm{DLM}_{[\alpha_{j}\beta_{k}]}$:\begin{align}
\hat{\alpha}_{j}^{(q)}=u_{j}^{(q)t}C^{(q)}u_{j}^{(q)}, & \quad\hat{\beta}_{k}^{(q)}=\frac{\mathrm{trace}(C_{k}^{(q)})-\sum_{j=1}^{d}u_{j}^{(q)t}C_{k}^{(q)}u_{j}^{(q)}}{p-d},\label{Eq_Mstep_debut-1}\end{align}

\item Model $\mathrm{DLM}_{[\alpha_{j}\beta]}$:\begin{align}
\hat{\alpha}_{j}^{(q)}=u_{j}^{(q)t}C^{(q)}u_{j}^{(q)}, & \quad\hat{\beta}^{(q)}=\frac{\mathrm{trace}(C^{(q)})-\sum_{j=1}^{d}u_{j}^{(q)t}C^{(q)}u_{j}^{(q)}}{p-d},\label{Eq_Mstep_debut-1}\end{align}

\item Model $\mathrm{DLM}_{[\alpha\beta_{k}]}$:\begin{align}
\hat{\alpha}^{(q)}=\frac{1}{d}\sum_{j=1}^{d}u_{j}^{(q)t}C^{(q)}u_{j}^{(q)}, & \quad\hat{\beta}_{k}^{(q)}=\frac{\mathrm{trace}(C_{k}^{(q)})-\sum_{j=1}^{d}u_{j}^{(q)t}C_{k}^{(q)}u_{j}^{(q)}}{p-d},\label{Eq_Mstep_debut-1}\end{align}

\item Model $\mathrm{DLM}_{[\alpha\beta]}$:\begin{align}
\hat{\alpha}^{(q)}=\frac{1}{d}\sum_{j=1}^{d}u_{j}^{(q)t}C^{(q)}u_{j}^{(q)}, & \quad\hat{\beta}^{(q)}=\frac{\mathrm{trace}(C^{(q)})-\sum_{j=1}^{d}u_{j}^{(q)t}C^{(q)}u_{j}^{(q)}}{p-d},\label{DUPLICATION : Eq_Mstep_fin}\end{align}

\end{itemize}
where the vectors $u_{j}^{(q)}$ are the discriminative axes provided
by the F step at iteration $q$, $C_{k}^{(q)}=\frac{1}{n_{k}^{(q)}}\sum_{i=1}^{n}t_{ik}^{(q)}(y_{i}-\hat{m}_{k}^{(q)})(y_{i}-\hat{m}_{k}^{(q)})^{t}$
is the soft covariance matrix of the $k$th group, $\hat{m}_{k}^{(q)}=\frac{1}{n}\sum_{i=1}^{n}t_{ik}^{(q)}y_{i}$
and finally $C=\frac{1}{n}\sum_{k=1}^{K}n_{k}C_{k}$ is the soft within-covariance
matrix of the $K$ groups. \end{prop}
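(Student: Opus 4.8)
The plan is to start from the expression of $Q$ supplied by Proposition~\ref{prop:Likelihood}, to treat $U^{(q)}$ as fixed (it is produced by the F step) together with the estimates $\hat\pi_k^{(q)}$ and $\hat\mu_k^{(q)}$ already obtained above, and to exploit the fact that, once these are frozen, $-2Q$ splits as a sum of terms each of which involves only one of the remaining variance parameters. Writing $b_k=\mathrm{trace}(C_k)-\sum_{j=1}^d u_j^t C_k u_j$, the only part of $-2Q$ depending on the within-space covariances is $\sum_k n_k\bigl[\mathrm{trace}(\Sigma_k^{-1}U^tC_kU)+\log|\Sigma_k|\bigr]$ and the only part depending on the noise variances is $\sum_k n_k\bigl[(p-d)\log\beta_k+b_k/\beta_k\bigr]$; these two blocks share no parameters, so they may be minimised separately, and within each block I would substitute, model by model, the constrained parametrisation of $\Delta_k$.

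Two elementary facts do all the work. First, for any symmetric positive-definite $d\times d$ matrix $A$, the map $\Sigma\mapsto\mathrm{trace}(\Sigma^{-1}A)+\log|\Sigma|$ on the cone of symmetric positive-definite matrices is minimised uniquely at $\Sigma=A$: differentiating (with $\partial\log|\Sigma|/\partial\Sigma=\Sigma^{-1}$ and $\partial\,\mathrm{trace}(\Sigma^{-1}A)/\partial\Sigma=-\Sigma^{-1}A\Sigma^{-1}$) gives the stationarity equation $\Sigma^{-1}=\Sigma^{-1}A\Sigma^{-1}$, i.e. $\Sigma=A$, and after the reparametrisation $\Omega=\Sigma^{-1}$ the objective $\mathrm{trace}(\Omega A)-\log|\Omega|$ is strictly convex, so this stationary point is the global minimum. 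Second, for $a,b>0$ the scalar map $t\mapsto a\log t+b/t$ on $(0,\infty)$ is minimised uniquely at $t=b/a$. Applying the first fact with $A=U^{(q)t}C_k^{(q)}U^{(q)}$ yields $\hat\Sigma_k^{(q)}=U^{(q)t}C_k^{(q)}U^{(q)}$, and applying the second with $a=p-d$, $b=b_k$ yields $\hat\beta_k^{(q)}=b_k/(p-d)$, which are exactly the formulae for $\mathrm{DLM}_{[\Sigma_{k}\beta_{k}]}$.

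For the constrained submodels the derivation is the same after substitution. If $\Sigma_k=\mathrm{diag}(\alpha_{k1},\dots,\alpha_{kd})$ then $\mathrm{trace}(\Sigma_k^{-1}U^tC_kU)=\sum_j u_j^tC_ku_j/\alpha_{kj}$ and $\log|\Sigma_k|=\sum_j\log\alpha_{kj}$, so the objective separates over $j$ and the scalar fact gives $\hat\alpha_{kj}^{(q)}=u_j^{(q)t}C_k^{(q)}u_j^{(q)}$; if $\Sigma_k=\alpha_kI_d$ the same computation collapses to one scalar problem with $a=d$ and $b=\sum_j u_j^tC_ku_j$, giving $\hat\alpha_k^{(q)}=\frac1d\sum_j u_j^{(q)t}C_k^{(q)}u_j^{(q)}$. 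For the models with a parameter pooled across groups (a common $\Sigma$, $\alpha_j$, $\alpha$, or $\beta$) I would collect the corresponding terms of $\sum_k n_k[\cdots]$ and use the identities $\sum_{k=1}^K n_k=n$ and $C=\frac1n\sum_{k=1}^K n_kC_k$ together with the linearity of $\mathrm{trace}(\cdot)$ and of $u_j^t(\cdot)u_j$: for instance $\sum_k n_k\,\mathrm{trace}(\Sigma^{-1}U^tC_kU)=n\,\mathrm{trace}(\Sigma^{-1}U^tCU)$ and $\sum_k n_k b_k=n(\mathrm{trace}(C)-\sum_j u_j^tCu_j)$, after which the two facts reproduce the stated estimates with $C_k$ replaced by $C$ in the pooled quantity. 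Running through the five $\Sigma$-type constraints ($\Sigma_k$, $\alpha_{kj}$, $\alpha_k$, $\alpha_j$, $\alpha$) crossed with the two $\beta$-constraints ($\beta_k$ or $\beta$) — noting that $\alpha_j$ and $\alpha$ are by definition already common across groups, so they always involve $C$ — accounts for all $12$ models.

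The only genuinely substantive ingredient is the first fact, i.e. the classical characterisation of the Gaussian covariance MLE; everything else is bookkeeping, whose main point is to check that in each of the $12$ cases substituting the constrained parametrisation does not couple parameters that the general argument treats separately, and that each reduced objective is convex in the natural variable ($\Sigma^{-1}$, or the reciprocals $1/\alpha_{kj}$, $1/\alpha_k$, $1/\beta_k$) so that the stationary point is the global maximiser of $Q$. Admissibility is automatic: $b_k\ge0$ because, completing $U^{(q)}$ to an orthonormal basis of $\mathbb{R}^p$, $\mathrm{trace}(C_k)=\mathrm{trace}(U^{(q)t}C_k U^{(q)})+\mathrm{trace}(V^tC_kV)\ge\sum_j u_j^{(q)t}C_ku_j^{(q)}$, and $U^{(q)t}C_k^{(q)}U^{(q)}\succeq0$, so $\hat\beta_k^{(q)}\ge0$ and $\hat\Sigma_k^{(q)}\succeq0$ (strictly, generically).
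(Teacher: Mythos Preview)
Your proof is correct and follows essentially the same approach as the paper: both start from the $Q$ of Proposition~\ref{prop:Likelihood}, observe that the variance parameters decouple, and obtain each estimate by setting the relevant partial derivative to zero (using the same matrix-calculus identities for $\Sigma_k$ and the same pooling via $\sum_k n_k C_k = nC$ for the common-parameter models). Your presentation is slightly more streamlined---abstracting the computation into two reusable scalar and matrix lemmas---and adds the convexity and admissibility checks that the paper's proof omits, but the substance is identical.
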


\subsection{Initialization and model selection}

Since the Fisher-EM procedure presented in this work belongs to the
family of EM-based algorithms, the Fisher-EM algorithm can inherit
the most efficient strategies for initialization and model selection
from previous works on the EM algorithm.

\paragraph{Initialization}

Although the EM algorithm is widely used, it is also well-known that
the performance of the algorithm is linked to its initial conditions.
Several strategies have been proposed in the literature for initializing
the EM algorithm. A popular practice~\cite{Biernacki03} executes
the EM algorithm several times from a random initialization and keep
only the set of parameters associated with the highest likelihood.
The use of k-means or of a random partition are also standard approaches
for initializing the algorithm. McLachlan and Peel~\cite{McLachlan00}
have also proposed an initialization through the parameters by generating
the mean and the covariance matrix of each mixture component from
a multivariate normal distribution parametrized by the empirical mean
and empirical covariance matrix of the data. In practice, this latter
initialization procedure works well but, unfortunately, it cannot
be applied directly to the Fisher-EM algorithm since model parameters
live in a space different from the observation space. A simple way
to adapt this strategy could be to first determine a latent space
using PCA and then simulate mixture parameters in this initialization
latent space.

\paragraph{Model selection}

In model-based clustering, it is frequent to consider several models
in order to find the most appropriate model for the considered data.
Since a model is defined by its number of component $K$ and its parametrization,
model selection allows to both select a parametrization and a number
of components. Several criteria for model selection have been proposed
in the literature and the famous ones are penalized likelihood criteria.
Classical tools for model selection include the AIC~\cite{Akaike74},
BIC~\cite{Schwarz78} and ICL~\cite{Biernacki01} criteria. The
Bayesian Information Criterion (BIC) is certainly the most popular
and consists in selecting the model which penalizes the likelihood
by $\frac{\gamma(\mathcal{M})}{2}\log(n)$ where $\gamma(\mathcal{M})$
is the number of parameters in model $\mathcal{M}$ and $n$ is the
number of observations. On the other hand, the AIC criterion penalizes
the log-likelihood by $\gamma(\mathcal{M})$ whereas the ICL criterion
add the penalty $\sum_{i=1}^{n}\sum_{k=1}^{K}t_{ik}\log(t_{ik})$
to the one of the BIC criterion in order to favor well separated models.
The value of $\gamma(\mathcal{M})$ is of course specific to the model
selected by the practitioner (\emph{cf}. Table~\ref{Tab:models}).
In the experiments of the following sections, the BIC criterion is
used because of its popularity but the ICL criterion should also be
well adapted in our context.

\subsection{Computational aspects}

As all iterative procedures, the convergence, the stopping criterion
and the computational cost of the Fisher-EM algorithm deserve to be
discussed.

\paragraph{Convergence}

Although the Fisher-EM algorithm presented in the previous paragraphs
is an EM-like algorithm, it does not satisfy at a first glance to
all conditions required by the convergence theory of the EM algorithm.
Indeed, the update of the orientation matrix~$U$ in the F step is
done by maximizing the Fisher criterion and not by directly maximizing
the expected complete log-likelihood as required in the EM algorithm
theory. From this point of view, the convergence of the Fisher-EM
algorithm cannot therefore be guaranteed. However, as demonstrated
by Campbell~\cite{Campbell84} in the supervised case and by Celeux
and Govaert~\cite{Celeux92} in the unsupervised case, the maximization
of the Fisher criterion is equivalent to the maximization of the complete
likelihood when all mixture components have the same diagonal covariance
matrix ($S_{k}=\sigma^{2}\mathbf{I}_{p}$ for $k=1,...,K$). In our
model, by considering the homoscedastic case with a diagonal covariance
matrix, the conditional expectation of the complete log-likelihood
can be rewritten as $-\frac{n}{2}\left[\mathrm{trace}\left(\left(U^{t}SU\right)^{-1}\left(U^{t}WU\right)\right)\right]+\gamma$
where $\gamma$ is a constant term according to $U$. Hence, with
these assumptions, maximizing this criterion according to $U$ is
equivalent to minimizing the Fisher criterion $\mathrm{trace}\left(\left(U^{t}SU\right)^{-1}\left(U^{t}WU\right)\right)$.
Consequently, for the model $\mathrm{DLM}_{[\alpha\beta]}$ which
assumes the equality and the diagonality of covariance matrices, the
F step of the Fisher-EM algorithm satisfies the convergence conditions
of the EM algorithm theory and the convergence of the Fisher-EM algorithm
can be guaranteed in this case. For the other DLM models, although
the convergence of the Fisher-EM procedure cannot be guaranteed, our
practical experience has shown that the Fisher-EM algorithm rarely
fails to converge with these models if correctly initialized.

\paragraph{Stopping criterion and convergence monitoring}

To decide whether the algorithm has converged or not, we propose to
use the Aitken's criterion~\cite{McLachlan97}. This criterion estimates
the asymptotic maximum of the log-likelihood in order to detect in
advance the algorithm convergence. Indeed, the convergence of the
EM algorithm can be sometimes slow in practice due to its linear convergence
rate and it is often not necessary to wait for the actual convergence
to obtain a good parameter estimate under standard conditions. At
iteration $q$, the Aitken's criterion is defined by $A^{(q)}=\left(\ell^{(q+1)}-\ell^{(q)}\right)/\left(\ell^{(q)}-\ell^{(q-1)}\right)$
where $\ell^{(q)}$ is the log-likelihood value at iteration $q$.
Then, asymptotic estimate of the log-likelihood maximum is given by:\[
\ell_{\infty}^{(q+1)}=\ell^{(q)}+\frac{1}{1-A^{(q)}}\left(\ell^{(q+1)}-\ell^{(q)}\right),\]
and the algorithm can be considered to have converged if $\left|\ell_{\infty}^{(q+1)}-\ell_{\infty}^{(q)}\right|$
is smaller than a small positive number (provided by the user). In
practice, if the criterion is not satisfied after a maximum number
of iterations (provided by the user as well), the algorithm stops.
Afterward, it is possible to check whether the provided estimate is
a local maximum by computing the Hessian matrix (using finite differentiation)
which should be negative definite. In the experiments presented in
the following section, the convergence of the Fisher-EM algorithm
has been checked using such an approach.

\paragraph{Computational cost}

Obviously, since the additional F step is iterative, the computational
complexity of the Fisher-EM procedure is somewhat bigger than the
one of the ordinary EM algorithm. The F step requires $d(d-2)/2$
iterations due to the Gram-Schmidt procedure used for the orthogonalization
of $U$. However, since $d$ is at most equal to $K-1$ and is supposed
to be small compared to $p$, the complexity of the F step is not
a quadratic function of the data dimension which could be large. Furthermore,
it is important to notice that the complexity of this step does not
depend on the number of observations $n$. Although the proposed algorithm
is more time consuming than the usual EM algorithm, it is altogether
actually usable on recent PCs even for large scale problems. Indeed,
we have observed on simulations that Fisher-EM appears to be $1.5$
times slower on average than EM (with a diagonal model). As an example,
$24$ seconds are on average necessary for Fisher-EM to cluster a
dataset of $1\,000$ observations in a $100$-dimensional space whereas
EM requires $16$ seconds.

\subsection{Practical aspects\label{par_n<<p}}

The DLM models, for which the Fisher-EM algorithm has been proposed
as an estimation procedure, presents several practical and numerical
interests among which the ability to visualize the clustered data,
to interpret the discriminative axes and to deal with the so-called
$n\ll p$ problem.

\paragraph{Choice of $d$ and visualization in the discriminative subspace}

The proposed DLM models are all parametrized by the intrisinc dimension
$d$ of the discriminative latent subspace which is theoretically
at most equal to $K-1$. Even though the actual value of $d$ is strictly
smaller than $K-1$ for the dataset at hand, we recommand in practice
to set $d=K-1$ when numerically possible in order to avoid stability
problems with the Fisher-EM algorithm. Furthermore, it is always better
to extract more discriminative axes than to miss relevant dimensions
and $K-1$ is often in practice a small value compared to $p$. Besides,
a natural use of the discriminative axes may certainly be the visualization
of the clustered data. Indeed, it is nowadays clear that the visualization
help human operators to understand the results of an analysis. With
the Fisher-EM algorithm, it is easy to project and visualize the cluster
data into the estimated discriminative latent subspace if $K\leq4$.
When $K>4$, the actual value of $d$ can be estimated by looking
at the eigenvalue scree of $S_{W}^{-1}S_{B}$ and two cases have therefore
to be considered. On the one hand, if the estimated value of $d$
is at most equal to $3$, the practitioner can therefore visualize
his data by projecting them on the $d$ first discriminative axes
and no discriminative information loss is to be deplored in this case.
On the other hand, if the estimated value of $d$ is strictly larger
than $3$, the visualization becomes obviously more difficult but
the practitioner may simply use the $3$ first discriminative axes
which are the most discriminative ones among the $K-1$ provided axes.
Let us finally notice that the visualization quality is of course
related to the clustering quality. Indeed, the visualization provided
by the Fisher-EM algorithm may be disappointing if the clustering
results are poor, due to a bad initialization for instance. A good
solution to avoid such a situation may be to initialize the Fisher-EM
algorithm with the {}``mini-EM'' strategy or with the results of
a classical EM algorithm.

\paragraph*{Interpretation of the discriminative axes}

Beyond the natural interest of visualization, it may also be useful
from a practical point of view to interpret the estimated discriminative
axes, \emph{i.e.} $u_{1},...,u_{d}$ with the notations of the previous
sections. The main interest for the practitioner would be to figure
out which original dimensions are the most discriminative. This can
be done by looking at the matrix $U$ which contains $u_{1},...,u_{d}$
as column vectors. In the classical framework of factor analysis,
this matrix is known as the loading matrix (the discriminative axes
$u_{1},...,u_{d}$ are the loadings). Thus, it is possible to find
the most discriminative original variables by selecting the highest
values in the loadings. A simple way to highlight the relevant variables
is to threshold the loadings (setting to zero the values less than
a given threshold). Let us finally remark that finding the most discriminative
original variables is of particular interest in application fields,
such as biology or economics, where the observed variables have an
actual meaning.

\paragraph{Dealing with the $n\ll p$ problem}

Another important and frequent problem when clustering high-dimensional
data is known as high dimension and low sample size (HDSS) problem
or the $n\ll p$ problem (we refer to \cite[Chap. 18]{Hastie01} for
an overview). The $n\ll p$ problem refers to situations where the
number of features $p$ is larger than the number of available observations~$n$.
This problem occurs frequently in modern scientific applications such
as genomics or mass spectrometry. In such cases, the estimation of
model parameters for generative clustering methods is either difficult
or impossible. This task is indeed very difficult when $n\ll p$ since
generative methods require, in particular, to invert covariance matrices
which are ill-conditioned in the best case or singular in the worst
one. In contrast with other generative methods, the Fisher-EM procedure
can overcome the $n\ll p$ problem. Indeed, the E and M steps of Fisher-EM
do not require the determination of the last $p-d$ columns of $W$
(see equations~(\ref{Eq_Estep}) and~(\ref{Eq_Mstep_debut-1})--(\ref{DUPLICATION : Eq_Mstep_fin}))
and, consequently, it is possible to modify the F step to deal with
situations where $n\ll p$. To do so, let $\bar{Y}$ denote the centered
data matrix and $T$ denote, as before, the soft partition matrix.
We define in addition the weighted soft partition matrix $\tilde{T}$
where the $j$th column $\tilde{T}_{j}$ of $\tilde{T}$ is the $j$th
column $T_{j}$ of $T$ divided by $n_{j}=\sum_{i=1}^{n}t_{ij}$.
With these notations, the between covariance matrix $B$ can be written
in its matrix form $B=\bar{Y}^{t}\tilde{T}^{t}\tilde{T}\bar{Y}$ and
the F step aims to maximize, under orthogonality constraints, the
function $f(U)=\mathrm{trace}\left((U^{t}\bar{Y}^{t}\bar{Y}U)^{-1}U^{t}\bar{Y}^{t}\tilde{T}^{t}\tilde{T}\bar{Y}U\right).$
It follows from the classical result of kernel theory, the Representer
theorem~\cite{Kimeldorf71}, that this maximization can be done in
a different space and that $U$ can be expressed as $U=\bar{Y}H$
where $H\in\mathbb{R}^{n\times p}$. Therefore, the F step reduces
to maximize, under orthogonality constraints, the following function:
\begin{equation}
f(H)=\mathrm{trace}\left((H^{t}GGH)^{-1}H^{t}G\tilde{T}^{t}\tilde{T}GH\right),\label{KernTrick2}\end{equation}
 where $G=\bar{Y}\bar{Y}^{t}$ is the $n\times n$ Gram matrix. The
solution $U^{*}$ of the original problem can be obtained afterward
from the solution $H^{*}$ of~(\ref{KernTrick2}) by multiplying
it by $\bar{Y}$. Thus, the F step reduces to the eigendecomposition
under orthogonality constraints of a $n\times n$ matrix instead of
a $p\times p$ matrix. This procedure is useful for the Fisher-EM
procedure only because it allows to determine $d\leq n$ axes which
are enough for Fisher-EM but not for other generative methods which
require the computation of the $p$ axes.

\section{Experimental results\label{sec:Experimental-results}}

This section presents experiments on simulated and real datasets in
order to highlight the main features of the clustering method introduced
in the previous sections.

\subsection{An introductory example: the Fisher's irises\label{sub:An-introductory-example:}}

Since we chose to name the clustering algorithm proposed in this work
after Sir R.~A.~Fisher, the least we can do is to first apply the
Fisher-EM algorithm to the iris dataset that Fisher used in~\cite{Fisher36}
as an illustration for his discriminant analysis. This dataset, in
fact collected by E.~Anderson~\cite{Anderson35} in the Gaspé peninsula
(Canada), is made of three groups corresponding to different species
of iris (\textit{setosa}, \textit{versicolor} and \textit{virginica})
among which the groups \textit{versicolor} and\textit{ virginica}
are difficult to discriminate (they are at least not linearly separable).
The dataset consists of 50 samples from each of three species and
four features were measured from each sample. The four measurements
are the length and the width of the sepal and the petal. This dataset
is used here as an introductory example because of the link with Fisher's
work but also of its popularity in the clustering community.

\begin{figure}[p]
\centering{}\begin{tabular}{cc}
\includegraphics[scale=0.4]{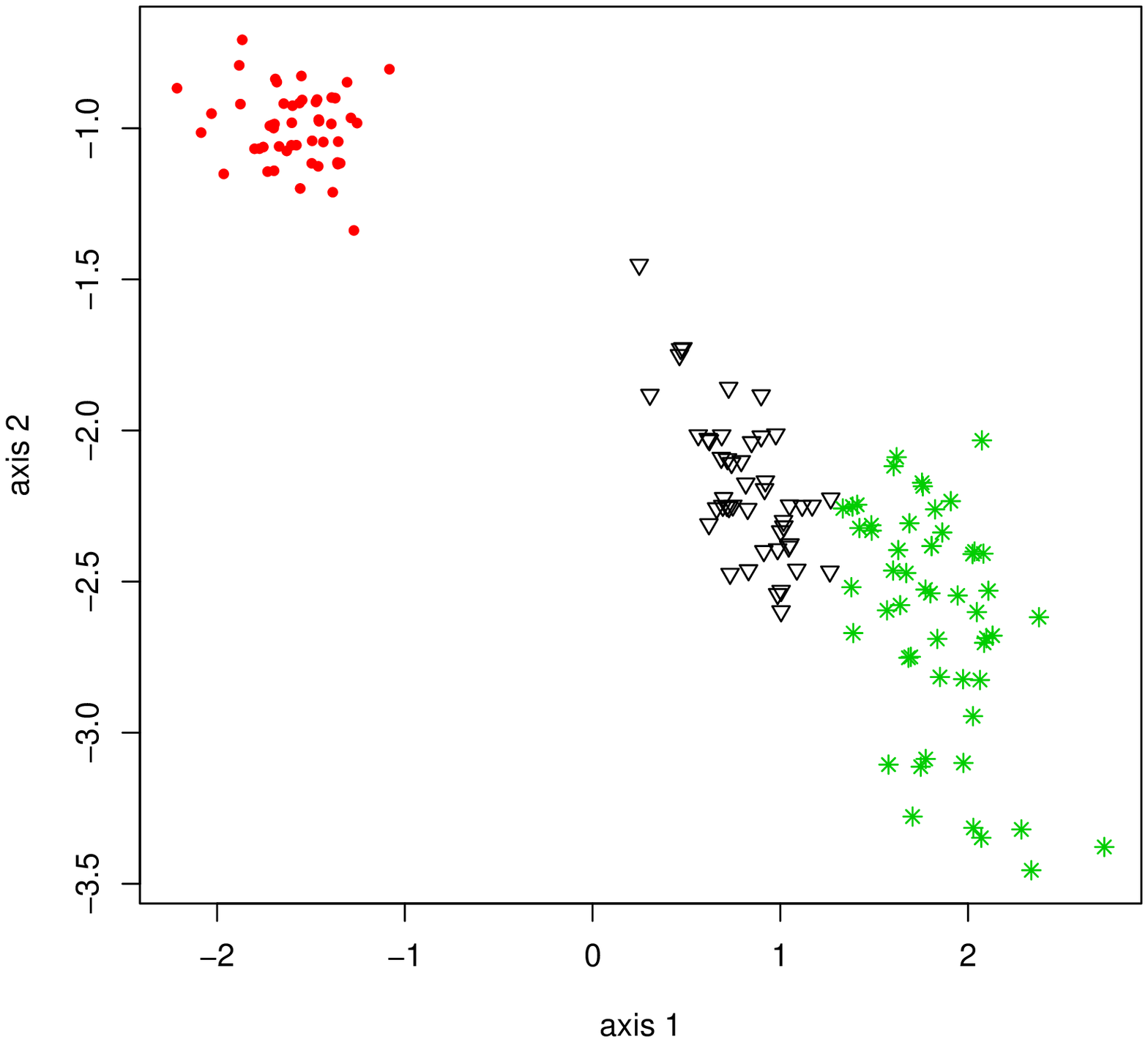} &  \includegraphics[scale=0.4]{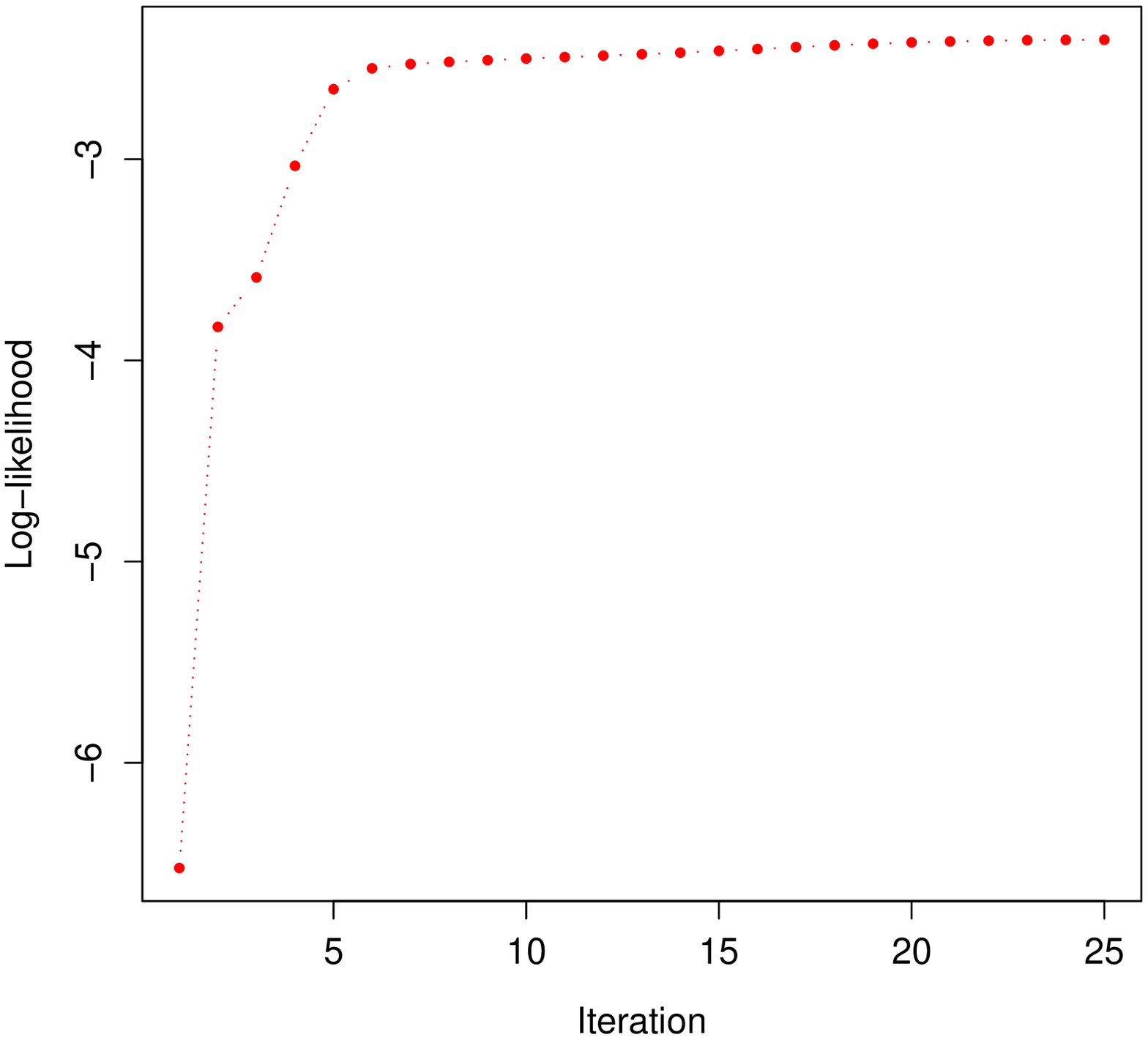}\tabularnewline
\end{tabular} \caption{\label{iris:data} Projection of clustered Iris data into the latent
discriminative subspace with Fisher-EM (left) and evolution of the
associated log-likelihood (right).}

\end{figure}
\begin{table}[p]
\begin{centering}
\begin{tabular}{cc}
\begin{tabular}{|lccc|lccc|}
\hline 
\multicolumn{4}{|c|}{{\small OLDA}} & \multicolumn{4}{c|}{Fisher-EM}\tabularnewline
\hline
\hline 
 & \multicolumn{3}{c|}{\textit{\small cluster}} &  & \multicolumn{3}{c|}{\textit{\small cluster}}\tabularnewline
\textit{\small class} & {\small 1} & {\small 2} & {\small 3} & \textit{\small class} & {\small 1} & {\small 2} & {\small 3}\tabularnewline
\hline
{\small Setosa} & {\small 50} & {\small 0} & {\small 0} & {\small Setosa} & {\small 50} & {\small 0} & {\small 0}\tabularnewline
{\small Versicolor} & {\small 0} & {\small 48} & {\small 2} & {\small Versicolor} & {\small 0} & {\small 47} & {\small 3}\tabularnewline
{\small Virginica} & {\small 0} & {\small 1} & {\small 49} & {\small Virginica} & {\small 0} & {\small 0} & {\small 50}\tabularnewline
\hline 
\multicolumn{4}{|l|}{\textit{\scriptsize Misclassification rate = 0.02}} & \multicolumn{4}{c|}{\textit{\scriptsize Misclassification rate = 0.02}}\tabularnewline
\hline
\end{tabular} & \tabularnewline
\end{tabular}
\par\end{centering}

\caption{\label{iris:comp} Confusion tables for the iris data with OLDA method
(supervised) and Fisher-EM (unsupervised).}

\end{table}

\begin{table}
\begin{centering}
\begin{tabular}{|l|cc|cc|}
\cline{2-5} 
\multicolumn{1}{l|}{} & \multicolumn{2}{c|}{OLDA} & \multicolumn{2}{c|}{Fisher-EM}\tabularnewline
\hline 
 & \multicolumn{2}{c|}{\textit{axis}} & \multicolumn{2}{c|}{\textit{axis}}\tabularnewline
\textit{variable} & {\small 1} & {\small 2} & {\small 1} & {\small 2}\tabularnewline
\hline
sepal length & 0.209 & 0.044 & -0.203 & -0.108\tabularnewline
sepal width & 0.386  & 0.665 & -0.422 & 0.088\tabularnewline
petal length & -0.554  & -0.356 & 0.602 & 0.736\tabularnewline
petal width & -0.707 & 0.655 & 0.646 & -0.662\tabularnewline
\hline
\end{tabular}
\par\end{centering}

\caption{\label{tab:Fisher-axes-estimated}Fisher axes estimated in the supervised
case (OLDA) and in the unsupervised case (Fisher-EM).}

\end{table}
In this first experiment, Fisher-EM has been applied to the iris data
(of course, the labels have been used only for performance evaluation)
and the Fisher-EM results will be compared to the ones obtained in
the supervised case with the orthogonal linear analysis method (OLDA)~\cite{Ye05}.
The left panel of Figure~\ref{iris:data} stands for the projection
of the irises in the estimated discriminative space with Fisher-EM
and the right panel shows the evolution of the log-likelihood on $25$
iterations until convergence. First of all, it can be observed that
the estimated latent space discriminates almost perfectly the three
different groups. For this experiment, the clustering accuracy has
reached $98\%$ with the DLM$_{[\alpha_{k}\beta]}$ model of Fisher-EM.
Secondly, the right panel shows the monotonicity of the evolution
of the log-likelihood and the convergence of the algorithm to a stationary
state. Table~\ref{iris:comp} presents the confusion matrices for
the partitions obtained with supervised and unsupervised classification
methods. OLDA has been used for the supervised case (reclassification
of the learning data) whereas Fisher-EM has provided the clustering
results. One can observe that the obtained partitions induced by both
methods is almost the same. This confirms that Fisher-EM has correctly
modeled both the discriminative subspace and the groups within the
subspace. It is also interesting to look at the loadings provided
by both methods. Table~\ref{tab:Fisher-axes-estimated} stands for
the linear coefficients of the discriminative axes estimated, on the
one hand, in the supervised case (OLDA) and, on the other hand, in
the unsupervised case (Fisher-EM). The first axes of each approach
appear to be very similar and the scalar product of these axes is
$-0.996$. This highlights the performance of the Fisher-EM algorithm
in estimating the discriminative subspace of the data. Furthermore,
according to these results, the $3$ groups of irises can be mainly
discriminated by the petal size meaning that only one axis would be
sufficient to discriminate the $3$ iris species. Besides, this interpretation
turns out to be in accordance with the recent work of Trendafilov
and Joliffe~\cite{Trendafilov07} on variable selection in discriminant
analysis via the LASSO.

\subsection{Simulation study: influence of the dimension}

\begin{figure}[t]
\centering{}\begin{tabular}{cc}
\includegraphics[width=0.49\columnwidth]{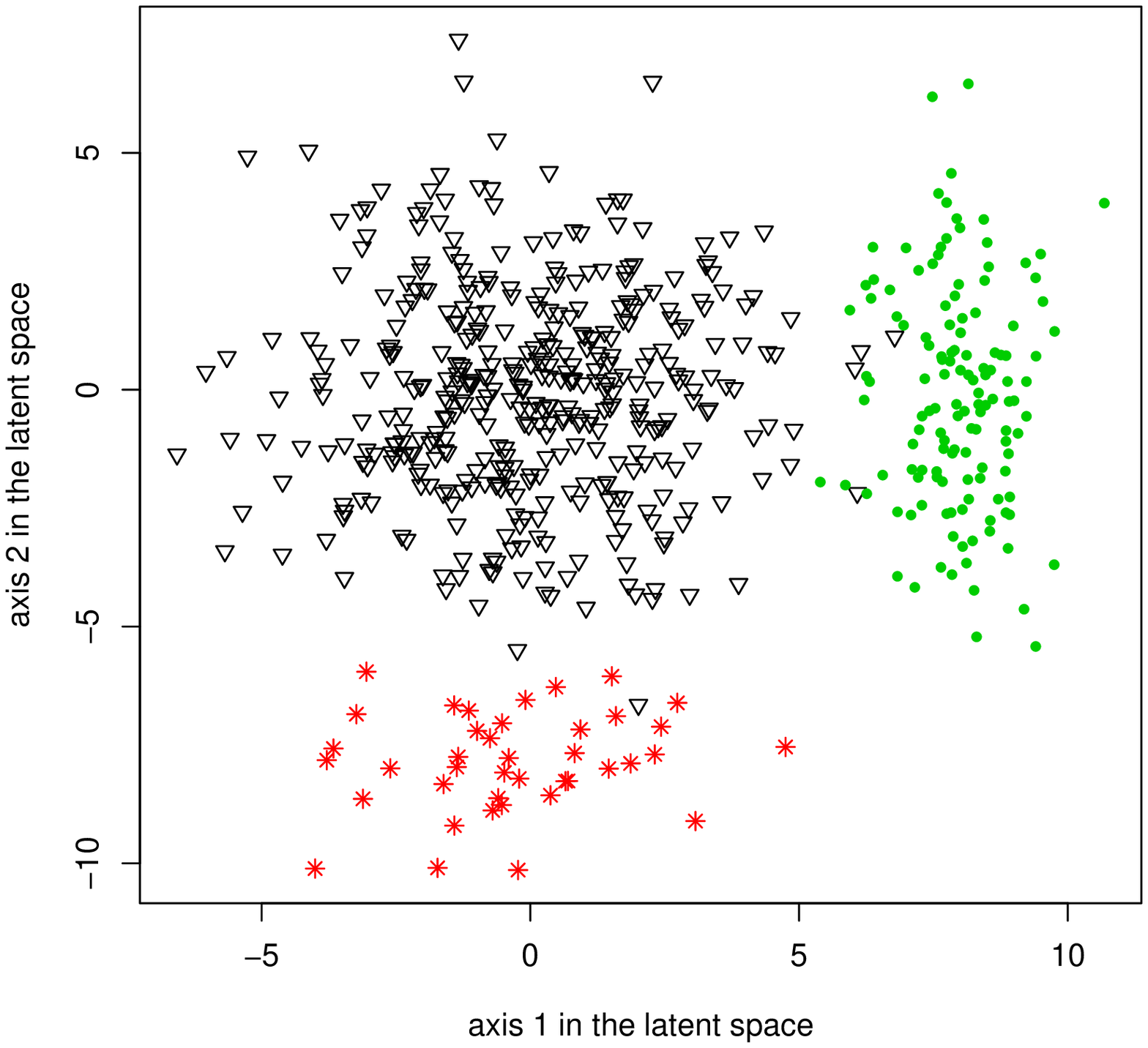}  & \includegraphics[width=0.49\columnwidth]{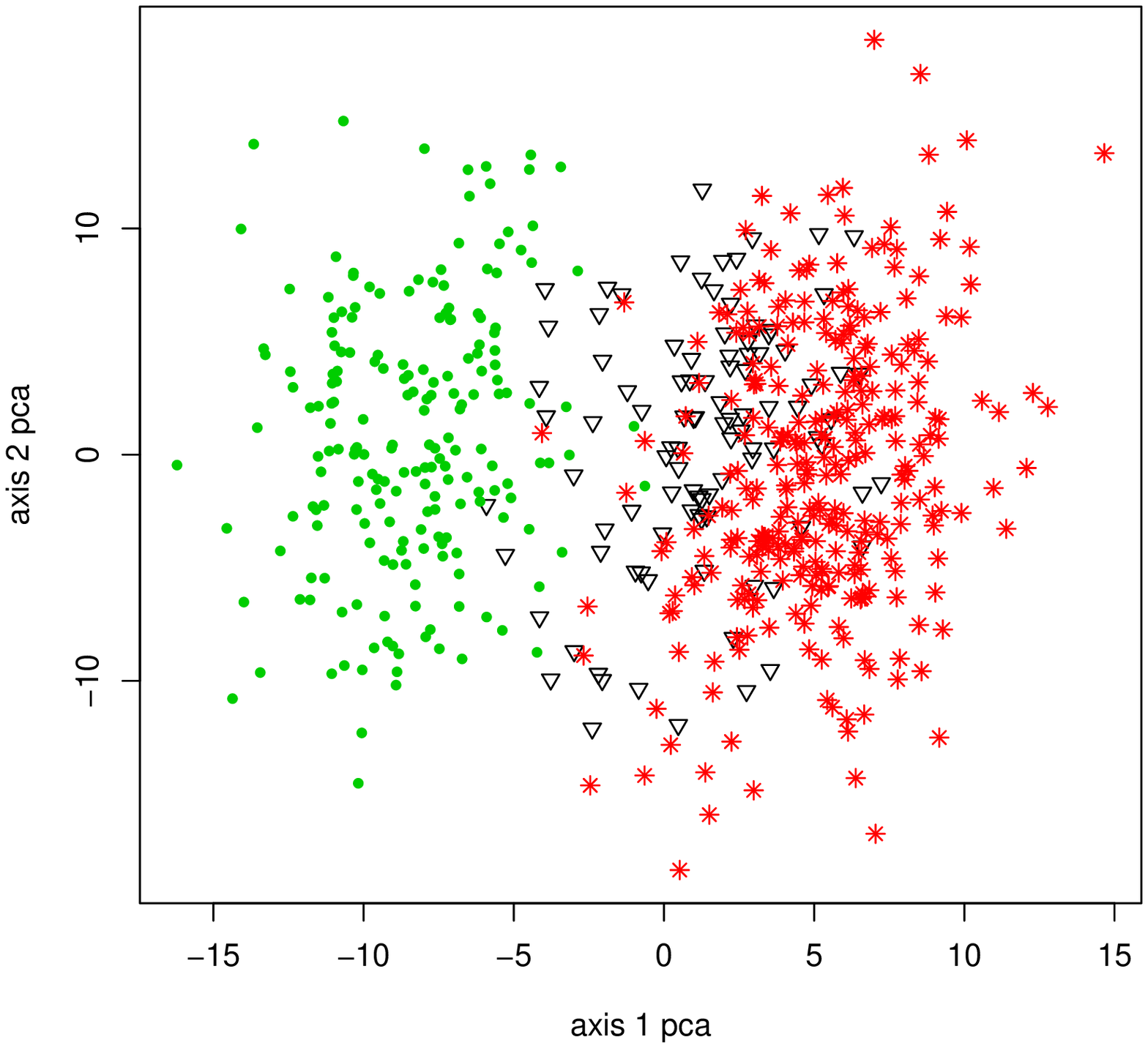} \tabularnewline
\end{tabular}\caption{\label{Fig:simul_dimens} Visualization of the simulated data: data
in their latent space (left) and data projected on the first principal
components (right).}

\end{figure}

\begin{figure}[t]
\centering{}\includegraphics[width=0.65\columnwidth,height=0.4\textheight]{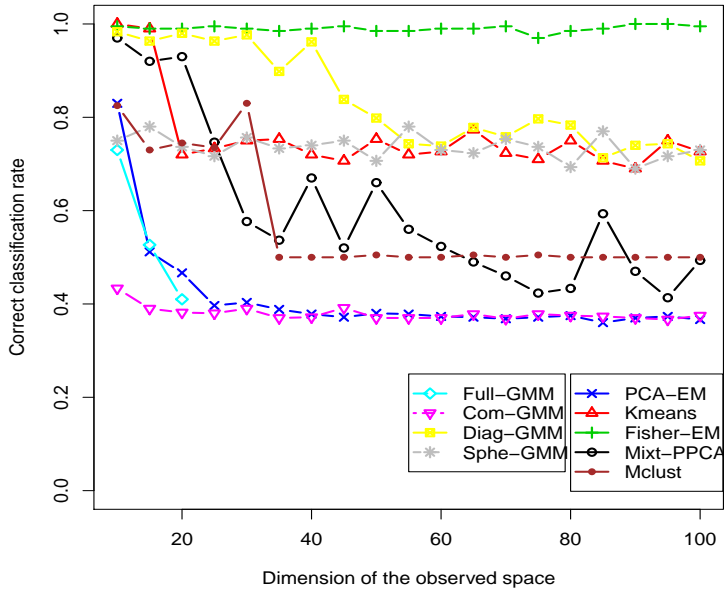}
\caption{\label{Fig:bxp} Influence of the dimension of the observed space
on the correct classification rate for Full-GMM, PCA-EM, Com-GMM,
Mixt-PPCA, k-means, Diag-GMM, Sphe-GMM and Fisher-EM algorithms.}

\end{figure}

\begin{figure}[p]
\begin{centering}
\begin{tabular}{ccc}
\includegraphics[width=0.32\columnwidth]{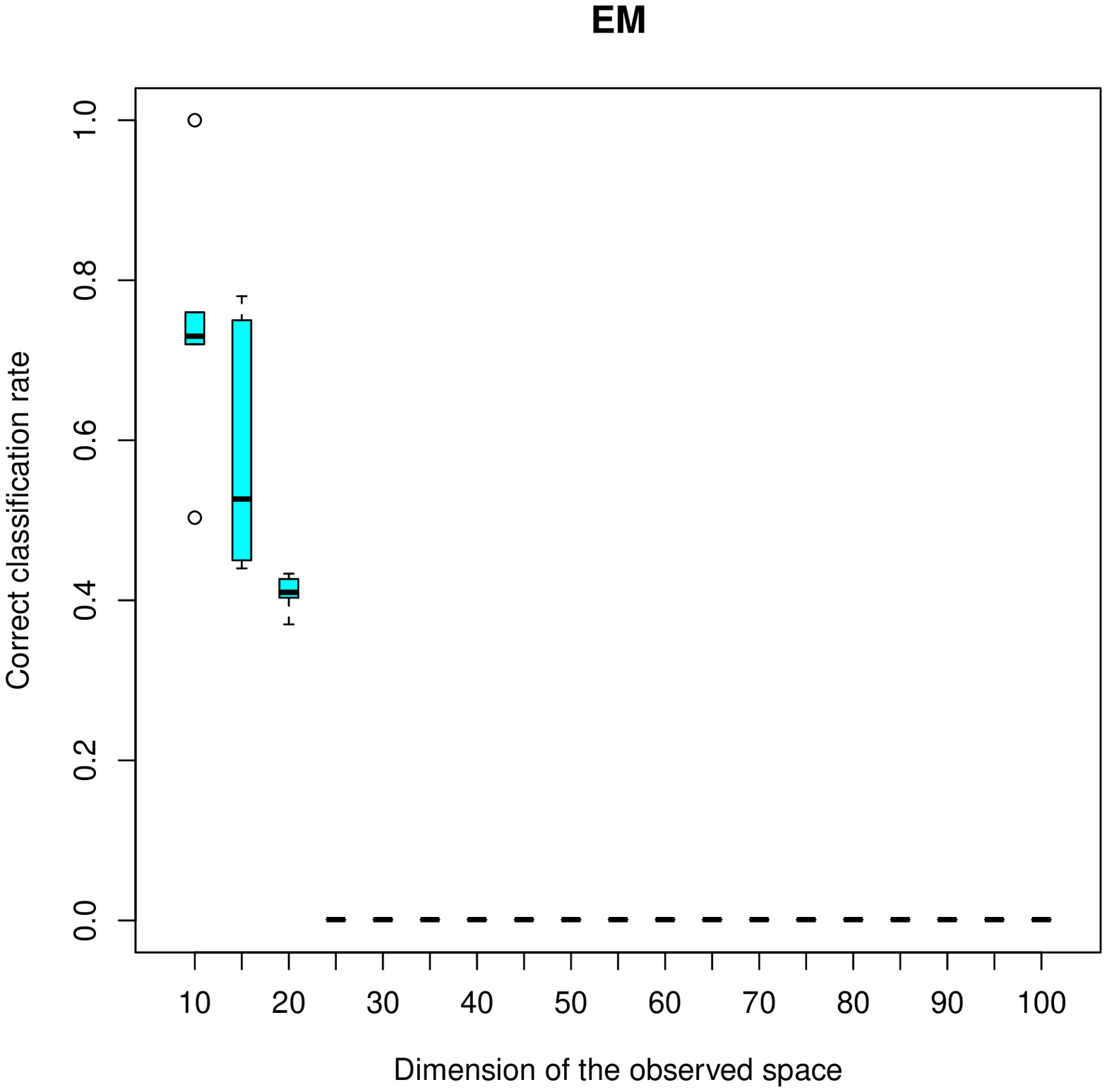}  & \includegraphics[width=0.32\columnwidth]{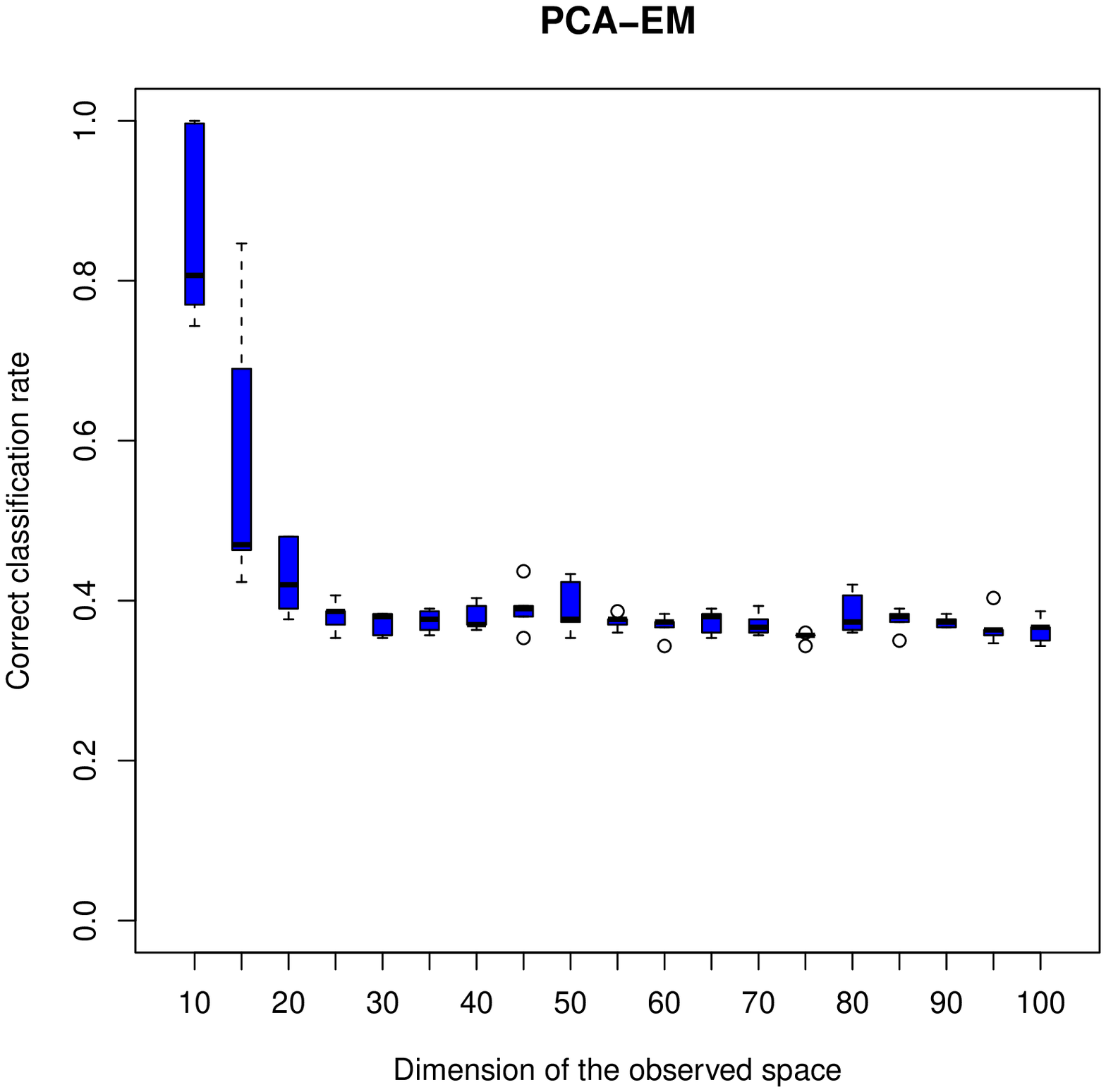}  & \includegraphics[width=0.32\columnwidth]{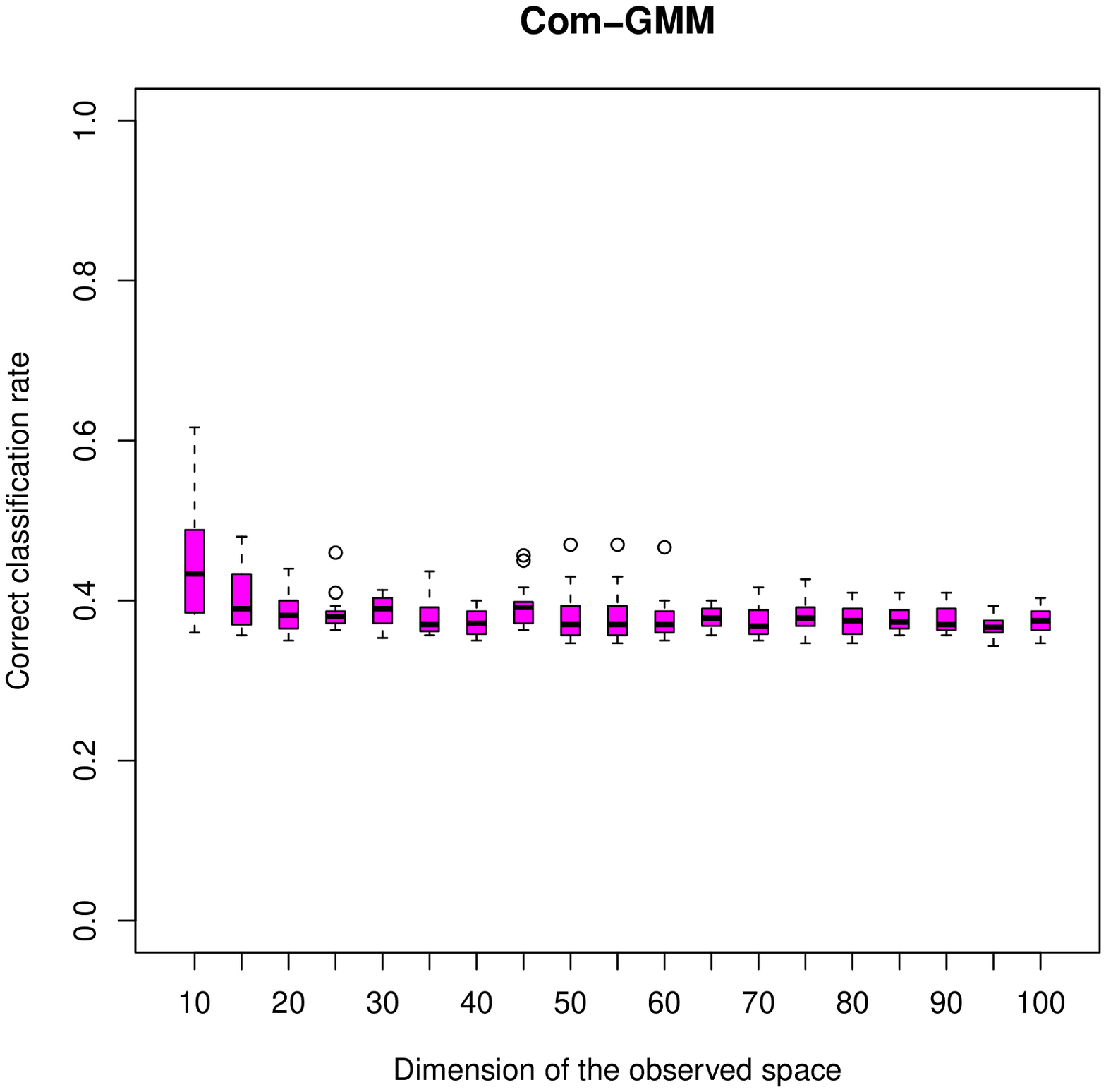}\tabularnewline
\end{tabular}\\
\begin{tabular}{ccc}
\includegraphics[width=0.32\columnwidth]{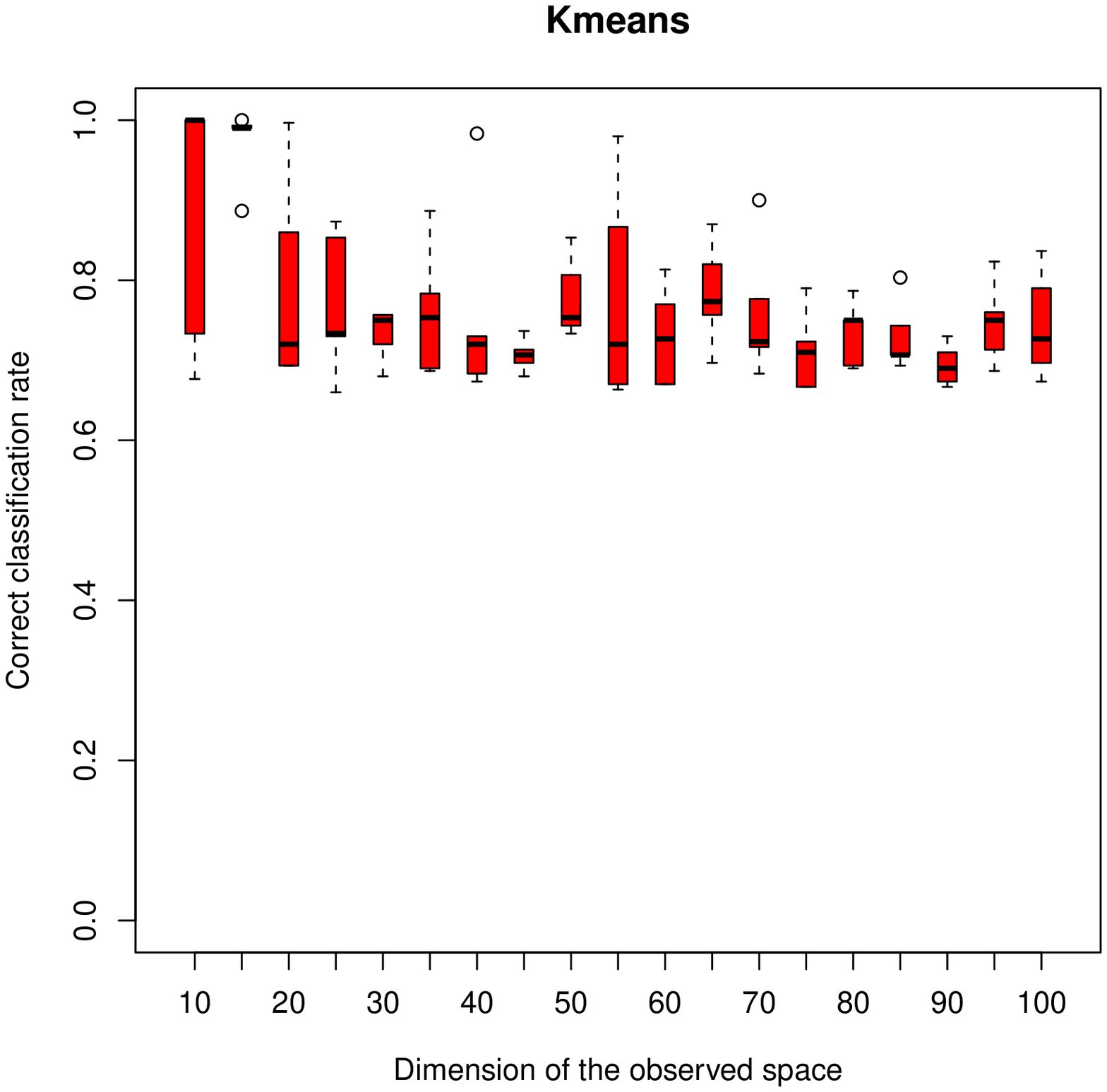} & \includegraphics[width=0.32\columnwidth]{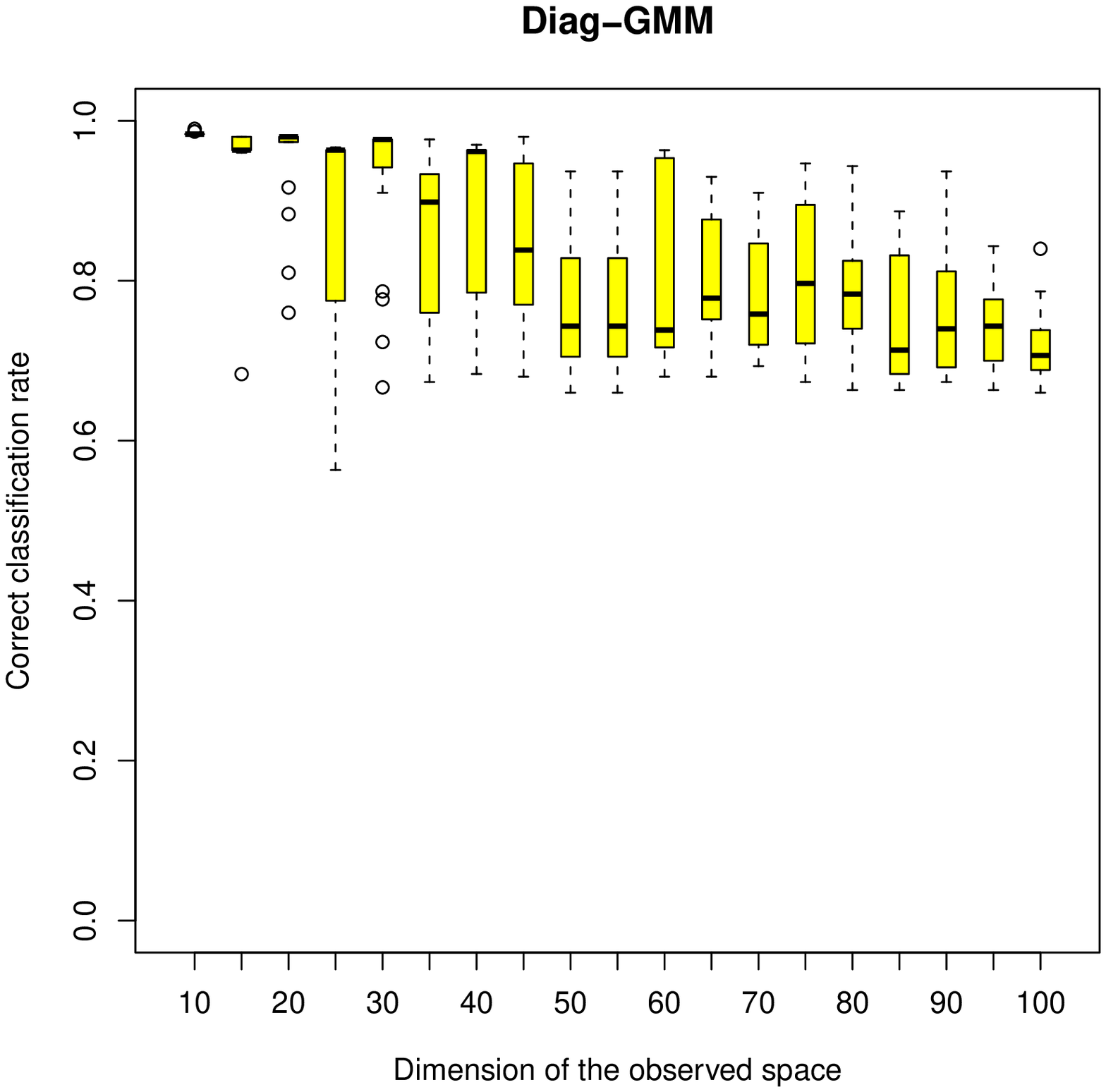} & \includegraphics[width=0.32\columnwidth]{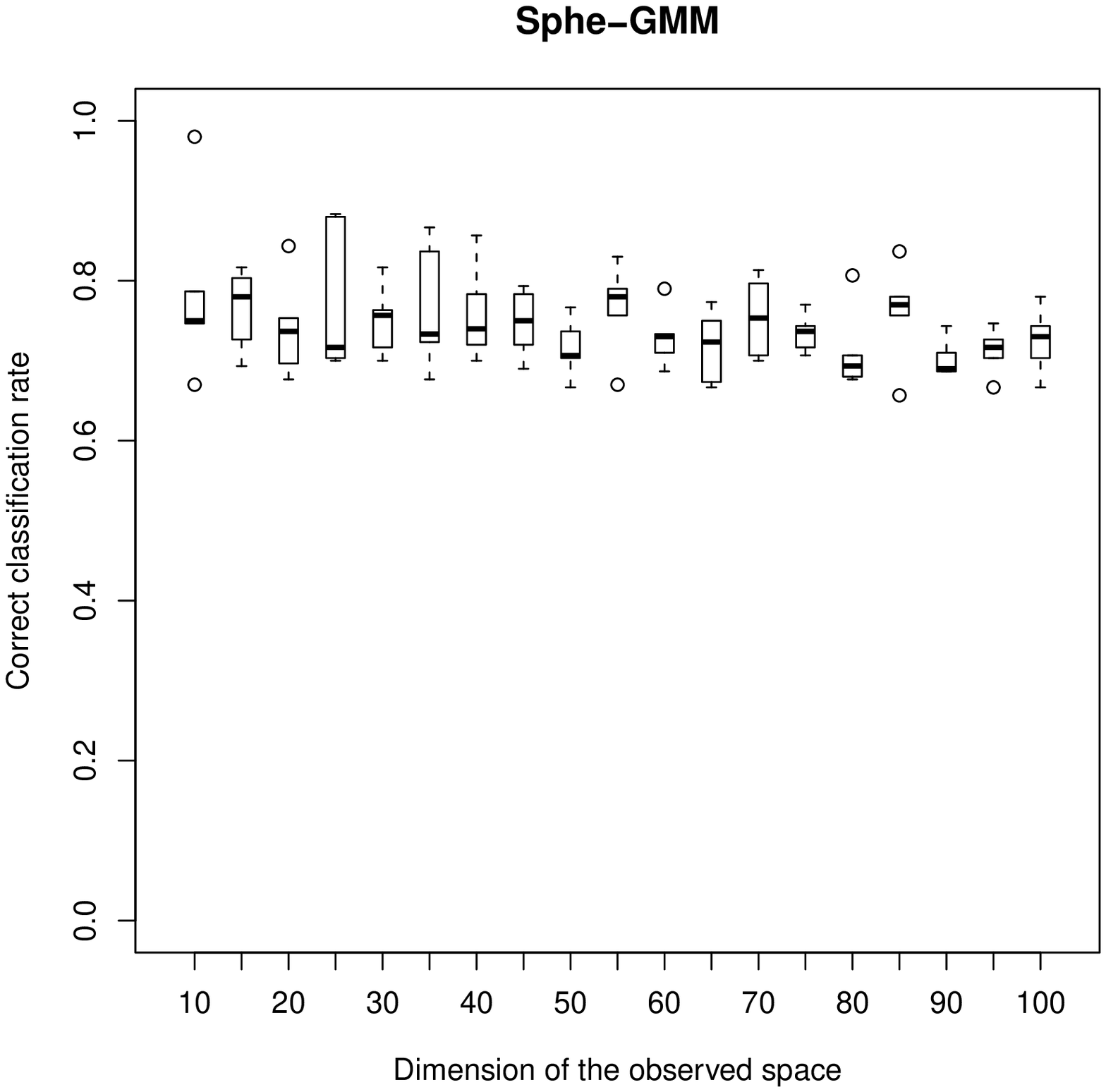}\tabularnewline
\includegraphics[width=0.32\columnwidth]{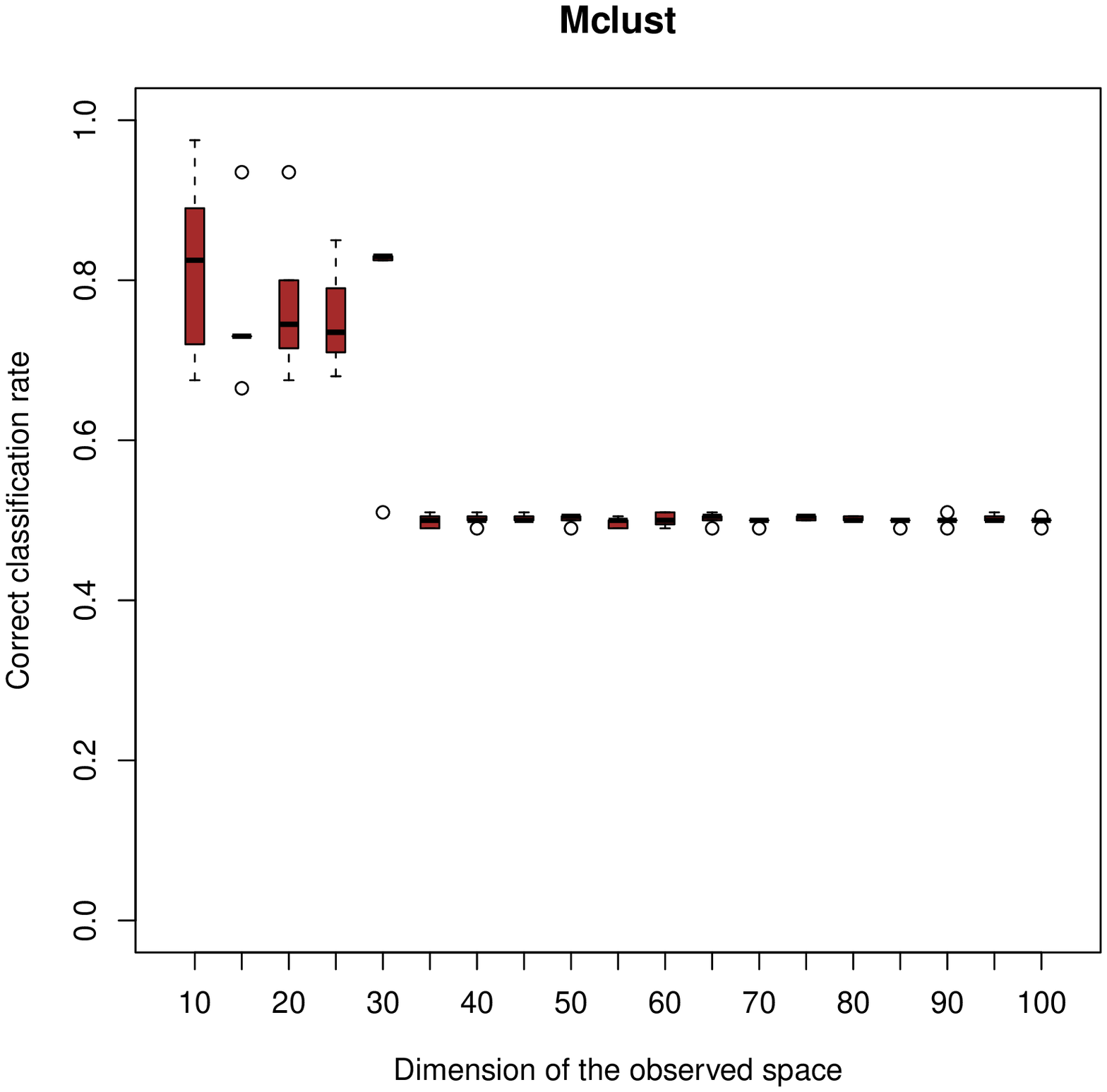} & \includegraphics[width=0.32\columnwidth]{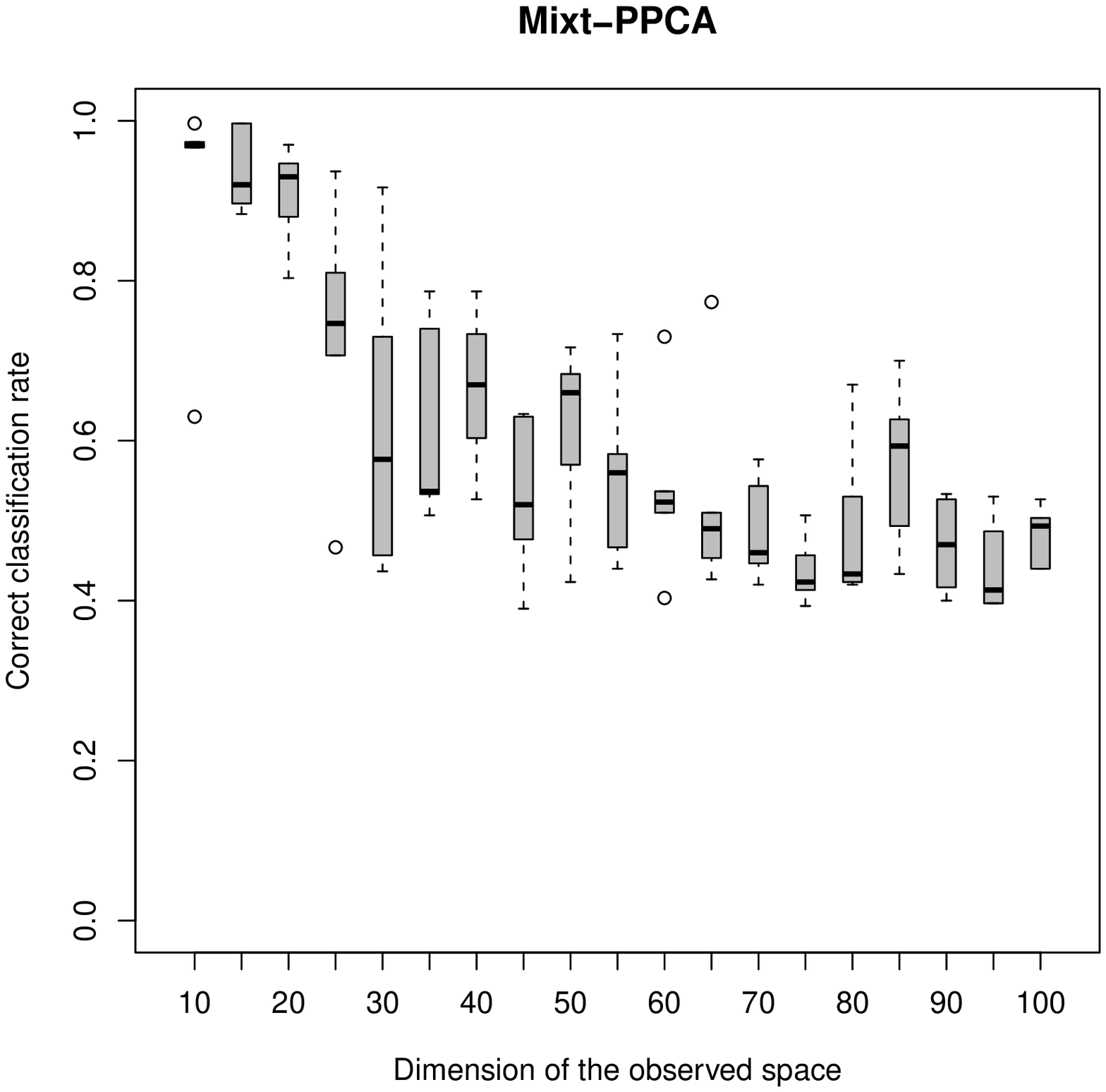} & \includegraphics[width=0.32\columnwidth]{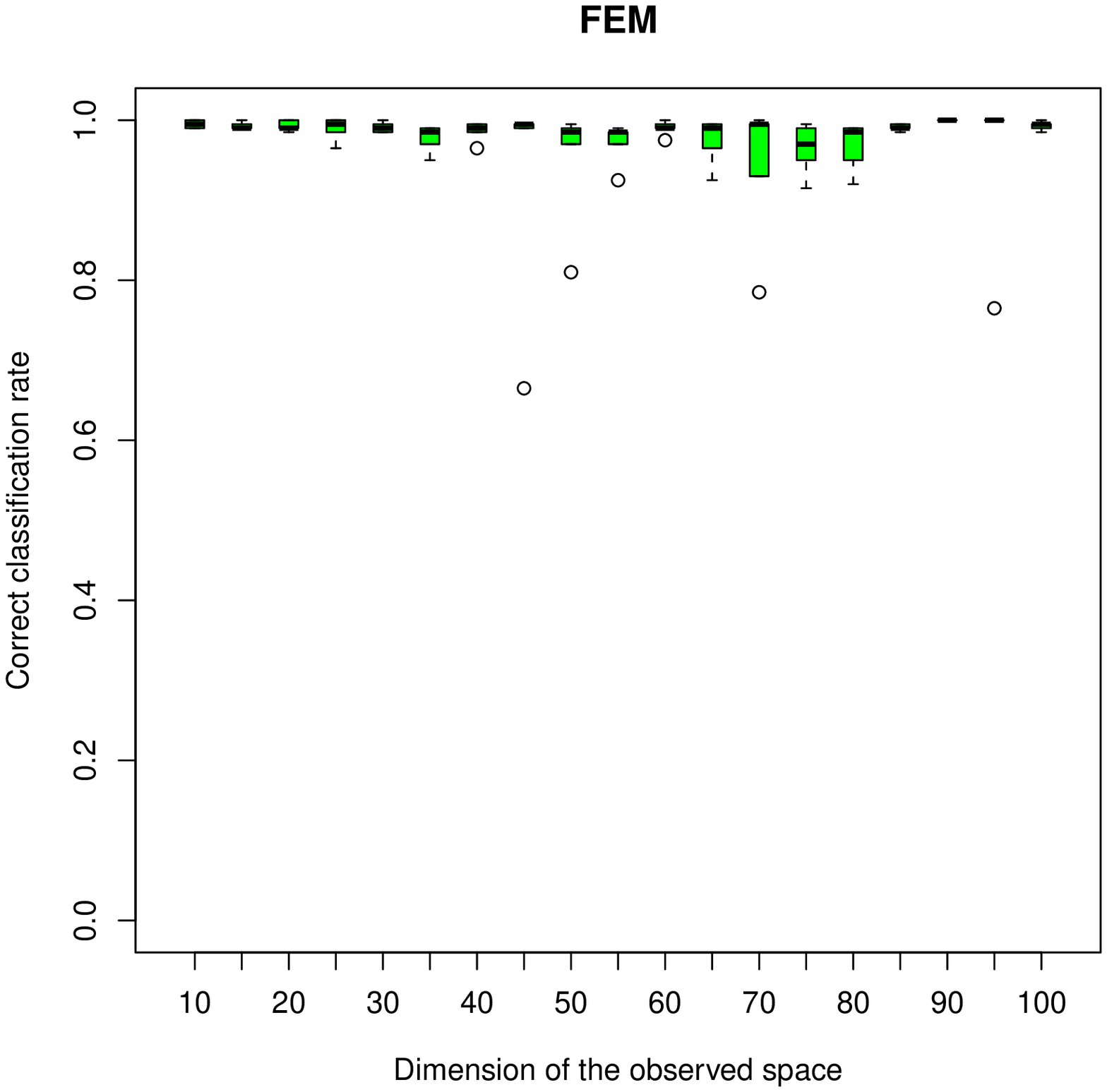}\tabularnewline
\end{tabular}
\par\end{centering}

\centering{}\caption{\label{Fig:bxp2} Boxplots of Full-GMM, PCA-EM, Com-GMM, Mixt-PPCA,
k-means, Diag-GMM, Sphe-GMM and Fisher-EM algorithms.}

\end{figure}

This second experiment aims to compare with traditional methods the
stability and the efficiency of the Fisher-EM algorithm in partitioning
high-dimensional data. Fisher-EM is compared here with the standard
EM algorithm (Full-GMM) and its parsimonious models (Diag-GMM, Sphe-GMM
and Com-GMM), the EM algorithm applied in the first components of
PCA explaining $90\%$ of the total variance (PCA-EM), the k-means
algorithm and the mixture of probabilistic principal component analyzers
(Mixt-PPCA). For this simulation, $600$ observations have been simulated
following the $\mathrm{DLM}_{[\alpha_{kj}\beta_{k}]}$ model proposed
in Section~\ref{sec:Model-DLM}. The simulated dataset is made of
$3$ unbalanced groups and each group is modeled by a Gaussian density
in a $2$-dimensional space completed by orthogonal dimensions of
Gaussian noise. The transformation matrix $W$ has been randomly simulated
such as $W^{t}W=WW^{t}=I_{p}$ and, for this experience, the dimension
of the observed space varies from $5$ to $100$. The left panel of
Figure~\ref{Fig:simul_dimens} shows the simulated data in their
$2$-dimensional latent space whereas the right panel presents the
projection of $50$-dimensional observed data on the two first axes
of PCA in the observed space. As one can observe, the representation
of the data on the two first principal components is actually not
well suited for clustering these data while it exists a representation
which discriminates perfectly the three groups. Moreover, to make
the results of each method comparable, the same randomized initialization
has been used for the $8$ algorithms. The experimental process has
been repeated $20$ times for each dimension of the observed space
in order to see both the average performances and their variances.
Figure~\ref{Fig:bxp} presents the evolution of the clustering accuracy
of each method (EM, PCA-EM, k-means, Mixt-PPCA, Fisher-EM, Diag-GMM,
Sphe-GMM and Com-GMM) according to the data dimensionality and Figure~\ref{Fig:bxp2}
presents their respective boxplots. First of all, it can be observed
that the Full-GMM, PCA-EM and Com-GMM have their performances which
decrease quickly when the dimension increases. In fact, the Full-GMM
model does not work upon the $15$th dimension and still remains unstable
in a low dimensional space as well as the Com-GMM model. Similarly,
the performances of PCA-EM fall down as the $10$th dimension. This
can be explained by the fact that the latent subspace provided by
PCA does not allow to well discriminate the groups, as already suggested
by Figure~\ref{Fig:simul_dimens}. However, the PCA-EM approach can
be used whatever the dimension is whereas Full-GMM cannot be used
as the $20$th dimension because of numerical problems linked to singularity
of the covariance matrices. Moreover, their boxplots show a large
variation on the clustering accuracy. Secondly, Sphe-GMM, Diag-GMM
and k-means present the same trend with high performances in low-dimensional
spaces which decrease until they reach a clustering accuracy of $0.75$.
Diag-GMM seems however to resist a little bit more than k-means to
the dimension increasing. Mixt-PPCA and Mclust both follow the same
tendency as the previous methods but from the $30$th dimension their
performances fall down until the clustering accuracy reaches $0.5$.
The poor performances of Mixt-PPCA can be explained by the fact that
Mixt-PPCA models each group in a different subspace whereas the model
used for simulating the observations assumes a common discriminative
subspace. Finally, Fisher-EM appears to be more effective than the
other methods and, more importantly, it remains very stable while
the data dimensionality increases. Furthermore, the boxplot associated
with the Fisher-EM results suggests that it is a steady algorithm
which succeeds in finding out the discriminative latent subspace of
the data even with random initializations.

\subsection{Simulation study: model selection}

This last experiment on simulations aims to study the performance
of BIC for both model and component number selection. For this experiment,
$4$ Gaussian components of $75$ observations each have been simulated
according to the DLM$_{[\alpha_{k}\beta]}$ model in a $3$-dimensional
space completed by 47 orthogonal dimensions of Gaussian noise (the
dimension of the observation space is therefore $p=50$). The transformation
matrix $W$ has been again randomly simulated such as $W^{t}W=WW^{t}=I_{p}$.
Table \ref{Tab:BIC} presents the BIC values for the family of DLM
models and, in a comparative purpose, the BIC values for 7 other methods
already used in the last experiments: EM with the Full-GMM, Diag-GMM,
Sphe-GMM and Com-GMM models, Mixt-PPCA, Mclust~\cite{Fraley99} (with
model {[}EEE{]} which is the most appropriate model for these data)
and PCA-EM. Moreover, BIC is computed for different partition numbers
varying between 2 and 6 clusters. First of all, one can observe that
the BIC values linked to the models which are different from the DLM
model are very low compared to the DLM models. This suggests that
the models which best fit the data are the DLM models. Secondly, 8
of the 12 DLM models select the right number of components ($K=4$).
In particular, the DLM models which assume a common variance between
each cluster outside the latent subspace (models DLM$_{[.\beta]}$)
all select the 4 clusters. The other methods under-estimate the number
of clusters. BIC has the largest value for the DLM$_{[\alpha_{k}\beta]}$
model with $4$ components which is actually the model used for simulating
the data. Finally, the right-hand side of Figure \ref{fig:Projection_3D}
presents the projection of the data on the discriminative subspace
of $3$ dimensions estimated by Fisher-EM with the DLM$_{[\alpha_{k}\beta]}$
model whereas the left-hand side figure represents the projection
of the data on the $3$ first principal components of PCA. As one
can observe, in the PCA case, the axes separate only 2 groups, which
is in accordance with the model selection pointed out by BIC for this
method. Conversely, in the Fisher-EM case, the 3 discriminative axes
separate well the $4$ groups and such a representation could clearly
help the practitioner in understanding the clustering results. 

\begin{table}[p]
\begin{centering}
\begin{tabular}{|l|ccccc|}
\cline{2-6} 
\multicolumn{1}{l|}{} & \multicolumn{5}{c|}{{\small number of components}}\tabularnewline
\hline
{\small methods} & 2 & 3 & 4 & 5 & 6\tabularnewline
\hline
\hline 
{\small DLM$_{[\Sigma_{k}\beta_{k}]}$ } & -114.6172 & -114.5996 & -115.4875 & -115.6439 & -116.7350\tabularnewline
{\small DLM$_{[\Sigma_{k}\beta]}$ } & -116.9006 & -117.4791 & -115.0215  & -116.0837 & -116.8912\tabularnewline
{\small DLM$_{[\Sigma\beta_{k}]}$ } & -116.9007 & -116.9568 & -118.5480  & -119.3458 & -120.0418\tabularnewline
{\small DLM$_{[\Sigma\beta]}$ } & -120.9006 & -120.2496 & -119.8787  & -120.6301 & -120.6166\tabularnewline
DLM$_{[\alpha_{kj}\beta_{k}]}$  & -116.5750 & -114.9578 & -114.7986  & -115.6658 & -116.5750\tabularnewline
DLM$_{[\alpha_{kj}\beta]}$  & -121.8565 & -117.4968 & -115.1525 & -115.8571  & -117.7598\tabularnewline
DLM$_{[\alpha_{k}\beta_{k}]}$  & -115.2290 & -115.0808 & -114.7934  & -115.6603 & -116.5027\tabularnewline
DLM$_{[\alpha_{k}\beta]}$  & -121.8565 & -117.6217 & \textbf{-114.1471 } & -115.7909 & -116.6739\tabularnewline
DLM$_{[\alpha_{j}\beta_{k}]}$  & -116.7295 & -118.4031 & -119.2610  & -120.7783 & -122.0415\tabularnewline
DLM$_{[\alpha_{j}\beta]}$  & -123.3448 & -120.9052 & -120.4578  & -121.1248 & -121.9098\tabularnewline
DLM$_{[\alpha\beta_{k}]}$  & -118.7295 & -118.3865 & -119.7309  & -121.5124 & -123.1506\tabularnewline
DLM$_{[\alpha\beta]}$  & -123.3443 & -120.8989 & -120.4347  & -121.7451 & -123.2730\tabularnewline
\hline
Full-GMM  & -177.6835 & -252.8908 & -440.6805  & -3005.531 & -4367.653\tabularnewline
Com-GMM  & -150.0518 & -193.0624 & -231.4546  & -270.2741 & -309.7809\tabularnewline
Mixt-PPCA  & -151.1561 & -176.3615 & -201.5709  & -226.7789 & -251.9931\tabularnewline
Diag-GMM & -189.8663 & -262.7929 & -419.360  & -407.2755  & -466.6955 \tabularnewline
Sphe-GMM & -190.9812 & -258.3534 & -302.8030  & -382.7666 & -433.3845\tabularnewline
\hline
PCA-EM & -127.0857 & -173.7174 & -247.3894  & -364.9811 & -594.4000\tabularnewline
Mclust$_{[EII]}$ & -229.3360 & -229.3024 & -230.0155  & -230.8431 & -231.5140\tabularnewline
\hline
\end{tabular}
\par\end{centering}

\caption{\label{Tab:BIC} BIC values for model selection.}

\end{table}
\begin{figure}[p]
\begin{centering}
\begin{tabular}{cc}
\textbf{\tiny Projection on the 3 first principal components} & \textbf{\tiny Projection on the discriminative axes estimated by Fisher-EM}\tabularnewline
\includegraphics[bb=0bp 20bp 504bp 440bp,clip,width=0.49\columnwidth]{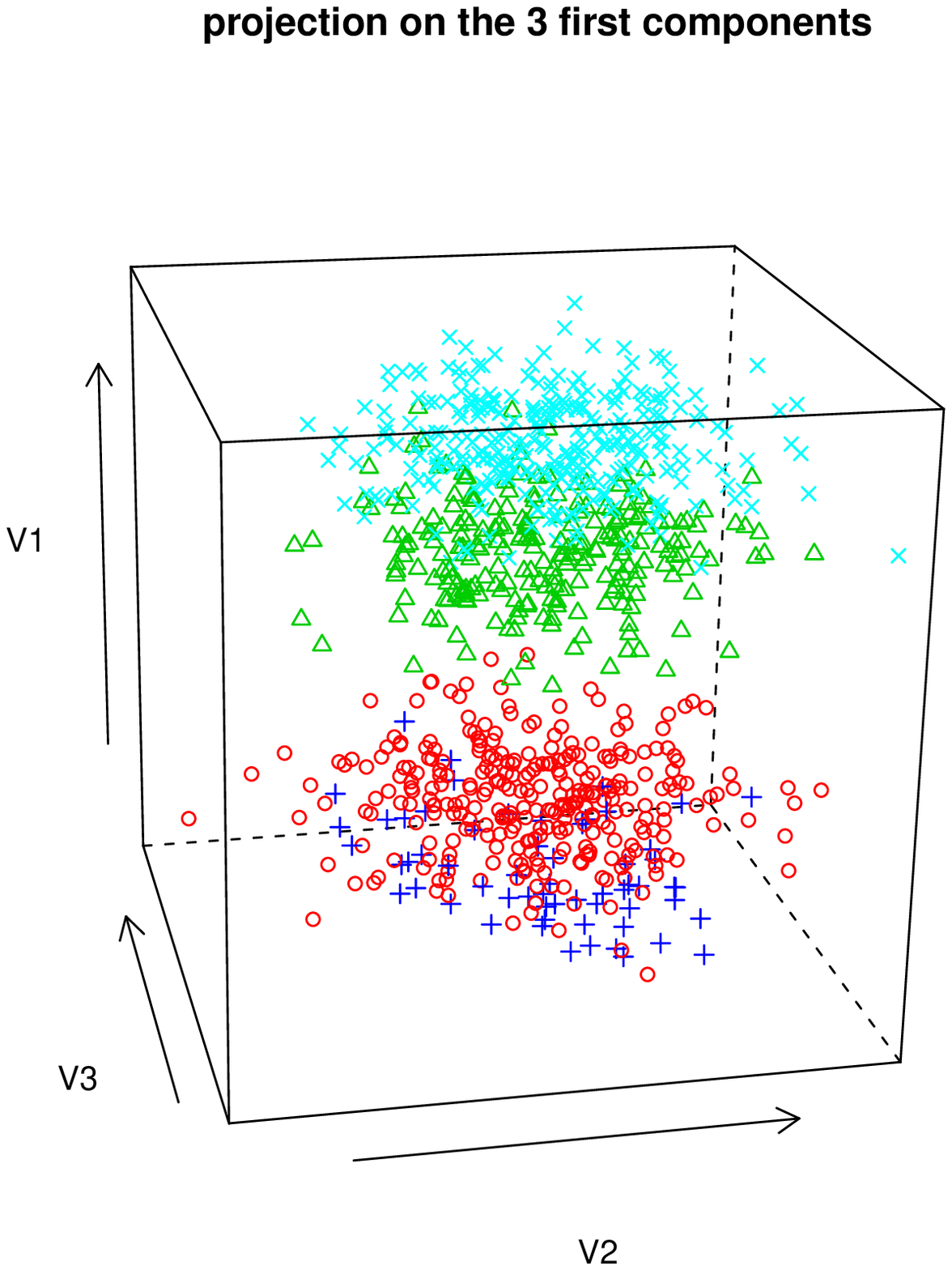} & \includegraphics[bb=0bp 20bp 504bp 440bp,clip,width=0.49\columnwidth]{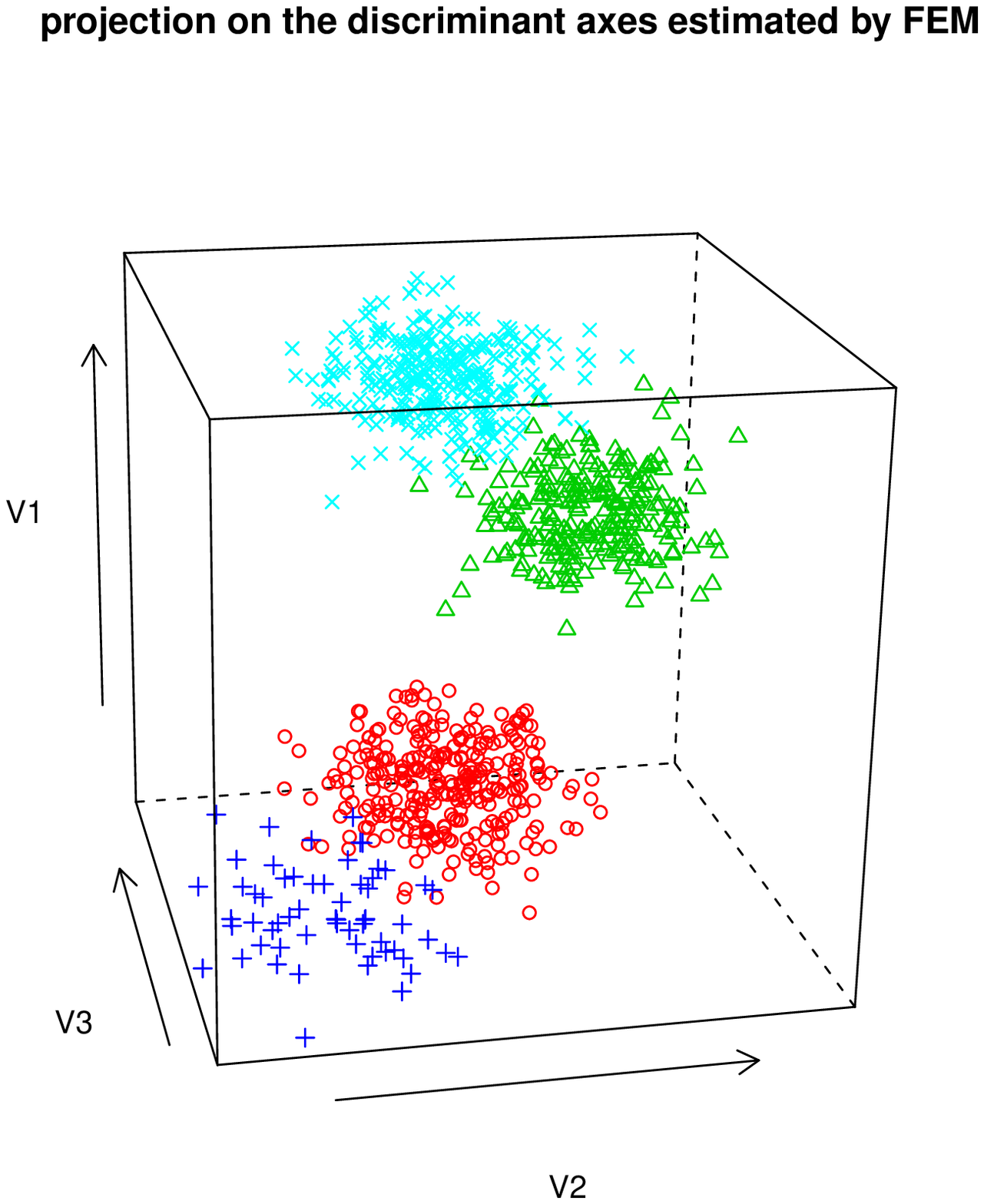}\tabularnewline
\end{tabular}
\par\end{centering}

\caption{\label{fig:Projection_3D}Projection of the data in the 3 first principal
components of PCA (left) and in the discriminant subspace estimated
by Fisher-EM with the DLM$_{[\alpha_{k}\beta]}.$}

\end{figure}

\subsection{Real data set benchmark}

This last experimental paragraph will focus on comparing on real-world
datasets the efficiency of Fisher-EM with several linear and nonlinear
existing methods, including the most recent ones. On the one hand,
Fisher-EM will be compared to the 8 already used clustering methods:
EM with the Full-GMM, Diag-GMM, Sphe-GMM and Com-GMM models, Mixt-PPCA,
Mclust (with its most adapted model for these data), PCA-EM and k-means.
On the other hand, the new Fisher-EM challengers will be k-means computed
on the two first components of PCA (PCA--k-means), an heteroscedastic
factor mixture analyzer (HMFA) method~\cite{Montanari2010} and three
discriminative versions of k-means: LDA--k-means~\cite{Ding07},
Dis--k-means and DisCluster (see~\cite{Ye07} for more details).
The comparison has been made on $7$ different benchmark datasets
coming mostly from the UCI machine learning repository: 
\begin{itemize}
\item The \textbf{chironomus} data contain $148$ larvae which are split
up into $3$ species and described by $17$ morphometric attributes.
This dataset is described in detailed in~\cite{Montanari2010}. 
\item The \textbf{wine} dataset is composed by $178$ observations which
are split up into $3$ classes and characterized by $13$ variables. 
\item The \textbf{iris} dataset which is made of $3$ different groups and
described by $4$ variables. This dataset has been described in detail
in Section~\ref{sub:An-introductory-example:}. 
\item The \textbf{zoo} dataset includes $7$ families of $101$ animals
characterized by $16$ variables. 
\item The \textbf{glass} data are composed by $214$ observations belonging
to $6$ different groups and described by $7$ variables. 
\item The $4435$ \textbf{satellite images} are split up into $6$ classes
and are described by $36$ variables. 
\item Finally, the last dataset is the\textbf{ USPS data} where only the
classes which are difficult to discriminate are considered. Consequently,
this dataset consists of 1756 records (rows) and 256 attributes divided
in $3$ classes (numbers 3, 5 and 8).
\end{itemize}
Table~\ref{Tab:UCI} presents the average clustering accuracies and
the associated standard deviations obtained for the $12$ DLM models
and for the methods already used in the previous experiments. The
results for the $19$ first methods of the table have been obtained
by averaging $20$ trials with random initializations except for Mclust
which has its own deterministic initialization and this explains the
lack of standard deviation for Mclust. Similarly, Table~\ref{tab:UCI-2}
provides the clustering accuracies found in the literature for the
recent methods on the same datasets. It is important to notice that
the results of Table~\ref{tab:UCI-2} have been obtained in slightly
different benchmarking situations. Missing values in Table~\ref{Tab:UCI}
are due to non-convergence of the algorithms whereas missing values
in Table~\ref{tab:UCI-2} are due to the unavailability of the information
for the concerned method. First of all, one can remark that Fisher-EM
outperforms the other methods for most of the UCI datasets such as
wine, iris, zoo, glass, satimage and usps358 datasets. Finally, it
is interesting from a practical point of view to notice that some
DLM models work well in most situations. In particular, the DLM$_{[.\beta]}$
models, in which the variance outside the discriminant subspace is
common to all groups, provide very satisfying results for all the
datasets considered here.

\begin{table}[p]
\noindent \begin{raggedright}
\begin{tabular}{|l|c|c|c|c|c|c|c|}
\hline 
{\small Method} & {\small iris } & {\small wine } & {\small chironomus} & {\small zoo } & {\small glass } & {\small satimage } & {\small usps358}\tabularnewline
\hline
\hline 
{\small DLM$_{[\Sigma_{k}\beta_{k}]}$ } & {\small 94.8$\pm2.3$} & {\small 96.1$\pm0.0$} & {\small 91.7$\pm5.2$} & {\small -} & {\small 39.5}\textbf{\small $\pm1.8$} & {\small 64.6$\pm2.2$} & {\small 77.9$\pm7.1$}\tabularnewline
{\small DLM$_{[\Sigma_{k}\beta]}$ } & {\small 96.7$\pm0.0$} & {\small 95.5$\pm0.0$} & {\small 97.2$\pm0.1$} & {\small -} & {\small 39.9}\textbf{\small $\pm1.4$} & {\small 65.7$\pm0.8$} & {\small 70.0$\pm8.5$}\tabularnewline
{\small DLM$_{[\Sigma\beta_{k}]}$ } & {\small 81.9$\pm2.4$} & {\small 94.1$\pm1.3$} & {\small 91.8$\pm2.4$} & {\small 73.3$\pm5.5$} & {\small 40.6}\textbf{\small $\pm0.9$} & {\small 62.7$\pm1.9$} & {\small 74.1$\pm9.4$}\tabularnewline
{\small DLM$_{[\Sigma\beta]}$ } & {\small 77.8$\pm3.7$ } & {\small 93.6$\pm1.6$} & {\small 89.1$\pm6.3$} & {\small 78.4$\pm6.4$} & {\small 38.5}\textbf{\small $\pm1.9$} & \textbf{\small 68.0}{\small $\pm1.7$} & {\small 66.4$\pm8.7$}\tabularnewline
{\small DLM$_{[\alpha_{kj}\beta_{k}]}$ } & {\small 89.3$\pm0.0$} & {\small 95.5$\pm0.0$} & {\small 86.1$\pm6.3$} & {\small 73.7$\pm3.5$} & \textbf{\small 42.0$\pm2.2$} & {\small 65.5$\pm2.0$} & {\small 74.8$\pm9.1$}\tabularnewline
{\small DLM$_{[\alpha_{kj}\beta]}$ } & {\small 91.1$\pm1.4$} & {\small 94.2$\pm0.2$} & {\small 96.3$\pm7.0$} & {\small 70.4$\pm5.3$} & {\small 40.1$\pm3.3$} & {\small 65.0$\pm2.9$} & {\small 68.7$\pm11.1$}\tabularnewline
{\small DLM$_{[\alpha_{k}\beta_{k}]}$ } & {\small 96.1$\pm2.2$} & {\small 95.5$\pm0.0$} & {\small 87.5$\pm3.9$} & {\small 73.7$\pm3.6$} & {\small 39.2$\pm3.7$} & {\small 64.4$\pm2.1$} & {\small 76.2$\pm7.6$}\tabularnewline
{\small DLM$_{[\alpha_{k}\beta]}$ } & \textbf{\small 98}{\small .0}\textbf{\small $\pm0.0$} & {\small 94.3$\pm0.0$} & {\small 96.2$\pm6.8$} & {\small 72.8$\pm3.1$} & {\small 40.1$\pm2.0$} & {\small 58.9$\pm5.3$} & {\small 74.1$\pm10.6$}\tabularnewline
{\small DLM$_{[\alpha_{j}\beta_{k}]}$ } & {\small 79.3$\pm3.6$} & {\small 93.8$\pm2.8$} & {\small 83.7$\pm3.9$} & {\small 72.5$\pm7.0$} & {\small 39.4$\pm0.9$} & {\small 62.4$\pm1.8$} & {\small 77.8$\pm8.2$}\tabularnewline
{\small DLM$_{[\alpha_{j}\beta]}$ } & {\small 72.7$\pm6.5$} & {\small 92.6$\pm3.2$} & {\small 89.7$\pm6.3$} & \textbf{\small 80}{\small .1}\textbf{\small $\pm4.2$} & {\small 39.5$\pm1.5$} & \textbf{\small 68.0$\pm1.5$} & {\small 74.2$\pm11.2$}\tabularnewline
{\small DLM$_{[\alpha\beta_{k}]}$ } & {\small 80.3$\pm4.3$} & {\small 96.3$\pm1.9$} & {\small 83.6$\pm8.5$} & {\small 70.2$\pm7.0$} & {\small 39.1$\pm2.4$} & {\small 62.4$\pm2.5$} & \textbf{\small 81.2}{\small $\pm6.5$}\tabularnewline
{\small DLM$_{[\alpha\beta]}$ } & {\small 79.8$\pm4.0$} & \textbf{\small 97.1$\pm0.0$} & {\small 89.8$\pm6.6$} & {\small 78.0$\pm4.8$} & {\small 38.4$\pm1.3$} & {\small 67.9}\textbf{\small $\pm1.3$} & {\small 72.8$\pm9.8$}\tabularnewline
\hline 
{\small Full-GMM } & {\small 79.0$\pm5.7$} & {\small 60.9$\pm7.7$} & {\small 44.8$\pm4.1$} & {\small -} & {\small 38.3$\pm2.1$} & {\small 35.9$\pm3.1$} & {\small -}\tabularnewline
{\small Com-GMM} & {\small 57.6$\pm18.3$} & {\small 61.0$\pm14.9$} & {\small 51.9$\pm10.9$} & {\small 59.9$\pm10.3$} & {\small 38.3$\pm3.1$} & {\small 26.1$\pm1.5$} & {\small 38.2$\pm1.1$}\tabularnewline
{\small Mixt-PPCA } & {\small 89.1$\pm4.2$} & {\small 63.1$\pm7.9$} & {\small 56.3$\pm4.5$} & {\small 50.9$\pm6.5$} & {\small 37.0$\pm2.3$} & {\small 40.6$\pm4.7$} & {\small 53.1$\pm9.6$}\tabularnewline
{\small Diag-GMM} & {\small 93.5$\pm1.3$} & {\small 94.6$\pm2.8$} & {\small 92.1$\pm4.2$} & {\small 70.9$\pm12.3$} & {\small 39.1$\pm2.4$} & {\small 60.8$\pm5.2$} & {\small 45.9$\pm9.1$}\tabularnewline
{\small Sphe-GMM} & {\small 89.4$\pm0.4$} & {\small 96.6$\pm0.0$} & {\small 85.9$\pm9.9$} & {\small 69.4$\pm5.4$} & {\small 37.0$\pm2.1$} & {\small 60.2$\pm7.5$} & {\small 78.7$\pm11.2$}\tabularnewline
{\small PCA-EM } & {\small 66.9$\pm9.9$} & {\small 64.4$\pm5.7$} & {\small 66.1$\pm4.0$ } & {\small 61.9$\pm6.2$} & {\small 39.0$\pm1.7$} & {\small 56.2$\pm4.2$} & {\small 67.6$\pm11.2$}\tabularnewline
{\small k-means } & {\small 88.7$\pm4.0$} & {\small 95.9$\pm4.0$} & {\small 92.9$\pm6.0$} & {\small 68.0$\pm7.4$} & {\small 41.3$\pm2.8$} & {\small 66.6$\pm4.1$} & {\small 74.9$\pm13.9$}\tabularnewline
\hline 
{\small Mclust} & {\small 96.7} & \textbf{\small 97.1 } & \textbf{\small 97.9} & {\small 65.3} & {\small 41.6} & {\small 58.7} & {\small 55.5}\tabularnewline
\textit{\footnotesize Model name} & \textit{\small (VEV)} & \textit{\small (VVI)} & \textit{\small (EEE)} & \textit{\small (EII)} & \textit{\small (VEV)} & \textit{\small (VVV)} & \textit{\small (EEE)}\tabularnewline
\hline
\end{tabular}
\par\end{raggedright}

\caption{\label{Tab:UCI}Clustering accuracies and their standard deviations
(in percentage){\small{} }on the UCI datasets averaged on 20 trials.
No standard deviation is reported for Mclust since its initialization
procedure is deterministic and always provides the same initial partition.}

\end{table}

\begin{table}[p]
\centering{}\begin{tabular}{|l|c|c|c|c|c|c|c|}
\hline 
{\small Method} & {\small wine } & {\small iris } & {\small chironomus } & {\small zoo } & {\small glass } & {\small satimage } & {\small usps358}\tabularnewline
\hline
\hline 
{\small PCA--k-means~\cite{Ding07} } & {\small 70.2 } & {\small 88.7 } & {\small - } & {\small 79.2 } & {\small 47.2 } & {\small - } & {\small -}\tabularnewline
\hline 
{\small LDA--k-means~\cite{Ding07}} & {\small 82.6 } & {\small 98.0}\textbf{\small{} } & {\small - } & {\small 84.2}\textbf{\small{} } & {\small 51.0}\textbf{\small{} } & {\small - } & {\small -}\tabularnewline
\hline 
{\small Dis--k-means~\cite{Ye07}} & {\small - } & {\small - } & {\small - } & {\small - } & {\small - } & {\small 65.1} & {\small -}\tabularnewline
\hline 
{\small DisCluster~\cite{Ye07}} & {\small - } & {\small - } & {\small - } & {\small - } & {\small - } & {\small 64.2} & {\small -}\tabularnewline
\hline 
{\small HMFA~\cite{Montanari2010}} & {\small - } & {\small - } & {\small 98.7}\textbf{\small{} } & {\small - } & {\small - } & {\small -} & {\small -}\tabularnewline
\hline
\end{tabular}\caption{\label{tab:UCI-2}Clustering accuracies (in percentage) on UCI datasets
found in the literature (these results have been obtained with slightly
different experimental setups). }

\end{table}

\section{Application to mass spectrometry\label{sec:Application-to-mass-spectrometry}}

In this last experimental section, the Fisher-EM procedure is applied
to the problem of cancer detection using MALDI mass spectrometry.
MALDI mass spectrometry is a non-invasive biochemical technique which
is useful in searching for disease biomarkers, assessing tumor progression
or evaluating the efficiency of drug treatment, to name just a few
applications. In particular, a promising field of application is the
early detection of the colorectal cancer, which is one of the principal
causes of cancer-related mortality, and MALDI imaging could in few
years avoid in some cases the colonoscopy method which is invasive
and quite expensive.

\subsection{Data and experimental setup}

\begin{figure}
\begin{centering}
\includegraphics[bb=0bp 300bp 576bp 525bp,clip,width=0.8\columnwidth]{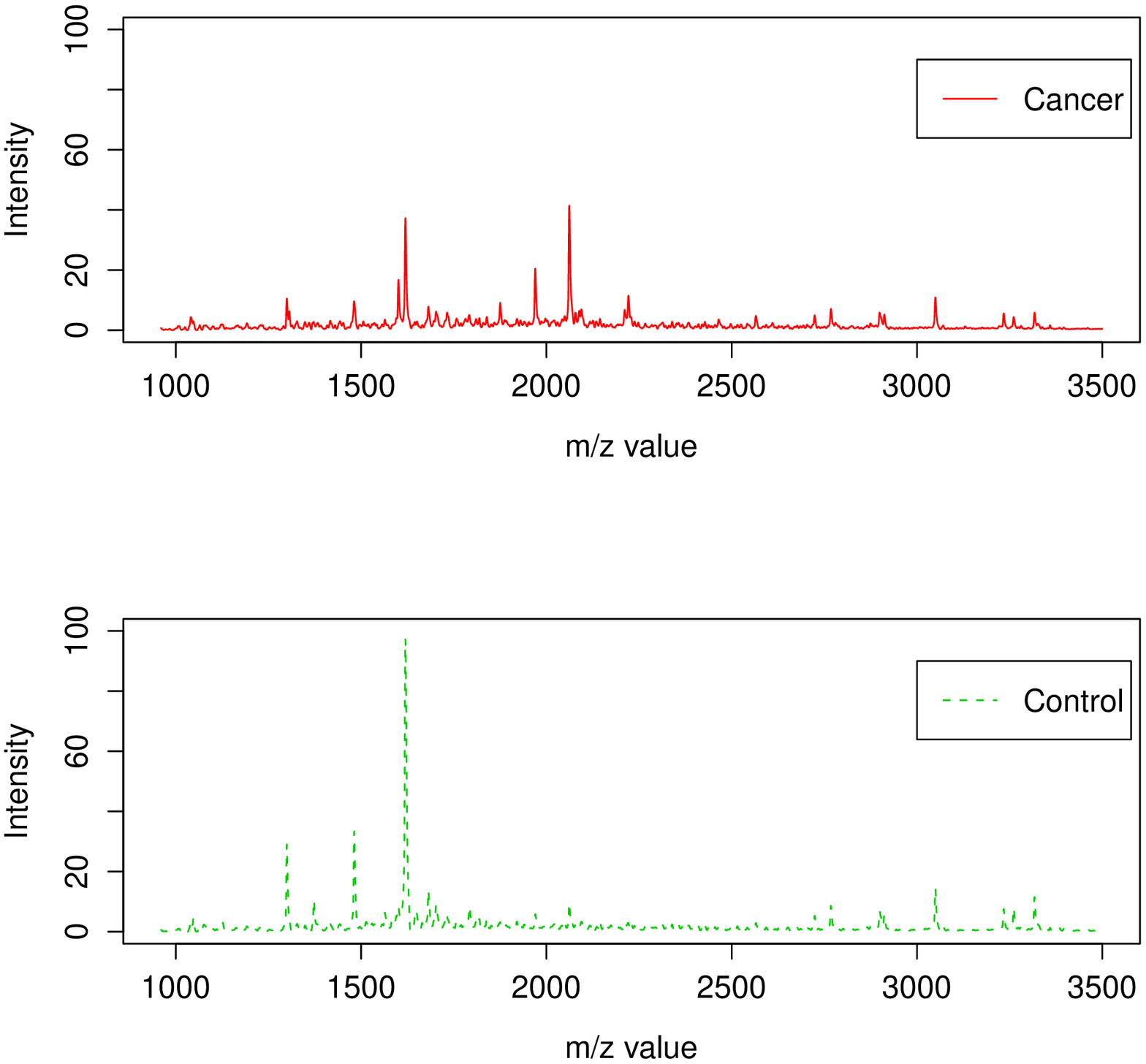}
\par\end{centering}

\begin{centering}
\includegraphics[bb=0bp 15bp 576bp 235bp,clip,width=0.8\columnwidth]{images/Maldi_colmeans}
\par\end{centering}

\caption{\label{fig:Maldi-1}Estimated mean spectra of the cancer class (up)
and of the control class (bottom) on the m/z interval 900--3500 Da.}

\end{figure}

The MALDI2009 dataset has been provided by Theodore Alexandrov from
the Center for Industrial Mathematics (University of Bremen, Germany)
and is made of 112 spectra of length 16~331. Among the 112 spectra,
64 are spectra from patients with the colorectal cancer (referred
to as cancer hereafter) and 48 are spectra from healthy persons (referred
to as control). Each of the 112 spectra is a high-dimensional vector
of 16~331 dimensions which covers the mass-to-charge (m/z) ratios
from 960 to 11~163 Da. For further reading, the dataset is presented
in detail and analyzed in a supervised classification framework in~\cite{Alexandrov09}. 

Following the experimental protocol of~\cite{Alexandrov09}, Fisher-EM
was applied on the 6~168 dimensions corresponding to m/z ratios between
960 and 3~500 Da since there is no discriminative information on
the reminder. Figure~\ref{fig:Maldi-1} shows the mean spectra of
the cancer and control classes estimated by Fisher-EM on the m/z interval
900--3500 Da. To be able to compare the clustering results of Fisher-EM,
PCA-EM and mixture of PPCA (Mixt-PPCA) have been applied to this subset
as well. It has been asked to all methods to cluster the dataset into
2 groups. It is important to remark that this clustering problem is
a $n\ll p$ problem and, among the model-based methods, only these
three methods are able to deal with it (see Section~\ref{par_n<<p}).

\subsection{Experimental results}

\begin{table}
\begin{centering}
{\scriptsize }\begin{tabular}{|lcc|}
\hline 
\multicolumn{3}{|c|}{{\scriptsize PCA-EM}}\tabularnewline
\hline
\hline 
 & \multicolumn{2}{c|}{\emph{\scriptsize Cluster}}\tabularnewline
\emph{\scriptsize Class} & {\scriptsize Cancer} & {\scriptsize Control}\tabularnewline
\hline 
{\scriptsize Cancer} & 48 & 16\tabularnewline
{\scriptsize Control} & 1 & 47\tabularnewline
\hline
\multicolumn{3}{|l|}{\emph{\scriptsize Misclassification rate = 0.15}}\tabularnewline
\hline
\end{tabular}{\scriptsize ~~~}\begin{tabular}{|lcc|}
\hline 
\multicolumn{3}{|c|}{{\scriptsize Mixt-PPCA}}\tabularnewline
\hline
\hline 
 & \multicolumn{2}{c|}{\emph{\scriptsize Cluster}}\tabularnewline
\emph{\scriptsize Class} & {\scriptsize Cancer} & {\scriptsize Control}\tabularnewline
\hline 
{\scriptsize Cancer} & 62 & 2\tabularnewline
{\scriptsize Control} & 10 & 38\tabularnewline
\hline
\multicolumn{3}{|l|}{\emph{\scriptsize Misclassification rate = 0.11}}\tabularnewline
\hline
\end{tabular}{\scriptsize ~~~}\begin{tabular}{|lcc|}
\hline 
\multicolumn{3}{|c|}{{\scriptsize Fisher-EM}}\tabularnewline
\hline
\hline 
 & \multicolumn{2}{c|}{\emph{\scriptsize Cluster}}\tabularnewline
\emph{\scriptsize Class} & {\scriptsize Cancer} & {\scriptsize Control}\tabularnewline
\hline 
{\scriptsize Cancer} & 57 & 7\tabularnewline
{\scriptsize Control} & 3 & 45\tabularnewline
\hline
\multicolumn{3}{|l|}{\emph{\scriptsize Misclassification rate = 0.09}}\tabularnewline
\hline
\end{tabular}
\par\end{centering}{\scriptsize \par}

\caption{\label{tab:Maldi}Confusion tables for PCA-EM (left), mixture of PPCA
(center) and Fisher-EM (right).}

\end{table}

Table~\ref{tab:Maldi} presents the confusion tables computed from
the clustering results of PCA-EM, mixture of PPCA and Fisher-EM. On
the one hand, PCA-EM has selected $d=4$ principal axes with the 90\%
variance rule before to cluster the data in this subspace and mixture
of PPCA has selected $d=2$ principal axes for each group. On the
other hand, Fisher-EM has estimated the discriminative latent subspace
with $d=K-1=1$ axis to cluster this high-dimensional dataset. It
first appears that PCA-EM and mixture of PPCA provide satisfying clustering
results on such a complex dataset. However, it is disappointing to
see that the PCA-EM make a significant number of false negatives (cancers
classified as non-cancers) since the classification risk is not symmetric
here. Conversely, mixture of PPCA and Fisher-EM provide a better clustering
results both from a global point of view (respectively 89\% and 91\%
of clustering accuracy) and from a medical point of view since Fisher-EM
makes significantly less false negatives with an acceptable number
of false positives.

More importantly, Fisher-EM provides information which can be interpreted
\emph{a posteriori} to better understand both the data and the phenomenon.
Indeed, the values of the estimated loading matrix $U$, which is
a $6\,168\times1$ matrix here, expressed the correlation between
the discriminative subspace and the original variables. It is therefore
possible to identify the original variables with the highest power
of discrimination. It is important to highlight that Fisher-EM extracts
this information from the data in a unsupervised framework. Figure~\ref{fig:Maldi-2}
shows the correlation between each original variable and the discriminative
subspace on an arbitrary scale. The peaks of this curve correspond
to the original variables which have a high correlation with the discriminative
axis estimated by Fisher-EM. 

Figure~\ref{fig:Maldi-3} plots the difference between the mean spectra
of the classes cancer and control (cancer - control) and indicates
as well, using red triangles, the most discriminative original variables
(m/z values). It is not surprising to see that original variables
where the cancer and control spectra have a big difference are among
the most discriminative. More surprisingly, Fisher-EM selects the
original variables with m/z values equal to 2800 and 3050 as discriminative
variables whereas the difference between cancer and control spectra
is less for these variables than the difference on the variable with
m/z value equal to 1350. Such information, which have extracted from
the data in a unsupervised framework, may help the practitioner to
understand the clustering results.

\begin{figure}
\begin{centering}
\includegraphics[bb=0bp 300bp 576bp 525bp,clip,width=0.8\columnwidth]{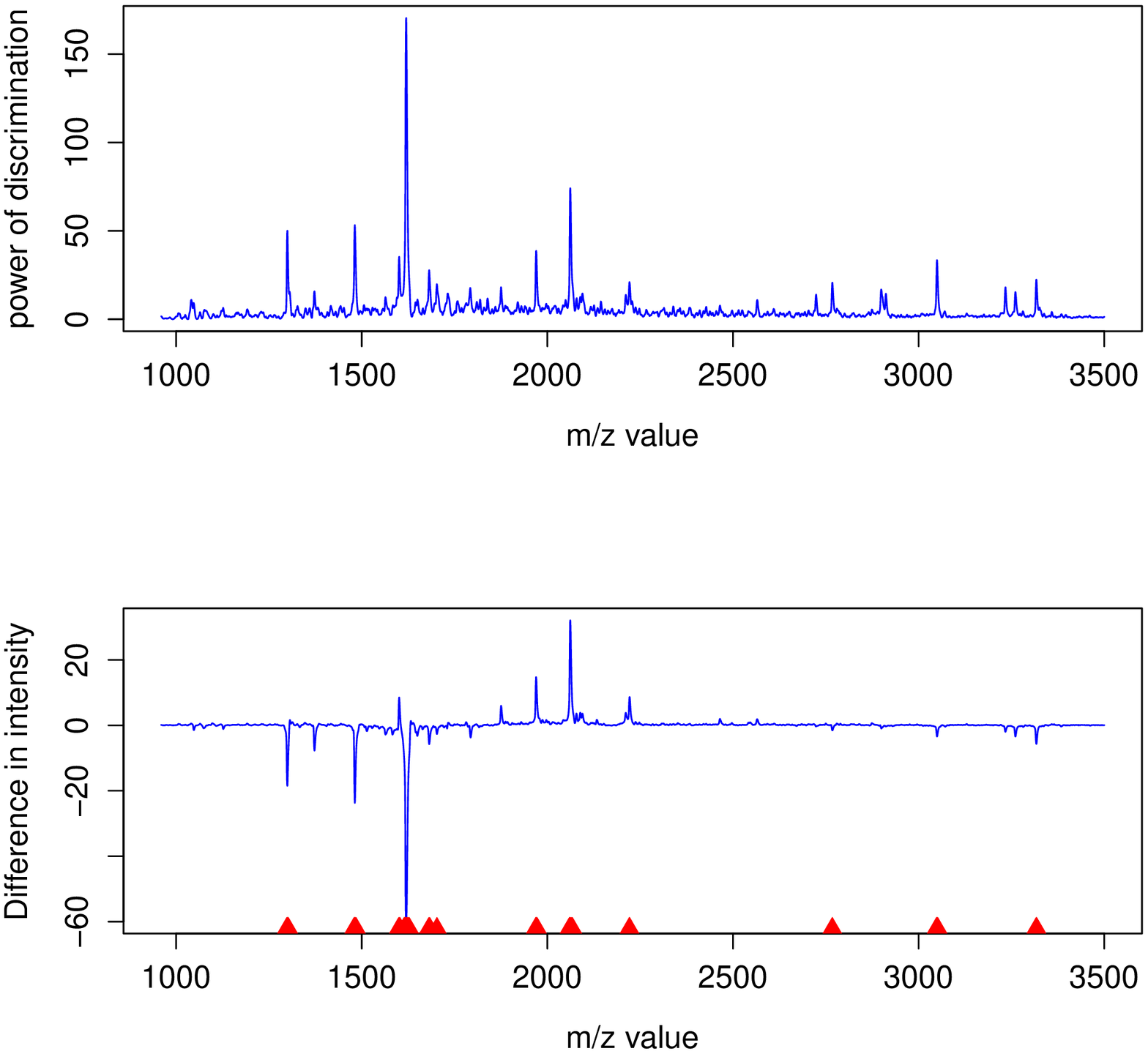}
\par\end{centering}

\caption{\label{fig:Maldi-2}Discrimination power of the original variables:
correlation between original variables and the discriminative subspace
on an arbitrary scale. }

\end{figure}

\begin{figure}
\begin{centering}
\includegraphics[bb=0bp 15bp 576bp 235bp,clip,width=0.8\columnwidth]{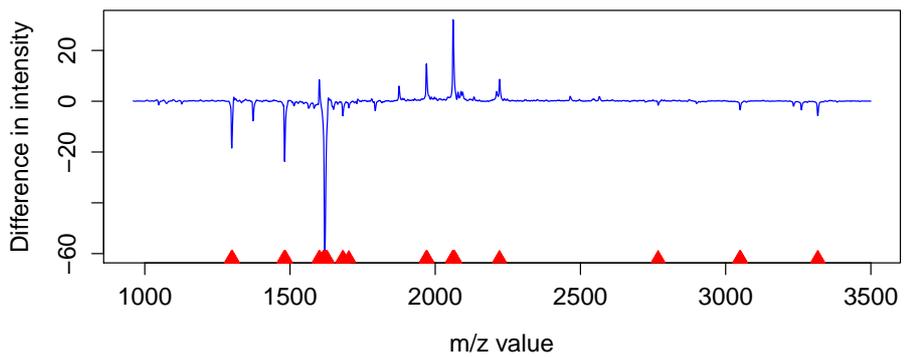}
\par\end{centering}

\caption{\label{fig:Maldi-3}Difference between the mean spectra of the classes
cancer and control (cancer - control) and most discriminative variables
(indicated by red triangles).}

\end{figure}

\section{Conclusion and further works\label{sec:Conclusion}}

This work has presented a discriminative latent mixture model which
models the data in a latent orthonormal discriminative subspace with
an intrinsic dimension lower than the dimension of the original space.
A family of 12 parsimonious DLM models has been exhibited by constraining
model parameters within and between groups. An estimation algorithm,
called the Fisher-EM algorithm, has been also proposed for estimating
both the mixture parameters and the latent discriminative subspace.
The determination procedure for the discriminative subspace adapts
the well-known Fisher criterion to the unsupervised classification
context under an orthonormality constraint. Furthermore, when the
number of groups is not too large, the estimated discriminative subspace
allows a useful projection of the clustered data. Experiments on simulated
and real datasets have shown that Fisher-EM performs better than existing
clustering methods. The Fisher-EM algorithm has been also applied
to the clustering of mass spectrometry data, which is a real-world
and complex application. In this specific context, Fisher-EM has shown
its ability to both efficiently cluster high-dimensional mass spectrometry
data and give a pertinent interpretation of the results.

However, the convergence of the Fisher-EM algorithm has been proved
in this work only for 2 of the DLM models and the convergence for
other models should be investigated. We feel that the convergence
could be proved for these models at least in a generalized EM context.
Among the other possible extensions of this work, it could be interesting
to find a way to visualize in 2D or 3D the clustered data when the
estimated discriminative subspace has more than 4 dimensions. Another
extension could be to consider a kernel version of Fisher-EM. For
this, it would be necessary to replace the Gram matrix introduced
in Section~\ref{par_n<<p} by a kernel. Finally, it could be also
interesting to introduce sparsity in the loading matrix through a
$\ell_{1}$ penalty in order to ease the interpretation of the discriminative
axes.

\section*{Acknowledgments}

The authors are indebted to the three referees and the editor for
their helpful comments and suggestions. They have contributed to greatly
improve this article.

\appendix

\section{Appendix}

In order not to surcharge the notations, the index $q$ of the current
iteration of the Fisher-EM algorithm is not indicated in the following
proofs. We also define the matrices $\tilde{W}$ and $\bar{W}$ such
that $W=\tilde{W}+\bar{W}$. The matrix $\tilde{W}$ is defined as
a $p\times p$ matrix containing the $d$ first vectors of $W$ completed
by zeros such as $\tilde{W}=[U,0_{p-d}]$ and $\bar{W}=W-\tilde{W}$
is defined by $\bar{W}=[0_{d},V]$.

\subsection{E step\label{a:costfunc}}

\begin{proof}[Proof of Proposition~\ref{prop:Estep}]The conditional
expectation $t_{ik}=E[P(z_{ik}|y_{i},\Theta)]$ can be viewed as well
as the posterior probability of the observation $y_{i}$ given a group
$k$ and, thanks to the Bayes' formula, can be written: \[
t_{ik}=\frac{\pi_{k}\phi(y_{i},\theta_{k})}{\sum_{l=1}^{K}\pi_{l}\phi(y_{i},\theta_{l})},\]
 where $\phi$ is the Gaussian density, and $\pi_{k}$ and $\theta_{k}$
are the parameters of the $k$th mixture component estimated in the
previous iteration. This posterior probability $t_{ik}$ can also
be formulated from the cost function $\Gamma_{k}$ such that: \[
t_{ik}=\frac{1}{\sum_{l=1}^{K}\exp\left(\frac{1}{2}(\Gamma_{k}(y_{i})-\Gamma_{l}(y_{i}))\right)},\]
 where $\Gamma_{k}(y_{i})=-2\log(\pi_{k}\phi(y_{i},\theta_{k}))$.
According to the assumptions of the model $\mathrm{DLM}_{[\Sigma_{k}\beta_{k}]}$
and given that $W=\tilde{W}+\bar{W}$, $\Gamma_{k}$ can be reformulated
as:\[
\begin{split}\Gamma_{k}(y_{i})= & \quad(y_{i}-m_{k})^{t}\tilde{W}\Delta_{k}^{-1}\tilde{W}^{t}(y_{i}-m_{k})+(y_{i}-m_{k})^{t}\bar{W}\Delta_{k}^{-1}\bar{W}^{t}(y_{i}-m_{k})\\
 & +\log(|\Delta_{k}|)-2\log(\pi_{k})+p~\log(2\pi),\end{split}
\]
 Moreover, since the relations $\tilde{W}(\tilde{W}^{t}\tilde{W})=\tilde{W}$
and $\bar{W}(\bar{W}^{t}\bar{W})=\bar{W}$ hold due to the construction
of $\tilde{W}$ and $\bar{W}$, then: \[
\begin{split}\Gamma_{k}(y_{i})= & \quad\left(\tilde{W}\tilde{W}^{t}(y_{i}-m_{k})\right)^{t}\tilde{W}\Delta_{k}^{-1}\tilde{W}^{t}\left(\tilde{W}\tilde{W}^{t}(y_{i}-m_{k})\right)\\
 & +\frac{1}{\beta_{k}}\left(\bar{W}\bar{W}^{t}(y_{i}-m_{k})\right)^{t}\left(\bar{W}\bar{W}^{t}(y_{i}-m_{k})\right)\\
 & +\log(|\Delta_{k}|)-2\log(\pi_{k})+p~\log(2\pi).\end{split}
\]
Let us now define $\vartheta_{k}=\tilde{W}\Delta_{k}^{-1}\tilde{W}^{t}$
and $||.||_{\vartheta_{k}}$, a norm on the latent space spanned by~$\tilde{W}$,
such that $||y||_{\vartheta_{k}}^{2}=y^{t}\vartheta_{k}y$. With these
notations, and according to the definition of $\Delta_{k}$, $\Gamma_{k}$
can be rewritten as: \[
\begin{split}\Gamma_{k}(y_{i})= & \quad||\tilde{W}\tilde{W}^{t}(y_{i}-m_{k})||_{\vartheta_{k}}^{2}+\frac{1}{\beta_{k}}||\bar{W}\bar{W}^{t}(y_{i}-m_{k})||^{2}\\
 & +\log(\left|\Sigma_{k}\right|)+(p-d)~\log(\beta_{k})-2\log(\pi_{k})+p~\log(2\pi).\end{split}
\]
Let us also define the projection operators $P$ and $P^{\perp}$
on the subspaces $\mathbb{E}$ and $\mathbb{E}^{\perp}$ respectively: 
\begin{itemize}
\item $P(y)=\tilde{W}\tilde{W}^{t}y$ is the projection of $y$ on the discriminative
space $\mathbb{E}$, 
\item $P^{\perp}(y)=\bar{W}\bar{W}^{t}y$ is the projection of $y$ on the
complementary space $\mathbb{E}^{\perp}$. 
\end{itemize}
Consequently, the cost function $\Gamma_{k}$ can be finally reformulated
as:\[
\begin{split}\Gamma_{k}(y_{i})= & \quad\left\Vert P(y_{i}-m_{k})\right\Vert {}_{\vartheta_{k}}^{2}+\frac{1}{\beta_{k}}\left\Vert P^{\perp}(y_{i}-m_{k})\right\Vert {}^{2}\\
 & +\log(\left|\Sigma_{k}\right|)+(p-d)~\log(\beta_{k})-2\log(\pi_{k})+p~\log(2\pi).\end{split}
\]
Since $P^{\perp}(y)=y-P(y)$, then the distance associated with the
complementary subspace can be rewritten as $||P^{\perp}(y_{i}-m_{k})||^{2}=||(y_{i}-m_{k})-P(y_{i}-m_{k})||^{2}$
and this allow to conclude. \end{proof}

\subsection{M step\label{a:estimation} }

\begin{proof}[Proof of Proposition~\ref{prop:Likelihood}]In the case
of the model $\mathrm{DLM}_{[\Sigma_{k}\beta_{k}]}$, at iteration
$q$, the conditional expectation of the complete log-likelihood $Q(y_{1},\dots,y_{n},\theta|\theta^{(q-1)})$
of the observed data $\{y_{1},\dots,y_{n}\}$ has the following form:

\begin{equation}
\begin{split}Q(\theta)= & \sum_{i=1}^{n}\sum_{k=1}^{K}t_{ik}\log(\pi_{k}\phi(y_{i},\theta_{k}))\\
= & \sum_{i=1}^{n}\sum_{k=1}^{K}t_{ik}\Bigl[-\frac{1}{2}\log(|S_{k}|)-\frac{1}{2}(y_{i}-m_{k})^{t}S_{k}^{-1}(y_{i}-m_{k})+\log(\pi_{k})-\frac{p}{2}~\log(2\pi)\Bigr],\end{split}
\label{eq:loglik_general}\end{equation}
 where $t_{ik}=E[z_{ik}|\theta^{(q-1)}]$. According to the definitions
of the diagonal matrix $\Delta_{k}$ and of the orientation matrix
$W$ for which $W^{-1}=W^{t}$, the inverse covariance matrix $S_{k}^{-1}$
of $Y$ can be written as $S_{k}^{-1}=(W\Delta_{k}W^{t})^{-1}=W^{-t}\Delta_{k}^{-1}W^{-1}=W\Delta_{k}^{-1}W^{t}$
and the determinant of $S_{k}$ can be also reformulated in the following
way: \[
|S_{k}|=|\Delta_{k}|=\left|\Sigma_{k}\right|\beta_{k}^{p-d}.\]
 Consequently, the complete log-likelihood $Q(\theta)$ can be rewritten
as: \begin{equation}
\begin{split}Q(\theta)= & -\frac{1}{2}\sum_{k=1}^{K}n_{k}\Bigl[-2\log(\pi_{k})+\log(\left|\Sigma_{k}\right|)+(p-d)~\log(\beta_{k})\\
 & +\frac{1}{n_{k}}\sum_{i=1}^{n}t_{ik}(y_{i}-m_{k})^{t}W\Delta_{k}^{-1}W^{t}(y_{i}-m_{k})+\gamma\Bigr].\end{split}
\label{e:loglik1}\end{equation}
where $n_{k}=\sum_{i=1}^{n}t_{ik}$ and $\gamma=p\log(2\pi)$ is a
constant term. At this point, two remarks can be done on the quantity
$\sum_{i=1}^{n}t_{ik}(y_{i}-m_{k})^{t}W\Delta_{k}^{-1}W^{t}(y_{i}-m_{k})$.
First, as this quantity is a scalar, it is equal to its trace. Secondly,
this quantity can be divided in two parts since $W=[U,V]$ and $W=\tilde{W}+\bar{W}$.
Then, the relation $W\Delta_{k}^{-1}W^{t}=\tilde{W}\Delta_{k}^{-1}\tilde{W}^{t}+\bar{W}\Delta_{k}^{-1}\bar{W}^{t}$
is stated and we can write: \[
\begin{array}{ccc}
(y_{i}-m_{k})^{t}W\Delta_{k}^{-1}W^{t}(y_{i}-m_{k}) & = & \mathrm{trace}\left((y_{i}-m_{k})^{t}\tilde{W}\Delta_{k}^{-1}\tilde{W}^{t}(y_{i}-m_{k})\right)\\
 &  & +\quad\mathrm{trace}\left((y_{i}-m_{k})^{t}\bar{W}\Delta_{k}^{-1}\bar{W}^{t}(y_{i}-m_{k})\right).\end{array}\]
Moreover, pointing out that $C_{k}=\frac{1}{n_{k}}\sum_{i=1}^{n}t_{ik}(y_{i}-m_{k})(y_{i}-m_{k})^{t}$
is the empirical covariance matrix the $k$th group, the previous
quantity can be rewritten as: \[
\frac{1}{n_{k}}\sum_{i=1}^{n}t_{ik}(y_{i}-m_{k})^{t}W\Delta_{k}^{-1}W^{t}(y_{i}-m_{k})=\mathrm{trace}(\Delta_{k}^{-1}\tilde{W}^{t}C_{k}\tilde{W})+\mathrm{trace}(\Delta_{k}^{-1}\bar{W}^{t}C_{k}\bar{W})\]
 and finally: \[
\frac{1}{n_{k}}\sum_{i=1}^{n}t_{ik}(y_{i}-m_{k})^{t}W\Delta_{k}^{-1}W^{t}(y_{i}-m_{k})=\quad\mathrm{trace}(\Sigma_{k}^{-1}U^{t}C_{k}U)\quad+\quad\sum_{j=1}^{p-d}\frac{v_{j}^{t}C_{k}v_{j}}{\beta_{k}},\]
where $v_{j}$, is the $j$th column vector of $V$. However, since
$\bar{W}=W-\tilde{W}$ and $W=[U,V]$, it is also possible to write:
\[
\begin{split}\frac{1}{\beta_{k}}\sum_{j=1}^{p-d}v_{j}^{t}C_{k}v_{j}= & \frac{1}{\beta_{k}}\left(\sum_{j=1}^{p}w_{j}^{t}C_{k}w_{j}-\sum_{j=1}^{d}u_{j}^{t}C_{k}u_{j}\right)\\
= & \frac{1}{\beta_{k}}\left(\sum_{j=1}^{p}\mathrm{trace}(w_{j}w_{j}^{t}C_{k})-\sum_{j=1}^{d}u_{j}^{t}C_{k}u_{j}\right)\\
= & \frac{1}{\beta_{k}}\Bigl[\mathrm{trace}(C_{k})-\sum_{j=1}^{d}u_{j}^{t}C_{k}u_{j}\Bigr].\end{split}
\]
Consequently, replacing this quantity in~(\ref{e:loglik1}) provides
the final expression of $Q(\theta)$. \end{proof}\medskip

\begin{proof}[Proof of Proposition~\ref{prop:MLestimates}]The maximization
of $Q(\theta)$ conduces for the DLM models to the following estimates.

\paragraph{Estimation of $\mathbf{\pi_{k}}$}

The prior probability $\pi_{k}$ of the group $k$ can be estimated
by maximizing $Q(\theta)$ with respect to the constraint $\sum_{k=1}^{K}\pi_{k}=1$
which is equivalent to maximize the Lagrange function: \[
L=Q(\theta)+\lambda\left(\sum_{k=1}^{K}\pi_{k}-1\right),\]
 where $\lambda$ is the Lagrange multiplier. Then, the partial derivative
of $L$ with respect to $\pi_{k}$ is $\frac{\partial L}{\partial\pi_{k}}=\frac{n_{k}}{\pi_{k}}+\lambda.$
Consequently: \[
\forall k=1,\dots,K\qquad\frac{\partial L}{\partial\pi_{k}}=0\Longleftrightarrow\frac{n_{k}}{\pi_{k}}+\lambda=0\Longleftrightarrow n_{k}+\lambda\pi_{k}=0,\]
 and: \[
\sum_{k=1}^{K}(n_{k}+\lambda\pi_{k})=n+\lambda=0\Longrightarrow\lambda=-n.\]
 Replacing $\lambda$ by its value in the partial derivative conduces
to an estimation of $\pi_{k}$ by: \[
\hat{\pi}_{k}=\frac{n_{k}}{n}.\]

\paragraph{Estimation of $\mathbf{\mu_{k}}$}

The mean $\mu_{k}$ of the $k$th group in the latent space can be
also estimated by maximizing the expectation of the complete log-likelihood
(equation \ref{eq:loglik_general}), which can be written in the following
way:\begin{equation}
Q(\theta)=\sum_{i=1}^{n}\sum_{k=1}^{K}t_{ik}\Bigl[-\frac{1}{2}\log(|S_{k}|)-\frac{1}{2}(y_{i}-U\mu_{k})^{t}S_{k}^{-1}(y_{i}-U\mu_{k})+\log(\pi_{k})-\frac{p}{2}\log(2\pi)\Bigr].\label{eq:loglik_mu_k}\end{equation}
 Consequently, the partial derivative of $Q$ with respect to $\mu_{k}$
is $\frac{\partial Q(\theta)}{\partial\mu_{k}}=-\frac{1}{2}\sum_{i=1}^{n}t_{ik}U^{t}(y_{i}-U\mu_{k}).$
Setting this quantity to $0$ gives: \[
\frac{\partial Q(\theta)}{\partial\mu_{k}}=0\Longleftrightarrow\sum_{i=1}^{n}t_{ik}U^{t}y_{i}=\sum_{i=1}^{n}t_{ik}\mu_{k}.\]
 and conduces to: \[
\hat{\mu}_{k}=\frac{1}{n_{k}}\sum_{i=1}^{n}t_{ik}U^{t}y_{i}.\]

\paragraph{Model $\mathrm{DLM}_{[\Sigma_{k}\beta_{k}]}$}

From Equation~(\ref{e:loglik2}), the partial derivative of $Q(\theta)$
with respect to $\Sigma_{k}$ has the following form:

\[
\frac{\partial Q(\theta)}{\partial\Sigma_{k}}=-\frac{n_{k}}{2}\frac{\partial}{\partial\Sigma_{k}}\left[\log(|\Sigma_{k}|)+\mathrm{trace}\left(\Sigma_{k}^{-1}U^{t}C_{k}U\right)\right].\]
Using the matrix derivative formula of the logarithm of a determinant,
$\frac{\partial\log(|A|)}{\partial A}=\left(A^{-1}\right)^{t}$, and
of the trace of a product, $\frac{\partial\mathrm{trace}(A^{-1}B)}{\partial A}=-\left(A^{-1}BA^{-1}\right)^{t}$,
the equality of $\frac{\partial Q(\theta)}{\partial\Sigma_{k}}$ to
the $d\times d$ zero matrix yields to the relation:\[
\Sigma_{k}^{-1}=\Sigma_{k}^{-1}U^{t}C_{k}U\Sigma_{k}^{-1},\]
and, by multiplying on the left and on the right by $\Sigma_{k},$
we find out the estimate of~$\Sigma_{k}$:\begin{equation}
\hat{\Sigma}_{k}=U^{t}C_{k}U.\label{eq:estim_Sigmak}\end{equation}
The estimation of $\beta_{k}$ is also obtained by maximizing $Q$
subject to $\beta_{k}$: \[
\frac{\partial Q(\theta)}{\beta_{k}}=0\Longleftrightarrow\frac{p-d}{\beta_{k}}-\frac{\mathrm{trace}(C_{k})}{\beta_{k}^{2}}+\frac{1}{\beta_{k}^{2}}\sum_{j=1}^{d}u_{j}^{t}C_{k}u_{j}=0,\]
and it is possible to conclude: \begin{equation}
\hat{\beta}_{k}=\frac{\mathrm{trace}(C_{k})-\sum_{j=1}^{d}u_{j}^{t}C_{k}u_{j}}{p-d}.\label{estim:Bk}\end{equation}

\paragraph{Model $\mathrm{DLM}_{[\Sigma_{k}\beta]}$ }

In this case, $Q$ has the following form:

\[
\begin{split}Q(\theta)= & -\frac{1}{2}\Bigl(\sum_{k=1}^{K}n_{k}\Bigl[-2\log(\pi_{k})+\mathrm{trace}(\Sigma_{k}^{-1}U^{t}C_{k}U)+\log(\left|\Sigma_{k}\right|)\Bigr]\\
 & +\sum_{k=1}^{K}n_{k}(p-d)~\log(\beta)+\sum_{k=1}^{K}\frac{n_{k}}{\beta}\Bigl[\mathrm{trace}(C_{k})-\sum_{j=1}^{d}u_{j}^{t}C_{k}u_{j}\Bigr]\Bigr),\\
= & -\frac{1}{2}\Bigl(\sum_{k=1}^{K}n_{k}\Bigl[-2\log(\pi_{k})+\mathrm{trace}(\Sigma_{k}^{-1}U^{t}C_{k}U)+\log(\left|\Sigma_{k}\right|)+\gamma\Bigr]\\
 & +n(p-d)~\log(\beta)+\frac{1}{\beta}\Bigl[n\mathrm{~trace}(C)-n\sum_{j=1}^{d}u_{j}^{t}Cu_{j}\Bigr]\Bigr),\end{split}
\]
where $C$ is the soft within covariance matrix of the whole dataset.
Setting to $0$ the partial derivative of $Q(\theta)$ conditionally
to $\beta$ implies $\frac{p-d}{\beta}-\frac{1}{\beta^{2}}\mathrm{trace}(C)+\frac{1}{\beta^{2}}\sum_{j=1}^{d}u_{j}^{t}Cu_{j}=0$
and this conduces to: \begin{equation}
\hat{\beta}=\frac{1}{p-d}\left(\mathrm{trace}(C)-\sum_{j=1}^{d}u_{j}^{t}Cu_{j}\right),\label{estim:B}\end{equation}
and the estimation of $\Sigma_{k}$ is given by Equation~(\ref{eq:estim_Sigmak}).

\paragraph{Model $\mathrm{DLM}_{[\Sigma\beta_{k}]}$}

The quantity $Q$ can be rewritten in this manner:

\[
\begin{split}Q(\theta)= & -\frac{1}{2}\Bigl(\sum_{k=1}^{K}n_{k}\Bigl[-2\log(\pi_{k})\Bigr]+n~\log(\left|\Sigma\right|)+n~\mathrm{trace}(\Sigma^{-1}U^{t}CU)\Bigr]\\
 & +\sum_{k=1}^{K}n_{k}\Bigl[(p-d)~\log(\beta_{k})+\frac{1}{\beta_{k}}\left(\mathrm{trace}(C_{k})-\sum_{j=1}^{d}u_{j}^{t}C_{k}u_{j}\right)+\gamma\Bigr]\Bigr),\end{split}
\]
then, the partial derivative of $Q(\theta)$ with respect to $\Sigma$
is: \[
\frac{\partial Q(\theta)}{\partial\Sigma}=-\frac{n}{2}\frac{\partial}{\partial\Sigma}\left[\log(|\Sigma|)+\mathrm{trace}\left(\Sigma^{-1}U^{t}CU\right)\right]\]
 and setting to $0$ provides the estimation of $\Sigma$:\begin{equation}
\hat{\Sigma}=U^{t}CU.\label{eq:estim_sigma}\end{equation}
Finally, the estimation of $\beta_{k}$ is provided by Equation~(\ref{estim:Bk}).

\paragraph{Model $\mathrm{DLM}_{[\Sigma\beta]}$}

The estimations of $\Sigma$ and $\beta$ have been already considered
above and are given by Equations~(\ref{eq:estim_sigma} and \ref{estim:B}).

\paragraph{Model $\mathrm{DLM}_{[\alpha_{kj}\beta_{k}]}$}

In this case, $Q$ has the following form: \[
Q(\theta)=-\frac{1}{2}\sum_{k=1}^{K}n_{k}\Bigl[-2\log(\pi_{k})+\sum_{j=1}^{d}\left(\log(\alpha_{kj})+\frac{u_{j}^{t}C_{k}u_{j}}{\alpha_{jk}}\right)+(p-d)~\log(\beta_{k})+\frac{1}{\beta_{k}}\sum_{j=d+1}^{p}v_{j}^{t}C_{k}v_{j}+\gamma\Bigr].\]
The partial derivative of $Q$ with respect to $\alpha_{kj}$ is $\frac{\partial Q(\theta)}{\partial\alpha_{kj}}=-\frac{1}{2n_{k}}\left(\frac{1}{\alpha_{kj}}-\frac{u_{j}^{t}C_{k}u_{j}}{\alpha_{kj}^{2}}\right)$
and setting to $0$ provides the estimate of $\alpha_{kj}$: \begin{equation}
\hat{\alpha}_{kj}=u_{j}^{t}C_{k}u_{j}.\label{estim:Akj}\end{equation}
The estimation of $\beta_{k}$ is provided by Equation~(\ref{estim:Bk}).

\paragraph{Model $\mathrm{DLM}_{[\alpha_{kj}\beta]}$}

The estimations of $\alpha_{kj}$ and $\beta$ have been already considered
above and are given by Equations~(\ref{estim:Akj} and \ref{estim:B}).

\paragraph{Model $\mathrm{DLM}_{[\alpha_{k}\beta_{k}]}$}

For this model, the expectation of the complete log-likelihood $Q(\theta)$
has the following form: \[
\begin{split}Q(\theta)= & -\frac{1}{2}\sum_{k=1}^{K}n_{k}\Bigl[-2\log(\pi_{k})+\sum_{j=1}^{d}\left(\log(\alpha_{k})+\frac{u_{j}^{t}C_{k}u_{j}}{\alpha_{k}}\right)\\
 & +(p-d)~\log(\beta_{k})+\frac{1}{\beta_{k}}\sum_{j=1}^{p-d}v_{j}^{t}C_{k}v_{j}+\gamma\Bigr],\\
Q(\theta)= & -\frac{1}{2}\sum_{k=1}^{K}n_{k}\Bigl[-2\log(\pi_{k})+d~\log(\alpha_{k})+\frac{1}{\alpha_{k}}\sum_{j=1}^{d}u_{j}^{t}C_{k}u_{j}\\
 & +(p-d)~\log(\beta_{k})+\frac{1}{\beta_{k}}\sum_{j=1}^{p-d}v_{j}^{t}C_{k}v_{j}+\gamma\Bigr].\end{split}
\]
 The partial derivative of $Q(\theta)$ with respect to $\alpha_{k}$
is $\frac{\partial Q(\theta)}{\partial\alpha_{k}}=-\frac{1}{2n_{k}}\left(\frac{d}{\alpha_{k}}-\frac{1}{\alpha_{k}^{2}}\sum_{j=1}^{d}u_{j}^{t}C_{k}u_{j}\right),$
and setting this quantity to $0$, provides: \begin{equation}
\hat{\alpha}_{k}=\frac{1}{d}\sum_{j=1}^{d}u_{j}^{t}C_{k}u_{j}.\label{estim:Ak}\end{equation}
On the other hand, the estimation of $\beta_{k}$ is the same as in
Equation~(\ref{estim:Bk}).

\paragraph{Model $\mathrm{DLM}_{[\alpha_{k}\beta]}$}

The estimations of $\alpha_{k}$ and $\beta$ are respectively provided
by Equations~(\ref{estim:Ak}) and~(\ref{estim:B}).

\paragraph{Model $\mathrm{DLM}_{[\alpha_{j}\beta_{k}]}$}

In this case, $Q(\theta)$ has the following form: \[
\begin{split}Q(\theta)= & -\frac{1}{2}\sum_{k=1}^{K}n_{k}\Bigl(-2\log(\pi_{k})+\sum_{j=1}^{d}\left(\log(\alpha_{j})+\frac{u_{j}^{t}C_{k}u_{j}}{\alpha_{j}}\right)\\
 & +(p-d)\log(\beta_{k})+\frac{1}{\beta_{k}}\sum_{j=1}^{p-d}v_{j}^{t}C_{k}v_{j}+\gamma\Bigr),\\
Q(\theta)= & -\frac{1}{2}\Bigl(\sum_{k=1}^{K}n_{k}\Bigl[-2\log(\pi_{k})\Bigr]+n\sum_{j=1}^{d}\log(\alpha_{j})+n\sum_{j=1}^{d}\frac{u_{j}^{t}Cu_{j}}{\alpha_{j}}\\
 & +\sum_{k=1}^{K}n_{k}\Bigl[(p-d)~\log(\beta_{k})+\frac{1}{\beta_{k}}\sum_{j=1}^{p-d}v_{j}^{t}C_{k}v_{j}+\gamma\Bigr]\Bigr).\end{split}
\]
 The partial derivative of $Q(\theta)$ with respect to $\alpha_{j}$
is $\frac{\partial Q(\theta)}{\partial\alpha_{j}}=-\frac{n}{2}\left(\frac{1}{\alpha_{j}}-\frac{1}{\alpha_{j}^{2}}u_{j}^{t}Cu_{j}\right)$
and setting to $0$ implies: \begin{equation}
\hat{\alpha}_{j}=u_{j}^{t}Cu_{j},.\label{estim:Aj}\end{equation}
and the estimation of $\beta_{k}$ is the same as in Equation~(\ref{estim:Bk}).

\paragraph{Model $\mathrm{DLM}_{[\alpha_{j}\beta]}$}

The estimations of $\alpha_{j}$ and $\beta$ are respectively provided
by Equations~(\ref{estim:Aj}) and~(\ref{estim:B}).

\paragraph{Model $\mathrm{DLM}_{[\alpha\beta_{k}]}$}

In this case, $Q(\theta)$ has the following form:\[
\begin{split}Q(\theta)= & -\frac{1}{2}\sum_{k=1}^{K}n_{k}\Bigl(-2\log(\pi_{k})+d~\log(\alpha)+\frac{1}{\alpha}\sum_{j=1}^{d}u_{j}^{t}C_{k}u_{j}+\\
 & +(p-d)\log(\beta_{k})+\frac{1}{\beta_{k}}\sum_{j=1}^{p-d}v_{j}^{t}C_{k}v_{j}+\gamma\Bigr),\\
Q(\theta)= & -\frac{1}{2}\Bigl(\sum_{k=1}^{K}n_{k}[-2\log(\pi_{k})]+n~d~\log(\alpha)+\frac{n}{\alpha}\sum_{j=1}^{d}u_{j}^{t}Cu_{j}\\
 & +\sum_{k=1}^{K}n_{k}\Bigl[(p-d)\log(\beta_{k})+\frac{1}{\beta_{k}}\sum_{j=1}^{p-d}v_{j}^{t}C_{k}v_{j}+\gamma\Bigr]\Bigr),\end{split}
\]
 The partial derivative of $Q(\theta)$ with respect to $\alpha$
is $\frac{\partial Q(\theta)}{\partial\alpha}=-\frac{n}{2}\left(\frac{d}{\alpha}-\frac{1}{\alpha^{2}}\sum_{j=1}^{d}u_{j}^{t}Cu_{j}\right),$
and setting this quantity to $0$, we end up with: \begin{equation}
\hat{\alpha}=\frac{1}{d}\sum_{j=1}^{d}u_{j}^{t}Cu_{j}.\label{estim:A}\end{equation}
 The estimation of $\beta_{k}$ is the same as in Equation~(\ref{estim:Bk}).

\paragraph{Model $\mathrm{DLM}_{[\alpha\beta]}$}

The estimations of $\alpha$ and $\beta$ have been already computed
and are provided by Equations~(\ref{estim:A}) and~(\ref{estim:B}).\end{proof}

\bibliographystyle{plain}
\bibliography{bibliFEM}

\end{document}